\theoremstyle{definition}
\newtheorem{theorem}{Theorem} 
\theoremstyle{definition}
\newtheorem{lemma}{Lemma} 
\theoremstyle{definition}
\theoremstyle{definition}
\theoremstyle{definition}
\newtheorem{definition}{Definition}
\theoremstyle{definition}
\begin{document}

\title{An Accelerated Mixed Weighted-Unweighted MMSE Approach for MU-MIMO Beamforming}

\author{Xi Gao, Akang Wang, Junkai Zhang~\IEEEmembership{Member,~IEEE}, Qihong Duan, Jiang Xue~\IEEEmembership{Senior Member,~IEEE}
\thanks{Xi Gao, Junkai Zhang, Qihong Duan and Jiang Xue are with School of Mathematics and Statistics, Xi'an Jiaotong University, Xi'an, China (e-mail: gaoxi@stu.xjtu.edu.cn; jk.zhang@xjtu.edu.cn; duanqihong@xjtu.edu.cn; x.jiang@xjtu.edu.cn; corresponding authors: Jiang Xue and Akang Wang).}
\thanks{Akang Wang is with Shenzhen International Center for Industrial and Applied Mathematics, Shenzhen Research Institute of Big Data, China and also with School of Data Science, The Chinese University of Hong Kong, Shenzhen, China (e-mail: wangakang@sribd.cn).}
}

\markboth{Journal of \LaTeX\ Class Files,~Vol.~14, No.~8, August~2021}%
{Shell \MakeLowercase{\textit{et al.}}: A Sample Article Using IEEEtran.cls for IEEE Journals}

\IEEEpubid{0000--0000/00\$00.00~\copyright~2021 IEEE}

\maketitle

\begin{abstract}
Precoding design based on weighted sum-rate (WSR) maximization is a fundamental problem in downlink multi-cell multi-user multiple-input multiple-output systems. 
While the weighted minimum mean-square error (WMMSE) algorithm is a standard solution, its high computational complexity—cubic in the number of base station antennas due to matrix inversions—hinders its application in latency-sensitive scenarios. 
To address this limitation, we propose a highly parallel algorithm based on a hybrid block coordinate descent framework. 
Our key innovation lies in updating the precoding matrix via block coordinate gradient descent, which avoids matrix inversions and relies solely on matrix multiplications, while the remaining variables are updated using block coordinate minimization. 
Furthermore, we introduce a two-stage warm-start strategy grounded in the sum mean-square error (MSE) minimization problem to accelerate convergence. 
We prove that the proposed algorithm converges to a stationary point of the WSR maximization problem. 
We refer to our method as the Accelerated Mixed Weighted-Unweighted Sum-MSE Minimization (A-MMMSE) algorithm. 
Simulation results demonstrate that A-MMMSE matches the WSR performance of both conventional WMMSE and its enhanced variant, reduced-WMMSE, while achieving a substantial reduction in computational time across diverse system configurations.

\end{abstract}

\begin{IEEEkeywords}
MU-MIMO, downlink precoding, first-order methods, parallelism, warm-start.
\end{IEEEkeywords}

\section{Introduction}

\IEEEPARstart{M}{ulti-user} multiple-input multiple-output (MU-MIMO) systems serve as a foundational technology in both fifth-generation and emerging next-generation wireless networks~\cite{marzetta2016fundamentals, de2022overview}. 
A central challenge in MU-MIMO downlink systems lies in the design of transmit precoders, which aims primarily to mitigate inter-user interference and improve spectral efficiency. 
To achieve these objectives, prevailing strategies include sum mean-square error (MSE) minimization~\cite{hunger2006alternating,shi2007downlink} and weighted sum-rate~(WSR) maximization~\cite{shi2011iteratively, zhao2023rethinking, shen2018fractional1, shen2018fractional2} under a sum power constraint (SPC). 
Among them, WSR maximization offers a more direct pathway to achieving high spectral efficiency.
Nevertheless, the problem is inherently non-convex and $\mathcal{NP}$-hard~\cite{liu2010coordinated}, which motivates the pursuit of efficient computational algorithms.

Although obtaining a globally optimal solution for the WSR maximization problem is desirable, the exponential complexity of global optimization methods~\cite{joshi2011weighted, liu2012achieving} makes suboptimal precoding strategies more practical for MU-MIMO systems. 
These include zero-forcing (ZF)~\cite{gao2011linear, nguyen2019multi}, successive convex approximation (SCA)~\cite{venturino2008successive, kaleva2012weighted}, and the weighted minimum mean-square error (WMMSE) algorithm~\cite{christensen2008weighted, shi2011iteratively}. 
Among them, ZF-based approaches are recognized as computationally efficient but deliver relatively lower sum rates. 
In contrast, SCA-based techniques, which rely on constructing a sequence of surrogate functions, incur higher computational load and exhibit slower convergence. 
The WMMSE algorithm generally strikes a balance, achieving higher sum rates than ZF and faster convergence than SCA by leveraging simple closed-form updates.

The WMMSE algorithm solves the WSR maximization problem by reformulating it into an equivalent \textit{block multi-convex} problem, which is convex in each block of variables when the others are fixed~\cite{xu2013block}.
Using the \textit{block coordinate descent}~(BCD) method~\cite{beck2013convergence,xu2013block}, this approach is guaranteed to converge to a stationary point of the original WSR maximization problem. 
Another related methodology is \textit{fractional programming}. 
This technique reformulates the non-convex WSR maximization into a block multi-convex form via \textit{quadratic transformation} (QT)~\cite{shen2018fractional1} or \textit{Lagrangian dual transformation}~\cite{shen2018fractional2}, thereby enabling the application of BCD for efficient solution.
It has been shown that WMMSE can be derived as a special case of the latter.
However, these methods exhibit rapidly increasing computational complexity in MU-MIMO systems as the number of base station~(BS) antennas grows. 
{
Specifically, the precoding update requires a matrix inversion with $\mathcal{O}(M^3)$ complexity, where $M$ denotes the number of BS antennas. 
The bisection search introduced to ensure optimality and feasibility further increases the runtime of WMMSE, as the precoding matrix must be repeatedly updated within the search loop.
}

\IEEEpubidadjcol

\begin{table*}[h!]
	\centering
	\caption{{Comparison of existing gradient-based beamforming methods.}}
	\label{tab:frt_method_comp}
        \resizebox{0.9\linewidth}{!}{
		\begin{tabular}{l|cccccccc}
		\toprule
		{Methods} & {Applicable Scenarios} & {$\mathbf{U}$} {Inversion} & {$\mathbf{W}$ Inversion} & {$\mathbf{V}$ Inversion}  \\
        \midrule
		{PGD-WMMSE~\cite{pellaco2021matrix}} & {MU-MISO} & {-} & {-} & {\ding{55}} \\
        {Matrix-Inverse-Free WMMSE~\cite{pellaco2023matrix}} & {MU-MIMO} & {\ding{55}} & {\ding{55}} & {\ding{55}}  \\
        {Nonhomogeneous Quadratic Transform (NQT)~\cite{shen2024accelerating}} & {Multi-Cell MU-MIMO} &  {\checkmark}  & {-} & {\ding{55}}  \\
        {Nonhomogeneous Fractional Programming~\cite{chen2025fast}} & {Multi-Cell ISAC} & {\checkmark}  &  {-} & {\ding{55}}  \\
        {\textbf{A-MMMSE}} & {Multi-Cell MU-MIMO} &{\checkmark}  & {\ding{55} $\to$ \checkmark}  & {\ding{55}}\\
		\bottomrule
	\end{tabular}}
\end{table*}

To address the aforementioned challenges, an \textit{algorithm unrolling} framework known as matrix-inverse-free WMMSE was introduced in \cite{pellaco2021matrix, pellaco2023matrix}. 
The core idea is to eliminate matrix inversions using first-order methods to improve algorithmic parallelism. 
The work in \cite{pellaco2021matrix} first applied multi-step \textit{projected gradient descent}~(PGD) to replace the matrix inversion in the precoder update for multi-user multiple-input single-output systems (MU-MISO). 
This direction was further extended in \cite{pellaco2023matrix}, which incorporated the Schulz iteration~\cite{schulz1933iterative} to avoid all matrix inversions, thereby generalizing the matrix-inverse-free WMMSE approach to MU-MIMO scenarios. 
While these unrolling methods achieved better empirical performances, they often suffer from limited generalization capability.
Within MU-MIMO systems, the matrix inversion during the precoder update remains the most computationally expensive operation in WMMSE, while inversion operations for auxiliary variables, with the complexity proportional to the number of user antennas, impose only negligible computational overhead.
To avoid the precoder matrix inversion operations, the authors of~\cite{zhao2023rethinking} demonstrated that any non-trivial solution to WSR maximization must satisfy the SPC with equality. 
This key finding reveals an inherent low-dimensional subspace structure within the optimal precoder. 
Leveraging this structure, they proposed a reduced-complexity algorithm named reduced-WMMSE~(R-WMMSE).
{
Subsequent work~\cite{zhao2025universal} extended R-WMMSE from the single-cell WSR problem to a general beamforming optimization problem in multi-cell systems. 
While this structural insight reduces the computational complexity to linear in the number of BS antennas, it introduces cubic complexity in the number of users. 
In particular, the computational complexity for multi-cell scenarios becomes cubic in the total number of users across all cells.
} 
This cubic dependence adversely impacts the efficiency of R-WMMSE and presents a scalability limitation in large-scale user scenarios.
For QT~\cite{shen2018fractional1}, the authors of~\cite{shen2024accelerating} proposed to optimize the non-homogeneous bound of the quadratic term in the transformed objective function, thereby circumventing the need for matrix inversion.
Subsequently, this method was extended to multi-cell beamforming design for integrated sensing and communications (ISAC)~\cite{chen2025fast}. 
They also incorporated Nesterov's extrapolation to accelerate convergence, which has been empirically shown to improve the performance of WMMSE when applied specifically to the precoder update step~\cite{zhou2023novel}. 
However, this algorithm is only applicable to scenarios with a single data stream.

Concurrently, an emerging trend focuses on the development of first-order optimization methods, capitalizing on their inherent parallelism and distributed computing capabilities.
These methods rely solely on gradient information and offer notable computational advantages over second-order approaches.
Algorithms such as the primal–dual hybrid gradient method~\cite{he2014convergence} and the alternating direction method of multipliers~\cite{boyd2011distributed} have successfully achieved accuracy comparable to conventional solvers in domains including linear programming~\cite{applegate2021practical}, convex quadratic programming~\cite{schubiger2020gpu, lu2025practical}, and conic programming~\cite{lin2025pdcs}.
In parallel, \textit{warm-start} strategies remain a fundamental acceleration technique, in which solutions from previously solved, closely related problems are used to initialize new optimization instances.
This practice often leads to significant reductions in the number of iterations required for convergence. 
In contrast, when no prior information is available, the process starts from a \textit{cold-start} initialization.
Warm-starting is widely adopted in sequential programming~\cite{morelli2024warm} and across algorithms like gradient descent~\cite{tao2023laws, sambharya2024learning} and interior-point methods~\cite{john2008implementation, benson2008interior, gao2024ipm}.
Nevertheless, most existing methods for WSR maximization still rely on cold-start initialization, overlooking potential gains from systematic initialization.

Inspired by these advances, we revisit the precoding update in WMMSE from a BCD perspective. 
{We propose to leverage first-order techniques to reduce the computational cost of solving this subproblem and to accelerate convergence, while still ensuring convergence to a stationary point of the original WSR problem.} 
Furthermore, we introduce a warm-start strategy based on sum-MSE minimization to accelerate the overall convergence of WMMSE.
The proposed method is termed \textit{accelerated mixed weighted-unweighted sum-MSE minimization} (A-MMMSE).
{The key differences between our approach and other first-order method-based precoding algorithms are summarized in Table~\ref{tab:frt_method_comp}.}
The main contributions of this work are summarized as follows:
\begin{itemize}
    \item We introduce a hybrid BCD strategy for WMMSE wherein the computationally expensive precoding matrices are optimized using \textit{block coordinate gradient descent} (BCGD), while low-complexity block variables are updated via \textit{block coordinate minimization} (BCM). An extrapolation technique is further incorporated to accelerate convergence.

    \item We develop a two-stage warm-start strategy that initially solves a sum-MSE minimization problem and then switches to the weighted sum-MSE formulation once the relative change in the WSR falls below a predefined threshold. 
    This approach can be employed to accelerate any WMMSE-based method.

    \item We establish the convergence of A-MMMSE, proving that this method guarantees convergence to a stationary point of the WSR maximization problem.

    \item {Simulation results confirm that A-MMMSE achieves almost the same final WSR performance as WMMSE and R-WMMSE. 
    In large-scale BS antenna scenarios, A-MMMSE achieves a CPU runtime comparable to R-WMMSE and significantly faster than WMMSE. 
    R-MMMSE, which augments R-WMMSE with the mixed weighted-unweighted warm-start method, achieves the lowest CPU runtime and is 40.8\% faster than R-WMMSE. 
    In large-scale user scenarios, A-MMMSE reduces CPU runtime by up to 6× and 5× compared to WMMSE and R-WMMSE, respectively. 
    When deployed on GPUs, A-MMMSE leverages its parallelism and outperforms R-WMMSE in large-scale BS antenna scenarios.}
\end{itemize}

The remainder of this paper is organized as follows. Section~\ref{sec:sys_mod} formulates the multi-cell MU-MIMO system model and the WSR maximization problem under the SPC. 
Section~\ref{sec:bcd_method} introduces the conventional BCD framework. 
Section~\ref{sec:wmmse} reviews the classical WMMSE algorithm. 
Section~\ref{sec:AMMSE_approach} presents the proposed hybrid BCD scheme and the mixed weighted-unweighted warm-start strategy. 
Section~\ref{sec:converg} provides a convergence analysis. 
Section~\ref{sec:experiments} validates the effectiveness of the proposed approach through numerical experiments. 
Finally, Section~\ref{sec:concl} concludes the paper.

\textbf{Notation}: Scalars, vectors, and matrices are denoted by lowercase, boldface lowercase, and boldface uppercase letters, respectively. 
The space of $ M \times N $ complex matrices is denoted by $ \mathbb{C}^{M \times N} $. 
For a matrix $ \mathbf{A} $, $ \mathbf{A}^{H} $, $ \mathbf{A}^{-1} $, $\operatorname{Tr}(\mathbf{A}) $, $\sigma_i(\mathbf{A})$ and $ \lambda_i(\mathbf{A}) $ represent its Hermitian transpose, inverse, trace, $i$-th singular value, and $i$-th eigenvalue, respectively. 
The spectral and Frobenius norms are denoted by $\|\mathbf{A}\|_2$ and $\|\mathbf{A}\|_F$. 
Finally, $ \mathcal{CN}(\mathbf{0}, \mathbf{\Sigma}) $ denotes a circularly symmetric complex Gaussian distribution with zero mean and covariance matrix $ \mathbf{\Sigma}$.

\section{System Model and Problem Formulation}
\label{sec:sys_mod}

\begin{figure*}
    \centering
\includegraphics[width=0.9\linewidth]{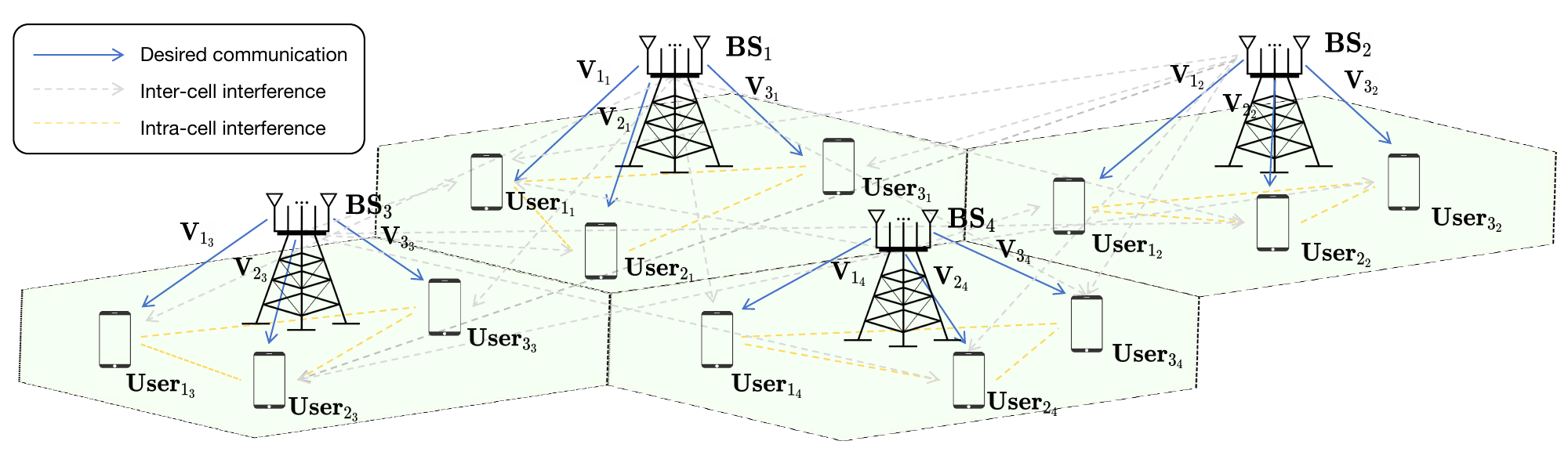}
    \caption{{An illustration of downlink multi-cell MU-MIMO precoding for WSR maximization.}}
    \label{fig:sys_model}
\end{figure*}

{
We consider a downlink MU-MIMO system consisting of $K$ cells, as illustrated in Fig.~\ref{fig:sys_model}. 
In the $k$-th cell, the BS is equipped with $M$ transmit antennas and serves $I$ users. 
Let $i_k$ denote the $i$-th user in the $k$-th cell. 
Each user is equipped with $N$ receive antennas and is capable of receiving $d$ independent data streams simultaneously.
Let $\mathbf{s}_{i_k} \in \mathbb{C}^{d \times 1}$ denote the symbol vector intended for user $i_k$, and let $\mathbf{V}_{i_k} \in \mathbb{C}^{M \times d}$ represent the corresponding linear precoder at the BS. 
The received signal at user $i_k$ is then given by:
\begin{equation*}
\mathbf{y}_{i_k} = \mathbf{H}_{i_k k} \mathbf{V}_{i_k} \mathbf{s}_{i_k} + \sum_{(l,j) \neq (i,k)} \mathbf{H}_{i_k j} \mathbf{V}_{lj} \mathbf{s}_{lj} + \mathbf{n}_{i_k}, \quad \forall i_k \in \mathcal{I},
\end{equation*}
where $\mathbf{H}_{i_k j} \in \mathbb{C}^{N \times M}$ denotes the channel matrix from the $j$-th BS to user $i_k$, and $\mathbf{n}_{i_k} \in \mathbb{C}^{N \times 1}$ represents the additive white Gaussian noise vector distributed as $\mathcal{CN}(\mathbf{0}, \sigma_{i_k}^2 \mathbf{I})$. 
The set $\mathcal{I}$ collects all users across the entire system and is defined as:
\begin{equation*}
\mathcal{I} = \left\{ i_k \mid k \in \{1, 2, \dots, K\}, \; i \in \{1, 2, \dots, I\} \right\}.
\end{equation*}
}
{
A fundamental challenge lies in designing the optimal set of precoders $\mathbf{V} \triangleq \{\mathbf{V}_{i_k}\}_{i_k \in \mathcal{I}}$ that maximizes the WSR subject to the SPCs. 
This problem is formulated as follows:
\begin{subequations}
\label{prob:ori_prob}
\begin{align}
    \max _{\mathbf{V}} & \sum_{k=1}^{K} \sum_{i=1}^{I} \alpha_{i_k} R_{i_k} \label{prob:ori_prob_obj} \\
    \text{s.t.} & \sum_{i=1}^{I} \operatorname{Tr}\left(\mathbf{V}_{i_k} \mathbf{V}_{i_k}^{H}\right) \leq P_{k}, \quad \forall k = 1, 2, \dots, K, \label{prob:ori_prob_constr}
\end{align}
\end{subequations}
where \(\alpha_{i_k}\) denotes the priority weight of user \(i_k\), and
\begin{equation*}
\begin{aligned}
R_{i_k} \triangleq \log \operatorname{det} & \left( \mathbf{I} + \mathbf{H}_{i_k k} \mathbf{V}_{i_k} \mathbf{V}_{i_k}^{H} \mathbf{H}_{i_k k}^{H} \right. \\
& \left. \left( \sum_{(l,j) \neq (i,k)} \mathbf{H}_{i_k j} \mathbf{V}_{l_j} \mathbf{V}_{l_j}^{H} \mathbf{H}_{i_k j}^{H} + \sigma_{i_k}^{2} \mathbf{I} \right)^{-1} \right)
\end{aligned}
\end{equation*}
represents the achievable rate of user \(i_k\). Here, \(P_{k}\) denotes the total transmit power budget at the \(k\)-th BS. For notational simplicity, we refer to the \(k\)-th SPC in \eqref{prob:ori_prob_constr} as \(\mathcal{P}_{k}\).
}

\IEEEpubidadjcol

\section{BCD for Block Multi-Convex Optimization}
\label{sec:bcd_method}

Problem~\eqref{prob:ori_prob} is non-convex with respect to the precoder $\mathbf{V}$. 
A standard approach to tackle this issue involves reformulating the problem into a block multi-convex structure~\cite{xu2013block}, typically achieved through either equivalent transformations~\cite{shen2018fractional1,shen2018fractional2} or the introduction of auxiliary variables~\cite{shi2011iteratively}.
While the reformulated problem may remain jointly non-convex in all variables, its convexity with respect to each individual block variable makes it amenable to the BCD framework. 
Within this framework, each subproblem can be solved efficiently and optimally. 

To illustrate the BCD framework, we consider the following problem:
\begin{equation}
\label{prob:block_prob}
\min_{\mathbf{X} \in \mathcal{X}} f\left(\mathbf{X}_{1}, \ldots, \mathbf{X}_{s}\right)
\end{equation}
where $\mathbf{X} \triangleq \{\mathbf{X}_1, \ldots, \mathbf{X}_s\}$ is partitioned into $s$ blocks, $\mathcal{X}$ is a closed block multi-convex set, and $f$ is a differentiable block multi-convex function. 
Both $\mathcal{X}$ and $f$ may be nonconvex jointly over $\mathbf{X}$.

A set $\mathcal{X}$ is block multi-convex if its projection onto each block is convex. 
Formally, for each $i$ and fixed $\mathbf{X}_1, \ldots, \mathbf{X}_{i-1}, \mathbf{X}_{i+1}, \ldots, \mathbf{X}_s$, the set
\begin{equation*}
\begin{aligned}
&\mathcal{X}_{i}\left(\mathbf{X}_{1}, \ldots, \mathbf{X}_{i-1}, \mathbf{X}_{i+1}, \ldots, \mathbf{X}_{s}\right) \\
\triangleq &
\left\{\mathbf{X}_{i} \in \mathbb{C}^{m_i\times n_{i}}:\left(\mathbf{X}_{1}, \ldots, \mathbf{X}_{i-1}, \mathbf{X}_{i}, \mathbf{X}_{i+1}, \ldots, \mathbf{X}_{s}\right) \in \mathcal{X}\right\}
\end{aligned}
\end{equation*}
is convex. 
Similarly, $f$ is block multi-convex if it is convex in each block variable $\mathbf{X}_i$ with all other blocks held fixed. 
Thus, fixing all but one block reduces problem~\eqref{prob:block_prob} to a convex subproblem.

We focus on the Gauss-Seidel BCD variant, which cyclically minimizes $f$ over each block while keeping others fixed at their latest values. 
Let $\mathbf{X}_i^t$ denote $\mathbf{X}_i$ after its $t$-th update, and define:
\begin{equation*}
\begin{aligned}
    f_{i}^{t}\left(\mathbf{X}_{i}\right) \triangleq f\left( \right.& \mathbf{X}_{1}^{t}, \ldots, \mathbf{X}_{i-1}^{t}, \mathbf{X}_{i},  
     \left.\mathbf{X}_{i+1}^{t-1}, \ldots, \mathbf{X}_{s}^{t-1}\right),
\end{aligned}
\end{equation*}
\begin{equation*}
\begin{aligned}
   \mathcal{X}_{i}^{t}\triangleq 
\mathcal{X}_{i}\left(\mathbf{X}_{1}^{t}, \ldots, \mathbf{X}_{i-1}^{t}, \mathbf{X}_{i+1}^{t-1}, \ldots, \mathbf{X}_{s}^{t-1}\right).
\end{aligned}
\end{equation*}
At each step, we consider two different updates:
\begin{equation}
    \label{eq:bcd_origi}
   \hspace{-2.7cm} \mathbf{X}_{i}^{t}\leftarrow \underset{\mathbf{X}_{i} \in \mathcal{X}_{i}^{t}}{\operatorname{argmin}} \; f_{i}^{t}\left(\mathbf{X}_{i}\right),
\end{equation}
\begin{equation}
\label{eq:bcd_proxi_linear}
\begin{aligned}
\mathbf{X}_{i}^{t}\leftarrow&\underset{\mathbf{X}_{i} \in \mathcal{X}_{i}^{t}}{\operatorname{argmin}}\left\langle\nabla f_{i}^{t}\left(\mathbf{X}_{i}^{t-1}\right), \mathbf{X}_{i}-\mathbf{X}_{i}^{t-1}\right\rangle\\
&+\frac{L_{i}^{t-1}}{2}\left\|\mathbf{X}_{i}-\mathbf{X}_{i}^{t-1}\right\|_2^{2},
\end{aligned}
\end{equation}
where $L_i^{t-1} > 0$ is a penalty parameter, and $\nabla f_i^t(\mathbf{X}_i^{t-1})$ is the block-partial gradient.
The complete BCD framework is summarized in Algorithm~\ref{alg:bcd}. 
The algorithm consists of two loops: an outer loop that controls the overall iteration count to ensure convergence, and an inner loop that cyclically updates all block variables.
Each $\mathbf{X}_i$ may be updated using either~\eqref{eq:bcd_origi} or~\eqref{eq:bcd_proxi_linear}, with the requirement that the selected update rule remains consistent across all iterations. 

Update rule~\eqref{eq:bcd_origi} corresponds to the conventional BCD approach, commonly referred to as BCM. 
In contrast, update rule~\eqref{eq:bcd_proxi_linear} minimizes the first-order Taylor expansion of $ f_i^t $ at $ \mathbf{X}_i^{t-1} $, closely resembling BCGD. 
To ensure convergence of Algorithm~\ref{alg:bcd}, $\nabla f_i^t(\mathbf{X}_i)$ is required to be Lipschitz continuous for every block $i$ to which~\eqref{eq:bcd_proxi_linear} is applied.
The parameter $L^t_i$ in~\eqref{eq:bcd_proxi_linear} may be held constant or adjusted adaptively during iterations.
The computational cost of solving~\eqref{eq:bcd_proxi_linear} is significantly lower than that of~\eqref{eq:bcd_origi}, making it particularly advantageous when exact minimization~\eqref{eq:bcd_origi} is computationally expensive. 
To further accelerate convergence, extrapolation is commonly employed as shown in~(\ref{eq:bcd_extrap}):
\begin{equation}
\label{eq:bcd_extrap}
    \hat{\mathbf{X}}_{i}^{t-1} = \mathbf{X}_{i}^{t-1} + \omega_{i}^{t-1}\left(\mathbf{X}_{i}^{t-1} - \mathbf{X}_{i}^{t-2}\right),
\end{equation}
where $0 \leq \omega_{i}^{t-1} < 1$ is the extrapolation coefficient. 
Thus, $\mathbf{X}_{i}^{t-1}$ in~\eqref{eq:bcd_proxi_linear} is replaced by $\hat{\mathbf{X}}_{i}^{t-1}$.


\begin{algorithm}
\caption{BCD for solving~\eqref{prob:block_prob}.}
\begin{algorithmic}
\STATE \textbf{Initialize: } $\left(\mathbf{X}_{1}^{-1}, \ldots, \mathbf{X}_{s}^{-1}\right)=\left(\mathbf{X}_{1}^{0}, \ldots, \mathbf{X}_{s}^{0}\right)$, $t=0$.
\REPEAT
\FOR{$i=1,2,\ldots,s$}
\STATE $\mathbf{X}_i^t \leftarrow \eqref{eq:bcd_origi}$ or \eqref{eq:bcd_proxi_linear};
\ENDFOR
\STATE $t \gets t+1$;
\UNTIL{stopping criterion is satisfied.}
\end{algorithmic}
\label{alg:bcd}
\end{algorithm}

\section{The classical WMMSE}
\label{sec:wmmse} 
The WMMSE algorithm is a well-established method for solving problem~\eqref{prob:ori_prob}. 
It transforms the original problem into an equivalent block multi-convex structure and solves it using BCD methods. 
The problem addressed by WMMSE is closely related to the following sum-MSE minimization formulation:
{
\begin{subequations}
\label{prob:mmse_prob}
\begin{align}
\min _{\mathbf{U}, \mathbf{V}} & \sum_{k=1}^{K}\sum_{i=1}^{I}\alpha_{i_k}\operatorname{Tr}\left(\mathbf{E}_{i_k}\right) \\
\text { s.t. } & \sum_{i=1}^{I} \operatorname{Tr}\left(\mathbf{V}_{i_k} \mathbf{V}_{i_k}^{H}\right) \leq P_{k},\quad \forall k =1,2,\dots,K
\end{align}
\end{subequations}
}
where 
{
\begin{equation}
\label{eq:Ek}
\begin{aligned}
\mathbf{E}_{i_k} \triangleq & \left(\mathbf{I}-\mathbf{U}_{i_k}^{H} \mathbf{H}_{i_kk} \mathbf{V}_{i_k}\right)\left(\mathbf{I}-\mathbf{U}_{i_k}^{H} \mathbf{H}_{i_kk} \mathbf{V}_{i_k}\right)^{H} \\
& + \sum_{(l,j) \neq (i,k)} \mathbf{U}_{i_k}^{H}\mathbf{H}_{i_kj} \mathbf{V}_{l_j} \mathbf{V}_{l_j}^{H} \mathbf{H}_{i_kj}^{H} \mathbf{U}_{i_k}+\sigma_{i_k}^{2}\mathbf{U}_{i_k}^{H} \mathbf{U}_{i_k},
\end{aligned}
\end{equation}
}and {$\mathbf{U} \triangleq\left\{\mathbf{U}_{i_k}\in\mathbb{C}^{N\times d}\right\}_{i_k\in \mathcal{I}}$} denotes the set of receive precoders.
By introducing auxiliary variables {$\mathbf{W} \triangleq\{\mathbf{W}_{i_k}\in\mathbb{C}^{d\times d}\}_{i_k\in \mathcal{I}}$}, WMMSE constructs a matrix-weighted sum-MSE minimization problem that is equivalent to the original WSR maximization~\eqref{prob:ori_prob}:
{
\begin{subequations}
\label{prob:wmmse_prob}
\begin{align}
 \min _{\mathbf{U}, \mathbf{W}, \mathbf{V}} & \sum_{k=1}^{K}\sum_{i=1}^{I} \alpha_{i_k}\left(\operatorname{Tr}\left(\mathbf{W}_{i_k} \mathbf{E}_{i_k}\right)-\log \operatorname{det}\left(\mathbf{W}_{i_k}\right)\right) \label{prob:wmmse_prob_obj}\\
 \quad \text{s.t.} & \sum_{i=1}^{I} \operatorname{Tr}\left(\mathbf{V}_{i_k} \mathbf{V}_{i_k}^{H}\right) \leq P_{k},\quad \forall k=1,2,\cdots,K. \label{prob:wmmse_prob_constr}
\end{align}
\end{subequations}
}
Problem~\eqref{prob:wmmse_prob} exhibits convexity with respect to each individual variable $\mathbf{U}$, $\mathbf{W}$, and $\mathbf{V}$ separately, {provided that $\mathbf{W}$ is positive definite}. 
Initializing with a feasible point $\mathbf{V}^{0}$, the WMMSE algorithm iteratively updates all block variables through the exact minimization procedure specified in~\eqref{eq:bcd_origi}.
Specifically, for every $i_k$, the update of $\mathbf{U}$ while fixing $\mathbf{W}$ and $\mathbf{V}$ is given by
{
\begin{equation}
\label{eq:Uk}
\mathbf{U}_{i_k}=\left(\sum_{j=1}^{K}\sum_{l=1}^{I} \mathbf{H}_{i_kj} \mathbf{V}_{l_j} \mathbf{V}_{l_j}^{H} \mathbf{H}_{i_kj}^{H}+\sigma_{i_k}^{2} \mathbf{I}\right)^{-1} \mathbf{H}_{i_k} \mathbf{V}_{i_k},
\end{equation}
}and the update of $\mathbf{W}$ while fixing the other two block variables is given by
{\begin{equation}
\label{eq:Wk}
\mathbf{W}_{i_k}=\left(\mathbf{I}-\mathbf{U}_{i_k}^{H} \mathbf{H}_{i_kk} \mathbf{V}_{i_k}\right)^{-1}.
\end{equation}
The definition of $\mathbf{U}_{i_k}$ ensures that $\mathbf{W}_{i_k}$ naturally remains positive definite.
}

While fixing $\mathbf{U}$ and $\mathbf{W}$, the precoder $\mathbf{V}_{i_k}$ is obtained by solving the following convex quadratically constrained quadratic program:
{
\begin{small}
    \begin{equation}
\label{prob:P_subproblem}
\begin{aligned}
\min_{\{\mathbf{V}_{i_k}\}_{i=1}^{I}} \quad &\sum_{i=1}^{I} \alpha_{i_k} \operatorname{Tr}\left(\mathbf{W}_{i_k}\left(\mathbf{I}-\mathbf{U}_{i_k}^{H} \mathbf{H}_{i_k k} \mathbf{V}_{i_k}\right)\right.\\
&\left.\left(\mathbf{I}-\mathbf{U}_{i_k}^{H} \mathbf{H}_{i_k k} \mathbf{V}_{i_k}\right)^{H}\right) \\
& + \sum_{i=1}^{I} \sum_{(l,j)\neq (i,k)} \alpha_{l_j} \operatorname{Tr}\left(\mathbf{W}_{l_j} \mathbf{U}_{l_j}^{H} \mathbf{H}_{l_j k} \mathbf{V}_{i_k} \mathbf{V}_{i_k}^{H} \mathbf{H}_{l_j k}^{H} \mathbf{U}_{l_j}\right) \\
\text{s.t.} \quad & \sum_{i=1}^{I} \operatorname{Tr}\left(\mathbf{V}_{i_k} \mathbf{V}_{i_k}^{H}\right) \leq P_{k}.
\end{aligned}
\end{equation}
\end{small}
}
The update rule for $\mathbf{V}$ can be derived from the optimality conditions of the Lagrangian function corresponding to problem~\eqref{prob:P_subproblem}, as shown in~(\ref{eq:wmmse_Vk}):
{
\begin{equation}
\label{eq:wmmse_Vk}
    \begin{aligned}
        \mathbf{V}_{i_k} = &\alpha_{i_k} \left(\sum_{j=1}^K\sum_{l=1}^{I}\alpha_{l_j}\mathbf{H}_{l_jk}^{H}\mathbf{U}_{l_j}\mathbf{W}_{l_j}\mathbf{U}_{l_j}^{H}\mathbf{H}_{l_jk}+\lambda_k\mathbf{I}\right)^{-1}\\
    &\mathbf{H}_{i_kk}^{H}\mathbf{U}_{i_k}\mathbf{W}_{i_k}, \quad \forall i_k \in \mathcal{I},
    \end{aligned}
\end{equation}
}where $\lambda_k \geq 0$ is the dual variable associated with the constraint $\mathcal{P}_k$ and can be determined via a bisection method.

The complete WMMSE algorithm, based on the update steps above, is summarized in Algorithm~\ref{alg:wmmse}. 
While the updates for $\mathbf{U}$ and $\mathbf{W}$ involve matrix inversions, these operations incur only minor computational overhead in practice, since mobile communication systems typically constrain $d \leq N \leq 4$.
In contrast, the update for $\mathbf{V}$ presents a significant computational bottleneck. 
The matrix inversion required in this step exhibits $\mathcal{O}(M^3)$ complexity, which becomes prohibitive as the number of BS antennas $M$ grows large in MU-MIMO systems.
This issue is further exacerbated by the bisection search for the dual variable $\lambda_k$, which requires repeated computation of the $\mathbf{V}$ subproblem until the sum power constraints is satisfied.

\begin{algorithm}
\caption{The WMMSE Framework.}
\begin{algorithmic}
\STATE \textbf{Input: }The tolerance of accuracy $\epsilon$.
\STATE \textbf{Output: } $\mathbf{V}$.
\STATE \textbf{Initialize: } {$\mathbf{V}^{0}_{i_k}\in \mathcal{P}_{k},\forall i_k\in\mathcal{I}$}, $t=1$.
\REPEAT 
\STATE Update {$\mathbf{U}_{i_k}$} by~\eqref{eq:Uk};
\STATE Update {$\mathbf{W}_{i_k}$} by~\eqref{eq:Wk};
\STATE Update {$\mathbf{V}_{i_k}$} by~\eqref{eq:wmmse_Vk};
\STATE $t=t+1$;
\UNTIL{{{$\left|\left(\sum\limits_{(i,k)} \alpha_{i_k} R_{i_k}^{t} - \sum\limits_{(i,k)} \alpha_{i_k} R_{i_k}^{t-1}\right) \Big/ \sum\limits_{(i,k)} \alpha_{i_k} R_{i_k}^{t-1}\right| \leq \epsilon.$}}}
\end{algorithmic}
\label{alg:wmmse}
\end{algorithm}

\section{The A-MMMSE APPROACH}  \label{sec:AMMSE_approach}
\subsection{Accelerated Hybrid BCD Techniques}
\label{sec:accelerating_tech}
The overall complexity of an algorithm depends on its per-iteration cost and the number of iterations to convergence. While WMMSE converges in fewer iterations due to its BCM updates, this comes at a high per-iteration cost. We therefore propose to trade a modest increase in iteration count for a drastic reduction in per-iteration complexity, leading to faster overall computation.

As discussed in Section~\ref{sec:wmmse}, the primary computational bottleneck in WMMSE is the update of the precoding matrix $\mathbf{V}$~\eqref{eq:wmmse_Vk}. 
To overcome this, we integrate the efficient BCGD scheme from~\eqref{eq:bcd_proxi_linear} into the BCD framework, replacing the more costly BCM update in~\eqref{eq:bcd_origi}. To preserve global convergence and a strong convergence rate, we retain the standard BCM updates for $\mathbf{U}$ and $\mathbf{W}$. This results in a hybrid BCD algorithm that employs mixed update strategies.
Specifically, we define the objective function in~\eqref{prob:wmmse_prob_obj} as {$f(\mathbf{U}, \mathbf{W}, \mathbf{V}) \triangleq \sum_{k=1}^{K}\sum_{i=1}^{I} \alpha_{i_k}\left(\operatorname{Tr}\left(\mathbf{W}_{i_k} \mathbf{E}_{i_k}\right)-\log \operatorname{det}\left(\mathbf{W}_{i_k}\right)\right)$}. 
The update of {$\mathbf{V}_{i_k}$} via~\eqref{eq:bcd_proxi_linear} is equivalent to a PGD method, formulated as:
{
\begin{equation}
\label{eq:bcgd_Vk}
\mathbf{V}_{i_k}^{t} = \Pi_{\mathcal{P}_{k}}\big(\mathbf{V}^{t-1}_{i_k} - \gamma \nabla_v f^t(\mathbf{V}_{i_k}^{t-1})\big),\quad \forall i_k\in\mathcal{I},
\end{equation}
}where $\gamma$ denotes the step size, {$\nabla_v f^t(\mathbf{V}_{i_k}^{t-1})$} represents the gradient with respect to {$\mathbf{V}_{i_k}^{t-1}$} at $t$-th iteration, given by:
{
\begin{equation}
\begin{aligned}
    \nabla_v f^t(\mathbf{V}_{i_k}^{t-1}) & \triangleq \left(\sum_{(l,j)} 2\alpha_{l_j}\mathbf{H}_{l_jk}^H\mathbf{U}_{l_j}^t\mathbf{W}_{l_j}^t\mathbf{U}_{l_j}^{t}{}^{H}\mathbf{H}_{l_jk}\right)\mathbf{V}_{i_k}^{t-1} \\
    & - 2\alpha_{i_k}\mathbf{H}_{i_kk}^H\mathbf{U}_{i_k}^t\mathbf{W}_{i_k}^t,\\
\end{aligned}
\end{equation}}and $\Pi_{\mathcal{P}_{k}}(\mathbf{V})$ denotes the projection operator onto the $k$-th SPC:
{
\begin{equation}
\label{eq:projec_method}
\Pi_{\mathcal{P}_{k}}(\mathbf{V}) \triangleq
\begin{cases}
\mathbf{V} & \text{if } \mathbf{V} \in \mathcal{P}_{k}, \\
\mathbf{V} \cdot \sqrt{\frac{P_{k}}{\sum_{i=1}^{I}\operatorname{Tr}(\mathbf{V}_{i_k}\mathbf{V}_{i_k}^{H})}} & \text{if } \mathbf{V} \notin \mathcal{P}_{k}.
\end{cases}
\end{equation}
The update rule in~\eqref{eq:bcgd_Vk} involves only matrix multiplications, reducing the computational complexity for $\mathbf{V}_k$ from $\mathcal{O}(IKMNd + IKMd^2 + IKM^2d + M^3)$ in~\eqref{eq:wmmse_Vk} to $\mathcal{O}(IKMNd + IKMd^2 + IKM^2d)$ in~\eqref{eq:bcgd_Vk}, thereby eliminating the $\mathcal{O}(M^3)$ complexity associated with matrix inversion.
In contrast, R-WMMSE~\cite{zhao2025universal} also removes the $\mathcal{O}(M^3)$ complexity via its low-dimensional subspace approach. 
Nevertheless, it suffers from $\mathcal{O}((IKN)^3)$ complexity in the total number of users, whereas the update rule in~\eqref{eq:bcgd_Vk} retains linear complexity in the same aspect.

The choice of step size $\gamma$ significantly affects the convergence rate of first-order methods. 
Selecting the optimal step size $\gamma_{i_k}^t = 1 / L_{v_{i_k}}^t$ for $\mathbf{V}_{i_k}^t$, where $L_{v_{i_k}}^t$ denotes the Lipschitz constant of $\nabla_v f^t(\mathbf{V}_{i_k}^{t-1})$, still involves matrix decomposition with $\mathcal{O}(M^3)$ complexity. 
Nevertheless, thanks to the projection operation~\eqref{eq:projec_method} that eliminates the need for bisection search, the computational cost still remains lower than that of WMMSE. 
On the other hand, fixing $\gamma$ to a constant reduces computational cost but often leads to a slower convergence rate.
We observe that the BCGD method typically employs a relatively large step size in the early iterations, which gradually decreases when approaching a stationary point. 
Accordingly, we adopt the following step size selection strategy:
\begin{equation}
\label{eq:gamma}
\gamma^t_{i_k} \triangleq
\begin{cases}
1 / \| \sum_{(l,j)} 2\alpha_{l_j} \mathbf{H}_{l_j k}^H \mathbf{U}_{l_j}^t \mathbf{W}_{l_j}^t \mathbf{U}_{l_j}^{t}{}^{H} \mathbf{H}_{l_j k} \|_2 & \text{if } t \leq T_{c}, \\
\gamma_c & \text{otherwise}.
\end{cases}
\end{equation}
Here, $T_{c}$ is a very small number of iterations (e.g., $T_c = 3$ in our experiments), and $\gamma_c$ is a fixed step size. 
This step size design implies that our algorithm computes the optimal step size only in the initial iterations to ensure fast convergence and then switches to a fixed step size after $T_c$ iterations to reduce computational cost. This approach strikes a favorable balance between convergence speed and computational efficiency.
}

As discussed in Section~\ref{sec:bcd_method}, extrapolation techniques serve as a common acceleration strategy for the update scheme in~\eqref{eq:bcd_proxi_linear}.
{These techniques incorporate historical information from previous iterations by forming an extrapolated point through a weighted combination of the current point and past differences, thereby avoiding oscillatory behavior and accelerating convergence. 
Such extrapolation approaches are widely used in first-order methods~\cite{xu2013block} and have also been applied directly to WMMSE to achieve acceleration~\cite{zhou2023novel}. 
Furthermore, all the gradient-based methods listed in Table~\ref{tab:frt_method_comp} employ Nesterov's extrapolation to accelerate convergence.} 
Therefore, we incorporate an additional extrapolation step for $\mathbf{V}$ in each iteration:
{
\begin{equation}
\label{eq:extro_v}
\hat{\mathbf{V}}^{t-1}_{i_k} = \mathbf{V}^{t-1}_{i_k}+\omega(\mathbf{V}^{t-1}_{i_k}-\mathbf{V}^{t-2}_{i_k})\quad\forall i_k\in\mathcal{I},
\end{equation}
}where $0\leq\omega <1$ denotes the extrapolation parameter and {is computed as
\begin{equation}
    \label{eq:omega}
    \omega^t = \max\left\{\frac{t-2}{t+1}, 0\right\}, \quad \forall t \geq 1,
\end{equation}
following the setting in~\cite{shen2024accelerating}.}

{
Combined with the extrapolation technique in~\eqref{eq:extro_v}, we obtain an accelerated method based on hybrid BCD. 
This BCD-based acceleration technique is easy to parallelize and shares conceptual similarities with the matrix-free WMMSE approach~\cite{pellaco2021matrix, pellaco2023matrix}. 
However, unlike methods that eliminate all matrix inversions, our method specifically applies the BCGD method to the precoding update in MU-MIMO systems.
Furthermore, due to the inherently parallel nature of first-order methods, our approach can be executed not only on CPUs but also directly leverage the computational advantages of GPUs without requiring neural network-based algorithm unrolling. 
}

\subsection{Mixed Weighted-Unweighted sum-MSE Optimization}
In this section, we propose a heuristic warm-start strategy to accelerate the convergence of Algorithm~\ref{alg:wmmse}. 
An effective warm-start strategy provides a high-quality initial point for the main iterative process at minimal computational cost, thereby reducing overall computation time. 
{We first observe that the sum-MSE minimization problem~\eqref{prob:mmse_prob} is equivalent to the following problem:
\begin{subequations}
\label{prob:mmse_equi_prob}
\begin{align}
 \min _{\mathbf{U}, \mathbf{W}, \mathbf{V}} & \sum_{k=1}^{K}\sum_{i=1}^{I} \alpha_{i_k}\left(\operatorname{Tr}\left(\mathbf{W}_{i_k} \mathbf{E}_{i_k}\right)-\log \operatorname{det}\left(\mathbf{W}_{i_k}\right)\right) \label{prob:mmmse_prob_obj}\\
 \quad \text{s.t.} & \sum_{i=1}^{I} \operatorname{Tr}\left(\mathbf{V}_{i_k} \mathbf{V}_{i_k}^{H}\right) \leq P_{k},\quad \forall k=1,2,\cdots,K, \label{prob:mmmse_prob_constr}\\
&\mathbf{W}_{i_k}=\mathbf{I},\quad \forall i_k\in \mathcal{I} 
\end{align}
\end{subequations}
Compared with the weighted sum-MSE minimization problem~\eqref{prob:wmmse_prob}, this problem introduces an additional affine constraint $\mathbf{W}_{i_k} = \mathbf{I}$ for all $i_k \in \mathcal{I}$. Consequently, any solution obtained by solving the sum-MSE minimization problem~\eqref{prob:mmse_prob} is guaranteed to be feasible for problem~\eqref{prob:wmmse_prob}. 
Furthermore, although the convergence rate of BCD for general nonconvex problems remains an open problem~\cite{lyu2020convergence}, for convex problems, the linear convergence rate upper bound depends at least linearly on the number of block variables~\cite{sun2015improved}.
Since the sum-MSE minimization problem~\eqref{prob:mmse_prob} has fewer block variables (specifically, no $\mathbf{W}$) than the weighted sum-MSE minimization problem~\eqref{prob:wmmse_prob}, and furthermore, the structural similarity between the two formulations leads to comparable iterative updates and convergence paths, we heuristically propose solving~\eqref{prob:mmse_prob} to provide a feasible initial solution for~\eqref{prob:wmmse_prob}, thereby accelerating the overall convergence.}
When applying BCD to problem~\eqref{prob:mmse_prob}, the update for $\mathbf{U}$ remains unchanged from~\eqref{eq:Uk}, while the precoding matrix $\mathbf{V}$ is updated as:
{
\begin{equation}
\label{eq:mmse_Vk}
    \begin{aligned}
        \mathbf{V}_{i_k} = &\alpha_{i_k} \left(\sum_{j=1}^K\sum_{l=1}^{I}\alpha_{l_j}\mathbf{H}_{l_jk}^{H}\mathbf{U}_{l_j}\mathbf{U}_{l_j}^{H}\mathbf{H}_{l_jk}+\mu_k\mathbf{I}\right)^{-1}\\
        &\mathbf{H}_{i_kk}^{H}\mathbf{U}_{i_k}, \quad \forall i_k \in \mathcal{I},
    \end{aligned}
\end{equation}
}where ${\mu_k \geq 0}$ denotes the dual variable, obtained via bisection search.

We aim to accelerate the convergence of WMMSE for solving problem~\eqref{prob:mmse_prob} while ensuring the final solution constitutes a stationary point of problem~\eqref{prob:ori_prob}. 
To this end, we propose a two-stage method. The first stage solves problem~\eqref{prob:mmse_prob} until the relative change in WSR falls below threshold $\epsilon_1$. The algorithm then transitions to the second stage, which executes standard WMMSE updates until the relative change in WSR drops below $\epsilon_2$. 
Since the computational difference between solving~\eqref{prob:mmse_prob} and~\eqref{prob:wmmse_prob} lies solely in the update of $\mathbf{W}$, the per-iteration complexity remains comparable to that of original WMMSE. 
{Furthermore, this approach preserves the positive definiteness of $\mathbf{W}$ throughout the iterations and thus does not affect the block multi-convex property of the original problem.}
This approach effectively uses the MMSE solution to warm-start WMMSE. 
We term this method a \textit{mixed weighted-unweighted sum MSE minimization}~(MMMSE) algorithm, as illustrated in Fig.~\ref{fig:structure}.
Moreover, this modification could not only accelerate the conventional WMMSE algorithm but also enhance the computational efficiency of a range of related improved methods, such as R-WMMSE~\cite{zhao2023rethinking}. 
We will empirically demonstrate this performance improvement in Section~\ref{sec:experiments}.

\begin{figure}[h!]
    \centering
    \includegraphics[width=1.0\linewidth]{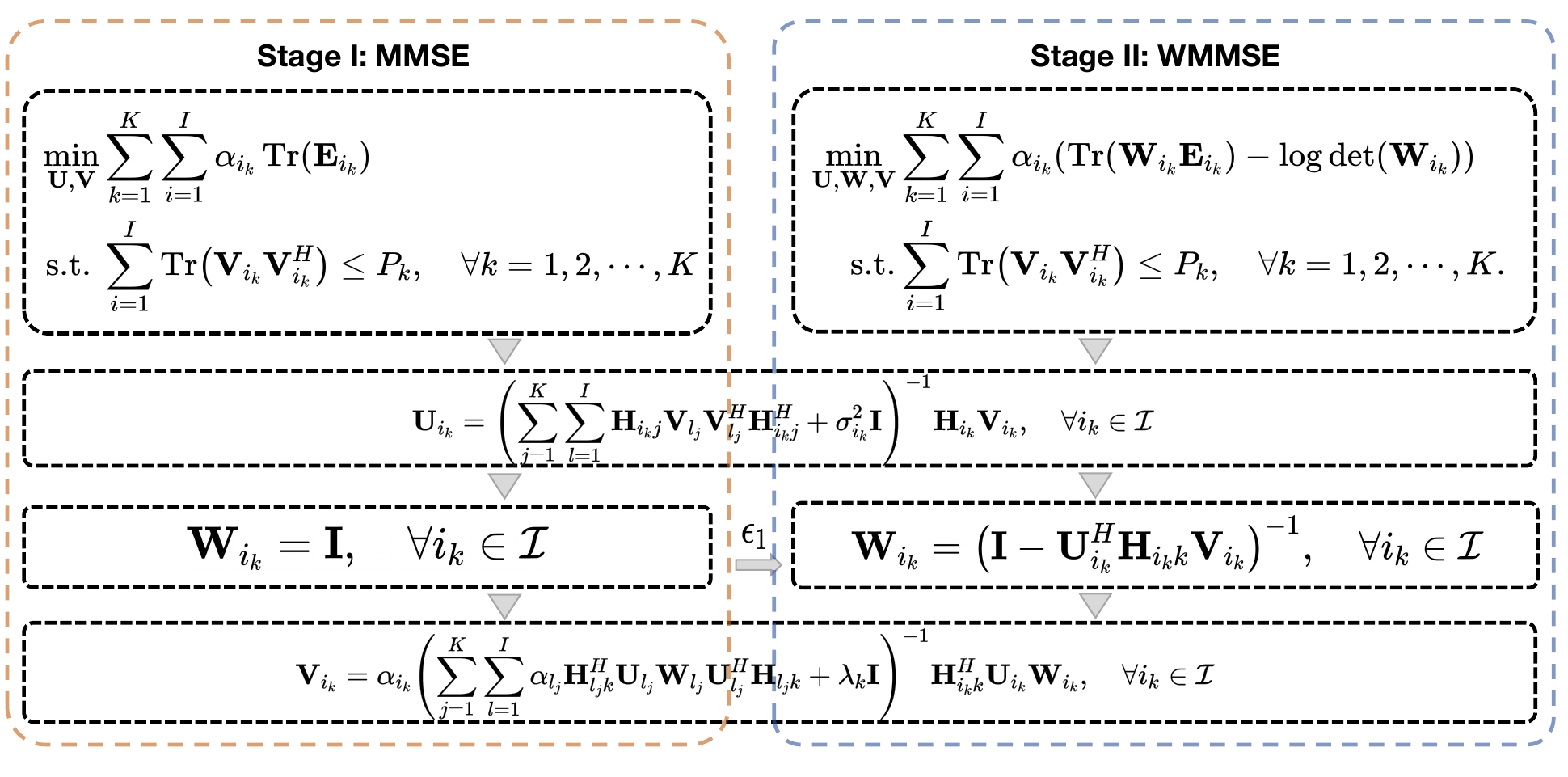}
    \caption{{Illustration of the proposed MMMSE algorithm.}}
    \label{fig:structure}
\end{figure}

Combined with the hybrid BCD acceleration technique, the proposed method is denoted as A-MMMSE and summarized in Algorithm~\ref{alg:inexact_wmmse}.
Given an initial feasible solution $\mathbf{V}^0$, A-MMMSE sequentially updates $\mathbf{U}$, $\mathbf{W}$, and $\mathbf{V}$, with each variable update amenable to distributed computation across users. 
For $t > 2$, an extrapolation technique is incorporated, allowing $\mathbf{V}$ updates to utilize historical information. 
During initial iterations, all {$\mathbf{W}_{i_k}$} are fixed as the identity matrix $\mathbf{I}$; once the solution accuracy satisfies $\epsilon_1$, the standard WMMSE update rule for $\mathbf{W}$ is restored, thereby reducing computational complexity for both $\mathbf{W}$ and subsequent $\mathbf{V}$ updates.
The variable $\mathbf{V}$ is updated via BCGD with projection to maintain feasibility, which significantly reduces the $\mathcal{O}(M^3)$ complexity associated with matrix inversion and eliminates the bisection search previously required to ensure the satisfaction of constraints~\eqref{prob:wmmse_prob_constr}.
Finally, the algorithm terminates when the relative change in the weighted sum rate falls below $\epsilon_2$.

\begin{algorithm}
\caption{A-MMMSE.}
\begin{algorithmic}
\STATE \textbf{Input: }The step size {$\gamma_c$}, the tolerance $\epsilon_1, \epsilon_2$.
\STATE \textbf{Output: }{$\mathbf{V}_{i_k}, \forall i_k\in\mathcal{I}$}.
\STATE \textbf{Initialize: } {$\mathbf{V}^{0}_{i_k}\in \mathcal{P}_{k},\forall i_k\in\mathcal{I}$,} $t=1$.
\REPEAT
\STATE {Compute $\omega^t$ using~\eqref{eq:omega};}
\IF{$t\geq 2$}
\STATE {$\hat{\mathbf{V}}_{i_k}^{t-1}=\mathbf{V}_{i_k}^{t-1}+\omega^t\left(\mathbf{V}_{i_k}^{t-1}-\mathbf{V}^{t-2}_{i_k}\right)$;}
\ELSE
\STATE {$\hat{\mathbf{V}}_{i_k}^{t-1}=\mathbf{V}_{i_k}^{t-1}$;}
\ENDIF
\STATE 
{Update $\mathbf{U}^{t}_{i_k}$ for all $i_k\in \mathcal{I}$ by~\eqref{eq:Uk};}
\IF{\small{{$\left|\left(\sum\limits_{(i,k)} \alpha_{i_k} R_{i_k}^{t} - \sum\limits_{(i,k)} \alpha_{i_k} R_{i_k}^{t-1}\right) \Big/ \sum\limits_{(i,k)} \alpha_{i_k} R_{i_k}^{t-1}\right| \leq \epsilon_1$}}}
\STATE 
{Update $\mathbf{W}^{t}_{i_k}$ for all $i_k\in \mathcal{I}$ by~\eqref{eq:Wk};}
\ELSE
\STATE {$\mathbf{W}_{i_k}^{t}=\mathbf{I}, \forall i_k\in\mathcal{I}$;}
\ENDIF
\STATE {Compute $\gamma_{i_k}^t$ for all $i_k\in \mathcal{I}$ using~\eqref{eq:gamma};}
\STATE {Update $\mathbf{V}^{t}_{i_k}$ for all $i_k\in \mathcal{I}$ by~\eqref{eq:bcgd_Vk};}
\STATE $t = t+1$;
\UNTIL{\small{{$\left|\left(\sum\limits_{(i,k)} \alpha_{i_k} R_{i_k}^{t} - \sum\limits_{(i,k)} \alpha_{i_k} R_{i_k}^{t-1}\right) \Big/ \sum\limits_{(i,k)} \alpha_{i_k} R_{i_k}^{t-1}\right| \leq \epsilon_2.$}}}
\end{algorithmic}
\label{alg:inexact_wmmse}
\end{algorithm}

\section{Convergence Analysis}
\label{sec:converg}
This section establishes the convergence of A-MMMSE for sufficiently small step sizes $\gamma > 0$. Since the first stage of A-MMMSE serves only to provide an initial point—satisfying the SPC via projection—to the BCD procedure in the second stage, we restrict our convergence analysis exclusively to this second stage. We begin by formalizing the matrix inner product and stationary point concepts that form the foundation of our convergence framework.
\begin{definition}[Complex Matrix Inner Product~\cite{pellaco2023matrix}]
    \label{def:inner_product}
    The inner product between complex matrices $\mathbf{X}$ and $\mathbf{Y}$, denoted $\langle \mathbf{X}, \mathbf{Y} \rangle$, is defined as the Euclidean inner product of their real-valued vector representations, obtained by concatenating the real and imaginary parts of the transposed rows.
\end{definition}

\begin{definition}[Stationary Point~\cite{xu2013block}]
    \label{def:nash_point}
    Let \(\overline{\mathbf{X}} \triangleq (\overline{\mathbf{X}}_1, \ldots, \overline{\mathbf{X}}_s)\). 
    A point \(\overline{\mathbf{X}}\) is a stationary point of \(\min_{\mathbf{X} \in \mathcal{X}} f(\overline{\mathbf{X}})\) if it is an interior point of $\mathcal{X}$ and satisfies:
    \begin{small}
        \begin{equation*}
        \label{cond:nash_cond_1}
        \begin{split}
            &f\left(\overline{\mathbf{X}}_{1}, \ldots, \overline{\mathbf{X}}_{i-1}, \overline{\mathbf{X}}_{i}, \overline{\mathbf{X}}_{i+1}, \ldots, \overline{\mathbf{X}}_{s}\right) \leq \\ 
        &f\left(\overline{\mathbf{X}}_{1}, \ldots, \overline{\mathbf{X}}_{i-1}, \mathbf{X}_{i}, \overline{\mathbf{X}}_{i+1}, \ldots, \overline{\mathbf{X}}_{s}\right), \quad \forall \mathbf{X}_{i} \in \overline{\mathcal{X}}_{i},\forall i
        \end{split}
    \end{equation*}
    \end{small}
    or equivalently,
    \begin{equation*}
        \label{cond:nash_cond_2}
        \left\langle\nabla_{\mathbf{X}_{i}} f(\overline{\mathbf{X}}), \mathbf{X}_{i}-\overline{\mathbf{X}}_{i}\right\rangle \geq 0, \quad \forall \mathbf{X}_{i} \in \overline{\mathcal{X}}_{i},\forall i,
    \end{equation*}
    where $\overline{\mathcal{X}}_{i}\triangleq 
\mathcal{X}_{i}\left(\overline{\mathbf{X}}_{1}, \ldots, \overline{\mathbf{X}}_{i-1}, \overline{\mathbf{X}}_{i+1}, \ldots, \overline{\mathbf{X}}_{s}\right)$.
\end{definition}

Let $\kappa > 0$ denote the largest singular value of {$\mathbf{H}_{i_k}^H \mathbf{H}_{i_k}$} across all users, i.e.,
{
\begin{equation}
    \label{eq:sigma_HH}
    \kappa \triangleq \max_{i_k \in \mathcal{I}} \sigma_{\max}\left(\mathbf{H}_{i_k}^H \mathbf{H}_{i_k}\right).
\end{equation}
Furthermore, let $P_{\text{max}} \triangleq \max_{k} P_k$ represent the maximum transmit power among all BSs.}
We now present Lemma~\ref{lemma:lambda_ek}, which establishes a fundamental property of the matrix {$\mathbf{E}_{i_k}$}.
\begin{lemma}
    \label{lemma:lambda_ek}
    For any {$\mathbf{U}_{i_k}$} and {$\mathbf{V}_{i_k}\in \mathcal{P}_{k},i_k\in\mathcal{I}$}, the minimum eigenvalue of {$\mathbf{E}_{i_k}$} satisfies
    \[
    \lambda_{\min}({\mathbf{E}_{i_k}}) \geq \frac{{\sigma_{i_k}^{2}}}{P_{\max} \kappa + {\sigma_{i_k}^{2}}},
    \]
    where ${\mathbf{E}_{i_k}}$ is defined in~\eqref{eq:Ek}.
\end{lemma}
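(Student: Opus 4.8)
The plan is to reduce the claim for an arbitrary $\mathbf{U}_k$ to the corresponding claim for the MMSE receiver via a Loewner-order (positive semidefinite ordering) argument, and then to control the minimum eigenvalue of the resulting MMSE error matrix using the power budget and the definition of $\kappa$ in~\eqref{eq:sigma_HH}. The observation driving the reduction is that $\mathbf{E}_k$ in~\eqref{eq:Ek} is minimized, in the Loewner order, by the receiver~\eqref{eq:Uk}, so the worst case over $\mathbf{U}_k$ is exactly the best (smallest) error matrix.

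First I would fix $\mathbf{V}$ and regard $\mathbf{E}_k$ as a function of $\mathbf{U}_k$. Writing $\mathbf{A}_k \triangleq \mathbf{H}_k \mathbf{V}_k$ and $\mathbf{J}_k \triangleq \sum_{j=1}^K \mathbf{H}_k \mathbf{V}_j \mathbf{V}_j^H \mathbf{H}_k^H + \sigma_k^2 \mathbf{I} \succ 0$, expanding~\eqref{eq:Ek} gives $\mathbf{E}_k = \mathbf{I} - \mathbf{U}_k^H \mathbf{A}_k - \mathbf{A}_k^H \mathbf{U}_k + \mathbf{U}_k^H \mathbf{J}_k \mathbf{U}_k$. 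Completing the square in $\mathbf{U}_k$ yields
\[
\mathbf{E}_k = \left(\mathbf{I} - \mathbf{A}_k^H \mathbf{J}_k^{-1}\mathbf{A}_k\right) + \left(\mathbf{U}_k - \mathbf{J}_k^{-1}\mathbf{A}_k\right)^H \mathbf{J}_k \left(\mathbf{U}_k - \mathbf{J}_k^{-1}\mathbf{A}_k\right).
\]
The second summand is positive semidefinite, so $\mathbf{E}_k \succeq \mathbf{E}_k^{\star} \triangleq \mathbf{I} - \mathbf{A}_k^H \mathbf{J}_k^{-1}\mathbf{A}_k$ for every $\mathbf{U}_k$, with equality attained by the receiver $\mathbf{U}_k = \mathbf{J}_k^{-1}\mathbf{A}_k$ of~\eqref{eq:Uk}. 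By monotonicity of $\lambda_{\min}$ under the Loewner order, it then suffices to lower-bound $\lambda_{\min}(\mathbf{E}_k^{\star})$.

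Next I would apply the Woodbury identity to rewrite $\mathbf{E}_k^{\star} = \left(\mathbf{I} + \mathbf{A}_k^H \mathbf{C}_k^{-1}\mathbf{A}_k\right)^{-1}$, where $\mathbf{C}_k \triangleq \sum_{j \neq k}\mathbf{H}_k\mathbf{V}_j\mathbf{V}_j^H\mathbf{H}_k^H + \sigma_k^2 \mathbf{I} \succeq \sigma_k^2 \mathbf{I}$. Hence $\lambda_{\min}(\mathbf{E}_k^{\star}) = \big(1 + \lambda_{\max}(\mathbf{A}_k^H \mathbf{C}_k^{-1}\mathbf{A}_k)\big)^{-1}$, and it remains to upper-bound $\lambda_{\max}(\mathbf{A}_k^H \mathbf{C}_k^{-1}\mathbf{A}_k)$. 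From $\mathbf{C}_k \succeq \sigma_k^2 \mathbf{I}$ I get $\mathbf{C}_k^{-1} \preceq \sigma_k^{-2}\mathbf{I}$, so $\mathbf{A}_k^H \mathbf{C}_k^{-1}\mathbf{A}_k \preceq \sigma_k^{-2}\mathbf{A}_k^H\mathbf{A}_k$. Bounding the top eigenvalue of the positive semidefinite matrix $\mathbf{A}_k^H\mathbf{A}_k$ by its trace and using $\mathbf{H}_k^H\mathbf{H}_k \preceq \kappa \mathbf{I}$ from~\eqref{eq:sigma_HH}, I obtain $\lambda_{\max}(\mathbf{A}_k^H\mathbf{A}_k) \leq \operatorname{Tr}(\mathbf{H}_k^H\mathbf{H}_k\mathbf{V}_k\mathbf{V}_k^H) \leq \kappa \operatorname{Tr}(\mathbf{V}_k\mathbf{V}_k^H)$. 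Since $\mathbf{V} \in \mathcal{P}_{\text{sum}}$ forces $\operatorname{Tr}(\mathbf{V}_k\mathbf{V}_k^H) \leq P_{\max}$, this gives $\lambda_{\max}(\mathbf{A}_k^H\mathbf{C}_k^{-1}\mathbf{A}_k) \leq \kappa P_{\max}/\sigma_k^2$, and substituting back yields $\lambda_{\min}(\mathbf{E}_k) \geq \lambda_{\min}(\mathbf{E}_k^{\star}) \geq \sigma_k^2/(P_{\max}\kappa + \sigma_k^2)$.

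The main obstacle is the first step: one must justify that the MMSE receiver minimizes $\mathbf{E}_k$ in the \emph{Loewner order}, not merely its trace, since only a matrix-level inequality transfers to $\lambda_{\min}$. The completion-of-square identity settles this cleanly, provided it is carried out at the matrix level so that the residual quadratic form $\left(\mathbf{U}_k - \mathbf{J}_k^{-1}\mathbf{A}_k\right)^H \mathbf{J}_k \left(\cdots\right)$ is kept intact and recognized as positive semidefinite. The remaining steps are routine positive semidefinite manipulations; the only point warranting care is the eigenvalue-by-trace estimate, which is valid precisely because $\mathbf{A}_k^H\mathbf{A}_k \succeq 0$.
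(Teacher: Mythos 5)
Your proof is correct, but it takes a genuinely different route from the paper's. The paper argues directly on the quadratic form: it first discards the interference term (valid since that term is positive semidefinite), writes $\mathbf{a}^{H}\mathbf{E}_{k}\mathbf{a} = \|\mathbf{a}-\mathbf{V}_{k}^{H}\mathbf{H}_{k}^{H}\mathbf{b}\|_{2}^{2}+\sigma_k^{2}\|\mathbf{b}\|_{2}^{2}$ with $\mathbf{b}=\mathbf{U}_k\mathbf{a}$, and then splits into two cases on $\|\mathbf{b}\|_{2}$, using the triangle inequality and minimizing a scalar quadratic in $\|\mathbf{b}\|_{2}$ to extract the bound $\frac{\sigma_k^2}{P_{\max}\kappa+\sigma_k^2}$. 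You instead complete the square in $\mathbf{U}_k$ at the matrix level, which shows $\mathbf{E}_k \succeq \mathbf{E}_k^{\star}=\mathbf{I}-\mathbf{A}_k^H\mathbf{J}_k^{-1}\mathbf{A}_k$ with equality at the MMSE receiver~\eqref{eq:Uk}, then invoke Woodbury to get $\mathbf{E}_k^{\star}=\left(\mathbf{I}+\mathbf{A}_k^H\mathbf{C}_k^{-1}\mathbf{A}_k\right)^{-1}$ and bound $\lambda_{\max}(\mathbf{A}_k^H\mathbf{C}_k^{-1}\mathbf{A}_k)\leq \kappa P_{\max}/\sigma_k^2$. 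Both arguments are sound and yield the identical constant; the trade-off is that yours is more structural—it identifies the exact minimizer over $\mathbf{U}_k$, produces a closed-form worst-case error matrix, and avoids the paper's case analysis, at the cost of requiring Woodbury, Loewner monotonicity of $\lambda_{\min}$, and invertibility of $\mathbf{C}_k$ (which holds since $\mathbf{C}_k\succeq\sigma_k^2\mathbf{I}\succ 0$)—whereas the paper's proof is more elementary, using nothing beyond the triangle inequality and a one-variable minimization. A minor aesthetic point: your trace bound $\lambda_{\max}(\mathbf{A}_k^H\mathbf{A}_k)\leq\operatorname{Tr}(\mathbf{A}_k^H\mathbf{A}_k)$ can be replaced by the slightly more direct $\|\mathbf{H}_k\mathbf{V}_k\|_2^2\leq\|\mathbf{H}_k\|_2^2\|\mathbf{V}_k\|_F^2\leq\kappa P_{\max}$, mirroring the paper's opening estimate, but as written your step is valid.
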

The proof is provided in Appendix~\ref{append:proof_lambda_ek}.
Before establishing the convergence of the A-MMMSE iterates to a stationary point, we first demonstrate in Lemma~\ref{lemma:limit_point} that at least one limit point exists.
The strong coupling between the update of ${\mathbf{W}_{i_k}}$ and the structure of ${\mathbf{E}_{i_k}}$, as established in Lemma~\ref{lemma:lambda_ek}, further implies an upper bound on the largest eigenvalue of {$\mathbf{W}_{i_k}$}. 
This eigenvalue bound plays a crucial role in establishing the existence of limit points, as detailed in the proof of Lemma~\ref{lemma:limit_point} provided in Appendix~\ref{append:proof_limit_point}.

\begin{lemma}
    \label{lemma:limit_point}
    The sequence of iterates generated by A-MMMSE has at least one limit point.
\end{lemma}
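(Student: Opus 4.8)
The plan is to show that the sequence $\{(\mathbf{U}^t, \mathbf{W}^t, \mathbf{V}^t)\}$ stays within a bounded subset of the (finite-dimensional) space of complex matrices; since bounded sequences in finite dimensions admit convergent subsequences by the Bolzano--Weierstrass theorem, a limit point then exists. I would verify boundedness one block at a time, following the cyclic update order of A-MMMSE.

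I would first bound the precoders. Each $\mathbf{V}^t$ is produced by the projection $\Pi_{\mathcal{P}_{\text{sum}}}$, so $\sum_k \operatorname{Tr}(\mathbf{V}_k^t (\mathbf{V}_k^t)^H) \leq P_{\max}$ and hence $\|\mathbf{V}^t\|_F \leq \sqrt{P_{\max}}$; the precoder iterates already live in the compact set $\mathcal{P}_{\text{sum}}$. The extrapolated iterate then satisfies $\|\hat{\mathbf{V}}^{t-1}\|_F \leq (1+\omega)\|\mathbf{V}^{t-1}\|_F + \omega\|\mathbf{V}^{t-2}\|_F \leq (1+2\omega)\sqrt{P_{\max}}$, so its total power is bounded by the finite constant $P' \triangleq (1+2\omega)^2 P_{\max}$, which is finite because $\omega < 1$. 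Next I would bound $\mathbf{U}^t$: in~\eqref{eq:Uk} evaluated at $\hat{\mathbf{V}}^{t-1}$ the matrix being inverted dominates $\sigma_k^2 \mathbf{I}$, so its inverse has spectral norm at most $1/\sigma_k^2$; together with the bound on $\|\mathbf{H}_k \hat{\mathbf{V}}_k^{t-1}\|_2$ inherited from the previous step, this gives a uniform bound on $\|\mathbf{U}_k^t\|_2$.

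The crux---and the step I expect to be the main obstacle---is bounding $\mathbf{W}^t$. Here I would use that~\eqref{eq:Uk} is exactly the MMSE receiver for $\hat{\mathbf{V}}^{t-1}$, so substituting it into~\eqref{eq:Ek} collapses the MSE matrix to $\mathbf{E}_k = \mathbf{I} - (\mathbf{U}_k^t)^H \mathbf{H}_k \hat{\mathbf{V}}_k^{t-1}$, which is precisely the matrix inverted in the $\mathbf{W}$-update; thus $\mathbf{W}_k^t = \mathbf{E}_k^{-1}$ and $\lambda_{\max}(\mathbf{W}_k^t) = 1/\lambda_{\min}(\mathbf{E}_k)$. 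The difficulty is that $\mathbf{E}_k$ is evaluated at the extrapolated $\hat{\mathbf{V}}^{t-1}$, which may leave $\mathcal{P}_{\text{sum}}$, so Lemma~\ref{lemma:lambda_ek} does not apply verbatim.

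I would instead repeat the argument of Lemma~\ref{lemma:lambda_ek} with $P_{\max}$ replaced by the enlarged budget $P'$, obtaining $\lambda_{\min}(\mathbf{E}_k) \geq \sigma_k^2/(P'\kappa + \sigma_k^2) > 0$ and therefore $\lambda_{\max}(\mathbf{W}_k^t) \leq (P'\kappa + \sigma_k^2)/\sigma_k^2$. Since each $\mathbf{W}_k^t$ is positive definite, this eigenvalue bound controls its norm. With $\mathbf{U}^t$, $\mathbf{W}^t$, and $\mathbf{V}^t$ all uniformly bounded, the full sequence is bounded and a limit point exists.
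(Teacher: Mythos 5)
Your proposal is correct and follows the same overall skeleton as the paper's proof: establish blockwise boundedness of $\{(\mathbf{U}^t,\mathbf{W}^t,\mathbf{V}^t)\}_t$ and invoke Bolzano--Weierstrass. There are two points of divergence worth recording. First, you bound $\mathbf{U}_k^t$ by a spectral-norm product argument (the inverted matrix in~\eqref{eq:Uk} dominates $\sigma_k^2\mathbf{I}$, so its inverse has norm at most $1/\sigma_k^2$), whereas the paper uses the MSE identity $\|\mathbf{U}_k\|_F^2 \leq \operatorname{Tr}(\mathbf{E}_k)/\sigma_k^2 \leq d/\sigma_k^2$; both are valid, and yours does not even need the MMSE structure of the receiver. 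Second, and more substantively, your treatment of $\mathbf{W}^t$ is \emph{more careful} than the paper's: the paper bounds $\lambda_{\max}(\mathbf{W}_k^t)$ by citing Lemma~\ref{lemma:lambda_ek} directly with budget $P_{\max}$, but in Algorithm~\ref{alg:inexact_wmmse} the matrix inverted in the $\mathbf{W}$-update is $\mathbf{E}_k$ evaluated at the \emph{extrapolated} point $\hat{\mathbf{V}}^{t-1} = (1+\omega)\mathbf{V}^{t-1} - \omega\mathbf{V}^{t-2}$, which need not lie in $\mathcal{P}_{\text{sum}}$, so Lemma~\ref{lemma:lambda_ek} does not apply verbatim. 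You correctly identify this, note that $\|\hat{\mathbf{V}}^{t-1}\|_F \leq (1+2\omega)\sqrt{P_{\max}}$, and rerun the lemma's argument with the enlarged budget $P' = (1+2\omega)^2 P_{\max}$ (the lemma's proof only uses the total power through the bound $\|\mathbf{H}_k\mathbf{V}_k\|_F^2 \leq \kappa P_{\max}$, so the substitution goes through verbatim), obtaining $\lambda_{\max}(\mathbf{W}_k^t) \leq (P'\kappa+\sigma_k^2)/\sigma_k^2$. This yields a slightly larger but still uniform constant and patches an imprecision in the published proof; the conclusion of the lemma is unaffected.
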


To proceed with the main convergence analysis, we require the following result on the smoothness of the objective function $f(\mathbf{U}, \mathbf{W}, \mathbf{V})$ with respect to $\mathbf{V}$. Let {$\bar{\alpha} \triangleq \max_{i_k\in\mathcal{I}} \alpha_{i_k}$} denote the maximum user priority weight and {$\sigma^2 \triangleq \min_{i_k\in \mathcal{I}} \sigma^2_{i_k}$}.
\begin{lemma}
\label{lemma:L_V}
 The function $f(\mathbf{U}, \mathbf{W}, \mathbf{V})$ is $L$-smooth in $\mathbf{V}$ with $L \leq L_v \triangleq \frac{2\bar{\alpha}{KI}\kappa}{\sigma^2}$, for any $\mathbf{U}$ and $\mathbf{W}$ produced by the A-MMMSE algorithm.
\end{lemma}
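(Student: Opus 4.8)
The plan is to exploit the fact that, with $\mathbf{U}$ and $\mathbf{W}$ frozen at their A-MMMSE values, $f(\mathbf{U},\mathbf{W},\mathbf{V})$ is a \emph{quadratic} function of the precoder block $\mathbf{V}=\{\mathbf{V}_k\}_{k=1}^{K}$. First I would expand $\sum_k\alpha_k\operatorname{Tr}(\mathbf{W}_k\mathbf{E}_k)$ and observe that the $\mathbf{V}$-dependence enters only through $\mathbf{E}_k$ and that no term couples two distinct precoders $\mathbf{V}_k$ and $\mathbf{V}_j$; consequently the Hessian is block diagonal in the $\mathbf{V}_k$, and its diagonal block is the linear map $\mathbf{V}_k\mapsto 2\mathbf{A}\mathbf{V}_k$ with $\mathbf{A}\triangleq\sum_{m=1}^{K}\alpha_m\mathbf{H}_m^{H}\mathbf{U}_m\mathbf{W}_m\mathbf{U}_m^{H}\mathbf{H}_m$, which is exactly the linear part of the stated gradient. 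Because the blocks decouple under the inner product of Definition~\ref{def:inner_product}, the Lipschitz modulus of $\nabla_{\mathbf{V}}f$ equals the operator norm $2\|\mathbf{A}\|_2$, so the claim reduces to proving $\|\mathbf{A}\|_2\le\bar{\alpha}K\kappa/\sigma^2$.

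By the triangle inequality and $\alpha_m\le\bar\alpha$, it then suffices to show the per-user bound $\|\mathbf{H}_m^{H}\mathbf{U}_m\mathbf{W}_m\mathbf{U}_m^{H}\mathbf{H}_m\|_2\le\kappa/\sigma^2$ for every $m$. Since $\mathbf{H}_m^{H}\mathbf{H}_m\preceq\kappa\mathbf{I}$ by the definition of $\kappa$ in~\eqref{eq:sigma_HH}, this in turn follows once I establish the spectral bound $\mathbf{U}_m\mathbf{W}_m\mathbf{U}_m^{H}\preceq\tfrac{1}{\sigma^2}\mathbf{I}$. I expect this step to be the main obstacle: a direct estimate $\|\mathbf{U}_m\mathbf{W}_m\mathbf{U}_m^{H}\|_2\le\|\mathbf{U}_m\|_2^2\|\mathbf{W}_m\|_2$ is far too loose, because the eigenvalue bound of Lemma~\ref{lemma:lambda_ek} only gives $\|\mathbf{W}_m\|_2=\mathcal{O}(P_{\max}\kappa/\sigma^2)$, injecting a spurious $P_{\max}$ factor that is absent from $L_v$. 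The crux is to exhibit the cancellation of this factor.

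To do so, I would use the MMSE structure of the frozen variables. Writing $\mathbf{C}_m\triangleq\sum_{j\neq m}\mathbf{H}_m\mathbf{V}_j\mathbf{V}_j^{H}\mathbf{H}_m^{H}+\sigma_m^2\mathbf{I}$ for the interference-plus-noise covariance and $\mathbf{R}_m\triangleq\mathbf{C}_m+\mathbf{H}_m\mathbf{V}_m\mathbf{V}_m^{H}\mathbf{H}_m^{H}$ for the total covariance, the updates~\eqref{eq:Uk}--\eqref{eq:Wk} read $\mathbf{U}_m=\mathbf{R}_m^{-1}\mathbf{H}_m\mathbf{V}_m$ and $\mathbf{W}_m=\mathbf{E}_m^{-1}=(\mathbf{I}-\mathbf{U}_m^{H}\mathbf{H}_m\mathbf{V}_m)^{-1}$. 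Applying the push-through (Woodbury) identity to $\mathbf{R}_m^{-1}\mathbf{H}_m\mathbf{V}_m$ yields $\mathbf{U}_m=\mathbf{C}_m^{-1}\mathbf{H}_m\mathbf{V}_m\mathbf{E}_m$, hence $\mathbf{U}_m\mathbf{W}_m=\mathbf{C}_m^{-1}\mathbf{H}_m\mathbf{V}_m$. Substituting $\mathbf{U}_m^{H}=\mathbf{V}_m^{H}\mathbf{H}_m^{H}\mathbf{R}_m^{-1}$ and using $\mathbf{H}_m\mathbf{V}_m\mathbf{V}_m^{H}\mathbf{H}_m^{H}=\mathbf{R}_m-\mathbf{C}_m$ collapses the product to the clean identity
\begin{equation*}
\mathbf{U}_m\mathbf{W}_m\mathbf{U}_m^{H}=\mathbf{C}_m^{-1}-\mathbf{R}_m^{-1}.
\end{equation*}
Because $\mathbf{R}_m\succeq\mathbf{C}_m\succ\mathbf{0}$, this difference is positive semidefinite and dominated by $\mathbf{C}_m^{-1}$; since the noise floor forces $\mathbf{C}_m\succeq\sigma_m^2\mathbf{I}\succeq\sigma^2\mathbf{I}$, I obtain $\mathbf{U}_m\mathbf{W}_m\mathbf{U}_m^{H}\preceq\mathbf{C}_m^{-1}\preceq\tfrac{1}{\sigma^2}\mathbf{I}$, as required.

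A side benefit worth flagging is that this bound is entirely $P_{\max}$-free and, crucially, holds at the \emph{extrapolated} iterate $\hat{\mathbf{V}}^{t-1}$ even when it violates $\mathcal{P}_{\text{sum}}$, since $\mathbf{C}_m\succeq\sigma_m^2\mathbf{I}$ holds unconditionally; thus the estimate does not invoke Lemma~\ref{lemma:lambda_ek} and is robust to the extrapolation step. Chaining the pieces gives $\|\mathbf{A}\|_2\le\sum_{m=1}^{K}\alpha_m\,\kappa/\sigma^2\le\bar\alpha K\kappa/\sigma^2$, and therefore $\nabla_{\mathbf{V}}f$ is Lipschitz with constant $L=2\|\mathbf{A}\|_2\le 2\bar\alpha K\kappa/\sigma^2=L_v$, completing the argument.
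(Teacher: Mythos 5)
Your proof is correct, and its skeleton—reducing smoothness to the operator-norm bound $2\|\mathbf{A}\|_2 \le 2\bar{\alpha}K\kappa/\sigma^2$ via the block-diagonal quadratic structure, and then to the per-user bound $\|\mathbf{U}_m\mathbf{W}_m\mathbf{U}_m^{H}\|_2\le 1/\sigma^2$—coincides with the paper's (its matrix $\mathbf{F}_k$ is exactly your $2\mathbf{A}$). The difference lies entirely in how that per-user bound is established, and there your route is genuinely different. The paper gets it in one line from two facts: the ordering $\mathbf{E}_k\succeq\sigma_k^2\,\mathbf{U}_k^{H}\mathbf{U}_k$, read directly off the definition~\eqref{eq:Ek}, and the coupling $\mathbf{W}_k\mathbf{E}_k=\mathbf{I}$ from update~\eqref{eq:Wk}, whence $\lambda_{\max}(\mathbf{W}_k\mathbf{U}_k^{H}\mathbf{U}_k)\le\lambda_{\max}(\mathbf{W}_k\mathbf{E}_k)/\sigma_k^2=1/\sigma_k^2$ and, since $\mathbf{U}_k\mathbf{W}_k\mathbf{U}_k^{H}$ and $\mathbf{W}_k\mathbf{U}_k^{H}\mathbf{U}_k$ share the same nonzero eigenvalues, $\|\mathbf{U}_k\mathbf{W}_k\mathbf{U}_k^{H}\|_2\le 1/\sigma_k^2$. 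You instead derive the exact identity $\mathbf{U}_m\mathbf{W}_m\mathbf{U}_m^{H}=\mathbf{C}_m^{-1}-\mathbf{R}_m^{-1}$, which checks out: with $\mathbf{T}\triangleq\mathbf{V}_m^{H}\mathbf{H}_m^{H}\mathbf{C}_m^{-1}\mathbf{H}_m\mathbf{V}_m$, the push-through identity gives $\mathbf{E}_m=(\mathbf{I}+\mathbf{T})^{-1}$, hence $\mathbf{W}_m=\mathbf{I}+\mathbf{T}$, $\mathbf{U}_m\mathbf{W}_m=\mathbf{C}_m^{-1}\mathbf{H}_m\mathbf{V}_m$, and the claimed difference of inverses follows; domination by $\mathbf{C}_m^{-1}\preceq\sigma_m^{-2}\mathbf{I}$ then finishes the bound. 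The trade-off: the paper's eigenvalue argument is shorter and needs only $\mathbf{W}_k=\mathbf{E}_k^{-1}$—it would survive even if $\mathbf{U}_k$ were not the exact MMSE receiver—whereas your Woodbury step additionally requires $\mathbf{U}_m$ to be precisely update~\eqref{eq:Uk} (harmless here, since the algorithm produces both); in return you obtain a sharper structural fact, an exact closed form rather than a one-sided eigenvalue estimate, which makes the cancellation of the spurious $P_{\max}$ factor transparent rather than implicit. One small caveat: the robustness you flag—validity at the extrapolated, possibly infeasible $\hat{\mathbf{V}}^{t-1}$ without invoking Lemma~\ref{lemma:lambda_ek}—is not an advantage over the paper, because the paper's proof of this lemma also never uses Lemma~\ref{lemma:lambda_ek} (that lemma serves Lemma~\ref{lemma:limit_point}); both arguments are equally $P_{\max}$-free and extrapolation-robust.
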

The detailed proof is relegated to Appendix~\ref{append:proof_lemma_L_V}.
Lemma~\ref{lemma:L_V} establishes an upper bound $L_v$ on the Lipschitz constant of $f$ with respect to $ \mathbf{V} $. 
Since we replace the update of $\mathbf{V}$ from BCM to BCGD, while updates of $\mathbf{U}$ and $\mathbf{W}$ consistently minimize their respective subproblems, the sequence $\left\{(\mathbf{U}^t, \mathbf{W}^t, \mathbf{V}^t)\right\}_t$ is guaranteed to exhibit a monotonically non-increasing objective function $f$, provided that $\gamma_v \leq \frac{1}{L_v}$.
Based on Lemma~\ref{lemma:lambda_ek}~--~\ref{lemma:L_V}, we can establish that the A-MMMSE algorithm retains the same convergence property as WMMSE~\cite{shi2011iteratively}.
\begin{theorem}
    \label{theo:convergence}
    Any limit point $(\overline{\mathbf{U}}, \overline{\mathbf{W}}, \overline{\mathbf{V}})$ of the iterates generated by the A-MMMSE algorithm, with a single PGD step of size $0 < \gamma \leq \frac{\sigma^2}{2 \bar{\alpha} {KI} \kappa}$, is a stationary point of problem~\eqref{prob:wmmse_prob} and the corresponding $\overline{\mathbf{V}}$ is a stationary point of the original problem~\eqref{prob:ori_prob}. 
    Conversely, if $\overline{\mathbf{V}}$ is a stationary point of~\eqref{prob:ori_prob}, then $(\overline{\mathbf{U}}, \overline{\mathbf{W}}, \overline{\mathbf{V}})$ is a stationary point of~\eqref{prob:wmmse_prob}.
\end{theorem}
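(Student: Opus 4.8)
The plan is to follow the standard block coordinate descent convergence argument of Tseng and of Xu--Yin~\cite{xu2013block}, adapted to the hybrid update scheme where $\mathbf{U}$ and $\mathbf{W}$ are updated by exact block minimization (BCM) while $\mathbf{V}$ is updated by a single projected-gradient (BCGD) step. The backbone of the argument is a \emph{sufficient-decrease} inequality: I would first establish that each sweep of the algorithm strictly decreases the objective $f$ unless a fixed point is reached. For the BCM blocks $\mathbf{U}$ and $\mathbf{W}$, exact minimization trivially yields $f(\mathbf{U}^{t},\cdots)\le f(\mathbf{U}^{t-1},\cdots)$. For the $\mathbf{V}$ block, I would invoke Lemma~\ref{lemma:L_V}: since $f$ is $L_v$-smooth in $\mathbf{V}$, the descent lemma guarantees that a projected-gradient step with $0<\gamma\le 1/L_v=\frac{\sigma^2}{2\bar\alpha K\kappa}$ produces a decrease of at least $\frac{1}{2}\!\left(\frac{1}{\gamma}-L_v\right)\|\mathbf{V}^{t}-\mathbf{V}^{t-1}\|_F^2\ge 0$. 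Chaining these gives a monotonically non-increasing objective sequence. Combined with the existence of a limit point from Lemma~\ref{lemma:limit_point} and the fact that $f$ is bounded below (each $\operatorname{Tr}(\mathbf{W}_k\mathbf{E}_k)-\log\det\mathbf{W}_k$ is bounded thanks to the eigenvalue control of Lemma~\ref{lemma:lambda_ek}), the decrease telescopes to yield $\|\mathbf{V}^{t}-\mathbf{V}^{t-1}\|_F\to 0$.

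Next I would show that the successive differences in \emph{all} blocks vanish along the iterates, i.e. the iterate gap of $\mathbf{U}$ and $\mathbf{W}$ also tends to zero. This uses the closed-form updates~\eqref{eq:Uk} and~\eqref{eq:Wk}, which are continuous (indeed smooth) functions of the current $\mathbf{V}$; since consecutive $\mathbf{V}$-iterates converge together and all iterates lie in a compact region (the power constraint bounds $\mathbf{V}$, Lemma~\ref{lemma:lambda_ek} bounds $\mathbf{W}$ away from singularity and from above, and $\mathbf{U}$ is then bounded), the block gaps for $\mathbf{U}$ and $\mathbf{W}$ are squeezed to zero as well. Passing to a convergent subsequence $(\mathbf{U}^{t_j},\mathbf{W}^{t_j},\mathbf{V}^{t_j})\to(\overline{\mathbf{U}},\overline{\mathbf{W}},\overline{\mathbf{V}})$ and using continuity of the update maps, I would conclude that the limit point is a fixed point of every block update: $\overline{\mathbf{U}}$ and $\overline{\mathbf{W}}$ satisfy the exact-minimization optimality conditions, and $\overline{\mathbf{V}}=\Pi_{\mathcal{P}_{\text{sum}}}(\overline{\mathbf{V}}-\gamma\nabla_v f(\overline{\mathbf{V}}))$. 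The latter projection fixed-point equation is precisely the variational inequality $\langle\nabla_{\mathbf{V}}f(\overline{\mathbf{U}},\overline{\mathbf{W}},\overline{\mathbf{V}}),\mathbf{V}-\overline{\mathbf{V}}\rangle\ge 0$ for all feasible $\mathbf{V}$, which together with the exact-block optimality for $\mathbf{U}$ and $\mathbf{W}$ establishes that $(\overline{\mathbf{U}},\overline{\mathbf{W}},\overline{\mathbf{V}})$ meets Definition~\ref{def:nash_point}, i.e. it is a stationary point of~\eqref{prob:wmmse_prob}.

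The final step is the equivalence between stationarity of~\eqref{prob:wmmse_prob} and stationarity of the original WSR problem~\eqref{prob:ori_prob}. Here I would reuse the classical WMMSE equivalence of Shi \emph{et al.}~\cite{shi2011iteratively}: at the limit point the $\mathbf{U}$ and $\mathbf{W}$ blocks take their optimal closed forms~\eqref{eq:Uk},~\eqref{eq:Wk}, and substituting these back into $f$ recovers (up to sign and additive constants) the weighted sum-rate objective, with the gradients in $\mathbf{V}$ matching exactly. Concretely, with $\overline{\mathbf{W}}_k=(\mathbf{I}-\overline{\mathbf{U}}_k^H\mathbf{H}_k\overline{\mathbf{V}}_k)^{-1}$ the stationarity condition $\nabla_{\mathbf{V}}f=0$ (projected) coincides with the KKT condition of~\eqref{prob:ori_prob}, giving both directions of the claimed equivalence. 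I expect the main obstacle to be the \emph{compactness/boundedness bookkeeping} that legitimizes passing continuity through the limit: one must verify that $\overline{\mathbf{W}}$ stays invertible so that~\eqref{eq:Wk} and the substitution argument remain valid, and that $\mathbf{I}-\overline{\mathbf{U}}_k^H\mathbf{H}_k\overline{\mathbf{V}}_k$ does not degenerate. This is exactly where Lemma~\ref{lemma:lambda_ek} is indispensable, since its lower bound $\lambda_{\min}(\mathbf{E}_k)\ge\frac{\sigma_k^2}{P_{\max}\kappa+\sigma_k^2}>0$ furnishes a uniform upper bound on $\|\mathbf{W}_k\|_2$ and keeps the relevant matrices uniformly well-conditioned throughout the iterations and in the limit.
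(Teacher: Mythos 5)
Your proposal is correct in outline and reaches the theorem by a route whose core technical step differs from the paper's. The paper never establishes that successive iterates approach each other: it only uses monotonicity of $f$ (exact minimization for $\mathbf{U},\mathbf{W}$, the descent property of the PGD step for $\mathbf{V}$ via Lemma~\ref{lemma:L_V} and \cite{nesterov2013introductory}), extracts a convergent subsequence via Lemma~\ref{lemma:limit_point}, passes block optimality of $\mathbf{U}$ and $\mathbf{W}$ to the limit in the style of \cite{bertsekas1997nonlinear}, and then handles $\mathbf{V}$ by contradiction: if $\langle\nabla_{v} f(\overline{\mathbf{X}}),\mathbf{V}-\overline{\mathbf{V}}\rangle<0$ for some feasible $\mathbf{V}$, one further PGD step launched from $\overline{\mathbf{X}}$ would strictly decrease $f$ below its limiting value, which is incompatible with $\overline{\mathbf{X}}$ being a limit point of a monotonically non-increasing sequence of objective values. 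You instead prove a quantitative sufficient-decrease inequality, telescope it against boundedness from below to obtain $\|\mathbf{V}^{t}-\mathbf{V}^{t-1}\|_F\to 0$, propagate the vanishing gap to the $\mathbf{U}$- and $\mathbf{W}$-blocks through continuity of the closed-form maps \eqref{eq:Uk}--\eqref{eq:Wk} (with Lemma~\ref{lemma:lambda_ek} guaranteeing the relevant matrices stay uniformly well conditioned), and characterize the limit as a fixed point of every block update, identifying the projection fixed-point equation with the variational inequality of Definition~\ref{def:nash_point}. Your route proves strictly more (vanishing successive differences) at the cost of more bookkeeping; the paper's contradiction argument is lighter but delivers only subsequential stationarity. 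Both arguments lean on Lemmas~\ref{lemma:lambda_ek}--\ref{lemma:L_V}, and both delegate the equivalence between stationarity of \eqref{prob:wmmse_prob} and of \eqref{prob:ori_prob} to the argument of \cite{shi2011iteratively}, as does the paper verbatim.

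One quantitative slip needs fixing, because it bites exactly at the step size the theorem permits. Your claimed per-step decrease $\tfrac{1}{2}\bigl(\tfrac{1}{\gamma}-L_v\bigr)\|\mathbf{V}^{t}-\mathbf{V}^{t-1}\|_F^{2}$ vanishes identically at $\gamma=1/L_v=\tfrac{\sigma^2}{2\bar{\alpha}K\kappa}$, and there your telescoping argument yields no information about the iterate gap. The standard projected-gradient estimate is stronger: combining the descent lemma with the projection inequality $\langle\nabla_{v}f(\mathbf{V}^{t-1}),\mathbf{V}^{t}-\mathbf{V}^{t-1}\rangle\le-\tfrac{1}{\gamma}\|\mathbf{V}^{t}-\mathbf{V}^{t-1}\|_F^{2}$ gives
\begin{equation*}
f(\mathbf{V}^{t-1})-f(\mathbf{V}^{t})\;\ge\;\Bigl(\tfrac{1}{\gamma}-\tfrac{L_v}{2}\Bigr)\bigl\|\mathbf{V}^{t}-\mathbf{V}^{t-1}\bigr\|_F^{2}\;\ge\;\tfrac{L_v}{2}\bigl\|\mathbf{V}^{t}-\mathbf{V}^{t-1}\bigr\|_F^{2},
\end{equation*}
valid for all $0<\gamma\le 1/L_v$; with this constant your argument covers the whole admissible range. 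A smaller remark: like the paper, you silently set the extrapolation weight $\omega$ to zero. This is consistent with the theorem's phrase ``a single PGD step,'' but it is worth stating explicitly, since Algorithm~\ref{alg:inexact_wmmse} as written extrapolates and neither your argument nor the paper's covers $\omega>0$.
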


\begin{proof}
Let $\mathbf{Z}_{u}^{t} \triangleq (\mathbf{U}^{t+1}, \mathbf{W}^{t}, \mathbf{V}^{t})$, $\mathbf{Z}_{w}^{t} \triangleq (\mathbf{U}^{t+1}, \mathbf{W}^{t+1}, \mathbf{V}^{t})$, and $\mathbf{Z}_{v}^{t} \triangleq (\mathbf{U}^{t+1}, \mathbf{W}^{t+1}, \mathbf{V}^{t+1})$, where $t$ denotes the iteration index. 
At each A-MMMSE iteration, the objective function $f$ is non-increasing. Specifically,
\begin{equation}
\label{prob:min_zu}
f\left(\mathbf{Z}_{u}^{t}\right) \triangleq \min _{\mathbf{\xi}} f\left(\mathbf{\xi}, \mathbf{W}^{t}, \mathbf{V}^{t}\right)
\end{equation}
and
\begin{equation}
\label{prob:min_zw}
f\left(\mathbf{Z}_{w}^{t}\right) \triangleq \min _{\xi} f\left(\mathbf{U}^{t+1}, \mathbf{\xi}, \mathbf{V}^{t}\right),
\end{equation}
which imply
\begin{equation}
\label{ineq:mono_1}
f\left(\mathbf{Z}_{v}^{t-1}\right) \geq f\left(\mathbf{Z}_{u}^{t}\right) \geq f\left(\mathbf{Z}_{w}^{t}\right) .
\end{equation}
Applying a single PGD step with step size \(\gamma \leq \frac{\sigma^2}{2\bar{\alpha} {KI} \kappa}\) to the subproblem
\begin{equation}
\label{prob:min_zp}
\begin{array}{l}
\underset{\mathbf{\xi}}{\min}  f\left(\mathbf{U}^{t+1}, \mathbf{W}^{t+1}, \mathbf{\xi}\right) \\
\text { s.t. } \operatorname{Tr}\left(\mathbf{\xi} \mathbf{\xi}^{H}\right) \leq {P_{\xi}},
\end{array}
\end{equation}
{where \(P_{\xi}\) denotes the maximum power budget that \(\boldsymbol{\xi}\) must satisfy}, and leveraging the $L$-smoothness of $f(\mathbf{U}^{t+1}, \mathbf{W}^{t+1}, \mathbf{V})$ with $L \leq \frac{2\bar{\alpha} {KI} \kappa}{\sigma^2}$ (as established in Lemma~\ref{lemma:L_V}), it follows from~\cite{nesterov2013introductory} that
\begin{equation}
\label{ineq:mono_2}
f\left(\mathbf{Z}_{w}^{t}\right) \geq f\left(\mathbf{Z}_{v}^{t}\right).
\end{equation}
Combining~\eqref{ineq:mono_1} and~\eqref{ineq:mono_2} yields
\begin{equation*}
f\left(\mathbf{Z}_{v}^{t-1}\right) \geq f\left(\mathbf{Z}_{u}^{t}\right) \geq f\left(\mathbf{Z}_{w}^{t}\right) \geq f\left(\mathbf{Z}_{v}^{t}\right) \quad  \forall t.
\end{equation*}

Let $\overline{\mathbf{X}} = (\overline{\mathbf{U}}, \overline{\mathbf{W}}, \overline{\mathbf{V}})$ be a limit point of the sequence $\{\mathbf{X}^t\}_t$, as guaranteed by Lemma~\ref{lemma:limit_point}. 
This implies that the sequences $\{\mathbf{U}^t\}_t$, $\{\mathbf{W}^t\}_t$, and $\{\mathbf{V}^t\}_t$ admit limit points $\overline{\mathbf{U}}$, $\overline{\mathbf{W}}$, and $\overline{\mathbf{V}}$, respectively.
For $\mathbf{U}$ and $\mathbf{W}$, the convergence proof follows directly from the analysis of BCD methods in~\cite{bertsekas1997nonlinear}, since subproblems~\eqref{prob:min_zu} and~\eqref{prob:min_zw} are solved optimally. Restricting attention to a convergent subsequence $\{t_j\}$, we have
\begin{equation*}
f\left(\mathbf{Z}_{u}^{t_{j}}\right) \leq f\left(\mathbf{U}, \mathbf{W}^{t_{j}}, \mathbf{V}^{t_{j}}\right), \quad \forall \mathbf{U} \text { and } \forall j \geq 1
\end{equation*}
\begin{equation*}
f\left(\mathbf{Z}_{w}^{t_{j}}\right) \leq f\left(\mathbf{U}^{t_{j}+1}, \mathbf{W}, \mathbf{V}^{t_{j}}\right),\quad \forall \mathbf{W} \text { and } \forall j \geq 1 .
\end{equation*}
Taking the limit as $j \to \infty$, we obtain
\begin{equation*}
f(\overline{\mathbf{X}}) \leq f(\mathbf{U}, \overline{\mathbf{W}}, \overline{\mathbf{V}}),\quad \forall \mathbf{U},
\end{equation*}
\begin{equation*}
f(\overline{\mathbf{X}}) \leq f(\overline{\mathbf{U}}, \mathbf{W}, \overline{\mathbf{V}}),\quad \forall \mathbf{W}.
\end{equation*}
Since \(f(\mathbf{U}, \overline{\mathbf{W}}, \overline{\mathbf{V}})\) and \(f(\overline{\mathbf{U}}, \mathbf{W}, \overline{\mathbf{V}})\) are differentiable in \(\mathbf{U}\) and \(\mathbf{W}\), respectively, and the optima lie in the interior of the domain, the gradients must vanish at \(\overline{\mathbf{X}}\):
\begin{equation}
\label{ineq:u_stat}
\nabla_{u} f(\overline{\mathbf{X}})=\mathbf{0} \text { and }\left\langle\nabla_{u} f(\overline{\mathbf{X}}), \mathbf{U}-\overline{\mathbf{U}}\right\rangle=0,
\end{equation}
\begin{equation}
\label{ineq:w_stat}
\nabla_{w} f(\overline{\mathbf{X}})=\mathbf{0} \text { and }\left\langle\nabla_{w} f(\overline{\mathbf{X}}), \mathbf{W}-\overline{\mathbf{W}}\right\rangle=0.
\end{equation}
Hence, $\overline{\mathbf{U}}$ and $\overline{\mathbf{W}}$ are stationary points of~\eqref{prob:min_zu} and~\eqref{prob:min_zw}, respectively.
It remains to show that \(\overline{\mathbf{V}}\) is a stationary point of~\eqref{prob:min_zp}. 
Let \(\mathbf{V}^+\) denote the next gradient descent iterate starting from \(\overline{\mathbf{X}}\).
By Lemma~\ref{lemma:L_V} and the convexity of \(f\) in \(\mathbf{V}\), it follows from~\cite{nesterov2013introductory} that for any \(0 < \gamma_v \leq 1/L_v\),
\begin{equation*}
f\left(\overline{\mathbf{U}}, \overline{\mathbf{W}}, \mathbf{V}^{+}\right) \leq f(\overline{\mathbf{U}}, \overline{\mathbf{W}}, \overline{\mathbf{V}})-\frac{\gamma_{v}\left\|\nabla_{v} f(\overline{\mathbf{X}})\right\|^{2}}{2}.
\end{equation*}
Therefore, it must hold that
\begin{equation*}
\left\langle\nabla_{v} f(\overline{\mathbf{X}}), \mathbf{V}-\overline{\mathbf{V}}\right\rangle\geq 0,
\end{equation*}
since otherwise we would have
\begin{equation}
\label{ineq:p_stat}
f\left(\overline{\mathbf{U}}, \overline{\mathbf{W}}, \mathbf{V}^{+}\right)<f(\overline{\mathbf{U}}, \overline{\mathbf{W}}, \overline{\mathbf{V}}),
\end{equation}
contradicting the assumption that $\overline{\mathbf{V}}$ is a limit point of $\{\mathbf{V}^t\}$.
Combining~\eqref{ineq:u_stat},~\eqref{ineq:w_stat}, and~\eqref{ineq:p_stat}, we conclude that $\overline{\mathbf{X}}$ is a stationary point of problem~\eqref{prob:wmmse_prob}, i.e.,
\begin{equation}
\left\langle\nabla f(\overline{\mathbf{X}}), \mathbf{X}-\overline{\mathbf{X}}\right\rangle\geq 0.
\end{equation}

Finally, we note that the equivalence between problems~\eqref{prob:ori_prob} and~\eqref{prob:wmmse_prob}, established in~\cite{shi2011iteratively}, remains unaffected by the algorithmic approach used to solve~\eqref{prob:wmmse_prob}. 
Therefore, to show that $\overline{\mathbf{V}}$ is a stationary point of~\eqref{prob:ori_prob} if and only if $\overline{\mathbf{X}} = (\overline{\mathbf{U}}, \overline{\mathbf{W}}, \overline{\mathbf{V}})$ is a stationary point of~\eqref{prob:wmmse_prob} for some $\overline{\mathbf{U}}$ and $\overline{\mathbf{W}}$, the second part of the proof in~\cite{shi2011iteratively} applies verbatim.
\end{proof}

\section{Simulation Results}
\label{sec:experiments}

\subsection{Setup}
    {We consider a downlink MU-MIMO system consisting of $K$ BSs. 
    Each BS is equipped with $M$ antennas and serves $I$ users in its own cell, where each user is equipped with $N$ antennas.
    The simulation setup is configured as follows:}
\begin{itemize}
    \item All weights {$\alpha_{i_k}$ for $i_k\in\mathcal{I}$} are set to 1.  
    \item The bisection method in \texttt{WMMSE}, \texttt{MMMSE}, {\texttt{R-WMMSE}}, and {\texttt{R-MMMSE}} terminates when the width of the search interval falls below $10^{-4}$ or when the number of iterations exceeds 100.
    \item The precoding matrices $\mathbf{V}$ for all algorithms are initialized to feasible values via projection of matrices with entries drawn from the complex Gaussian distribution with zero mean and unit variance.
    \item The channel coefficients are modeled as i.i.d. circularly symmetric complex Gaussian (Rayleigh) random variables with zero mean and unit variance.
    \item The sum power budget of each BS is set to {$P_{k} = 10 \; [\text{W}]$, for $k = 1, \dots, K$}.
    \item The noise power is assumed equal for all users and is defined as {$\sigma^{2}=10^{\frac{1}{IK} \sum_{(l,j)} \log _{10} \frac{1}{N}\left\|\mathbf{H}_{l_jj}\right\|_{F}^{2}} \times 10^{-\frac{\mathrm{SNR}}{10}}$}, where signal-to-noise ratio (SNR) denotes the average received signal-to-noise ratio per user without precoding.
    \item All \texttt{MMMSE}-based methods use $\epsilon_1=0.1$.
    \item All simulation results are averaged over 100 randomly generated channel realizations.
\end{itemize}

All experiments and algorithms in this work were implemented in Python 3.10.0 and executed on a computing platform equipped with an Intel Xeon 2.10 GHz CPU and an NVIDIA RTX A6000 GPU.
Our code is available at \href{https://github.com/NetSysOpt/A-MMMSE}{https://github.com/NetSysOpt/A-MMMSE}.

\subsection{Baselines}
In our simulations, the following algorithms are compared:  
\begin{itemize}
    \item \texttt{WMMSE}~\cite{shi2011iteratively}:A classical algorithm for WSR maximization, as detailed in Section~\ref{sec:wmmse}.
    \item \texttt{NQT}~\cite{shen2024accelerating}: An improved method based on quadratic transform that eliminates matrix inversion during precoder updates by optimizing a nonhomogeneous bound. 
    This approach, however, is only applicable when the number of data streams satisfies $d=1$.
    \item \texttt{R-WMMSE}~\cite{zhao2023rethinking}: An accelerated variant of \texttt{WMMSE} that reduces the computational complexity from cubic to linear in the number of BS antennas by leveraging low-dimensional subspace structures of stationary points.
    \item \texttt{MMMSE}: A two-stage method that first solves problem~\eqref{prob:mmse_prob}, then switches to problem~\eqref{prob:wmmse_prob} once the relative change in WSR falls below $\epsilon_1$, and terminates when the change drops below $\epsilon_2$.
    \item \texttt{R-MMMSE}: An extension of \texttt{MMMSE} that integrates the low-dimensional subspace strategy used in \texttt{R-WMMSE}.
    \item {\texttt{A-WMMSE}: An enhanced version of \texttt{WMMSE} that employs the update rule in~\eqref{eq:bcd_proxi_linear} for $\mathbf{V}$ and incorporates an extrapolation technique.} 
    \item \texttt{A-MMMSE}: An enhanced version of \texttt{MMMSE} that employs the update rule in~\eqref{eq:bcd_proxi_linear} for $\mathbf{V}$ and incorporates an extrapolation technique. 
\end{itemize}

\subsection{Convergence Performance}
We first evaluate the convergence behavior of each algorithm by fixing the number of iterations and plotting the WSR, measured in bits per channel use (bpcu), for all methods in Fig.~\ref{fig:convg}. 
The vertical dashed lines indicate the average iteration number where the corresponding \texttt{MMMSE}-based algorithms transition from Stage I to Stage II. 
Specifically, iterations before the vertical line correspond to solving problem~\eqref{prob:mmse_prob}, while iterations after the line solve problem~\eqref{prob:wmmse_prob}.
In Fig.~\ref{fig:convg_sub1} and Fig.~\ref{fig:convg_sub2}, the transition iteration numbers for \texttt{MMMSE} and \texttt{R-MMMSE} coincide, resulting in overlapping vertical lines. 
Due to the Lagrangian dual transformation employed in \texttt{NQT}, this method is only applicable to single-data-stream scenarios ($d = 1$). 

\begin{figure}
    \centering
    \subfigure[{$K=2$, $I=8$, $M=128$, $N=4$, $d=1$, SNR=0dB.}]{\includegraphics[width=0.48\textwidth]{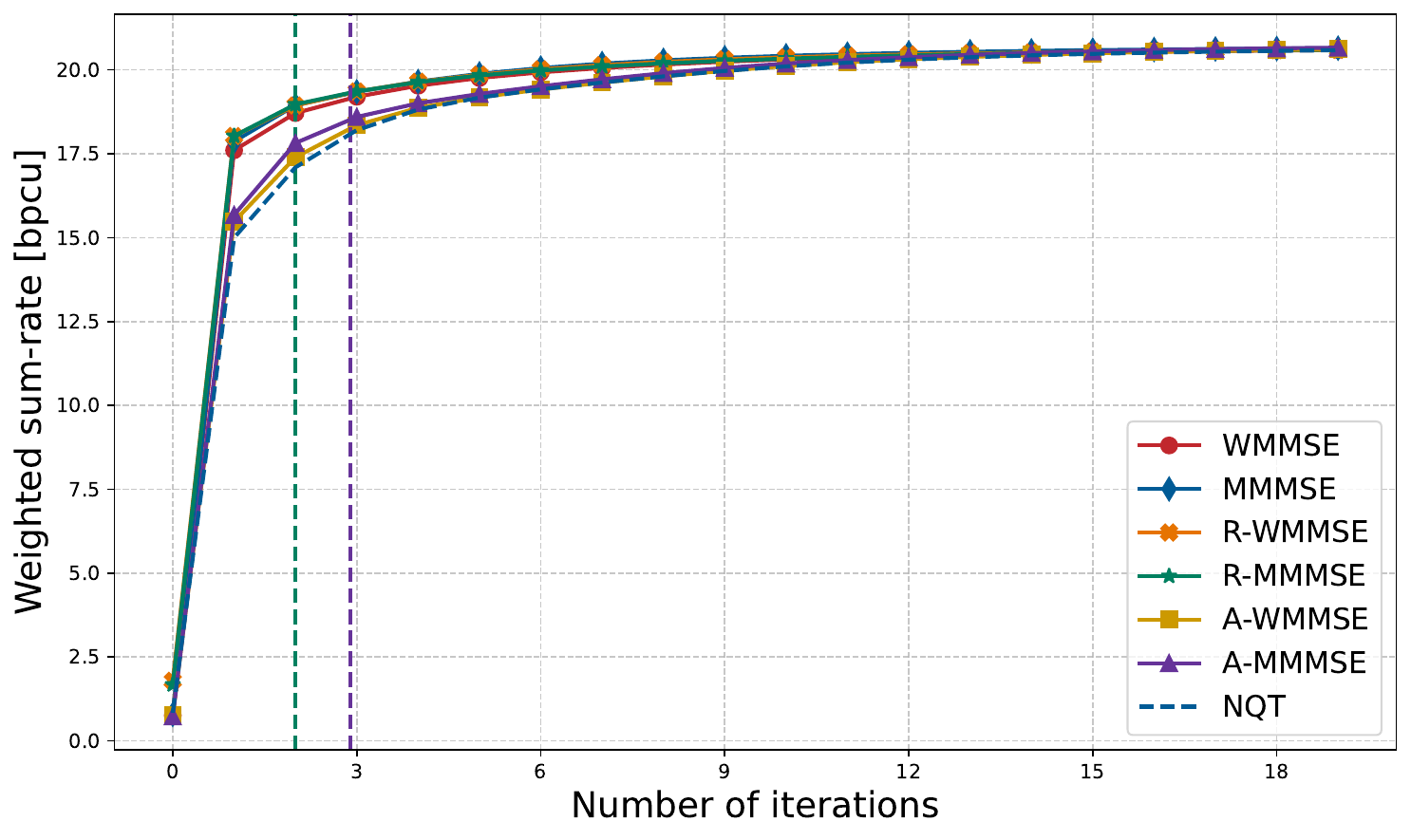} \label{fig:convg_sub1}}
    \subfigure[{$K=4$, $I=12$, $M=256$, $N=4$, $d=4$, SNR=10dB.}]{\includegraphics[width=0.48\textwidth]{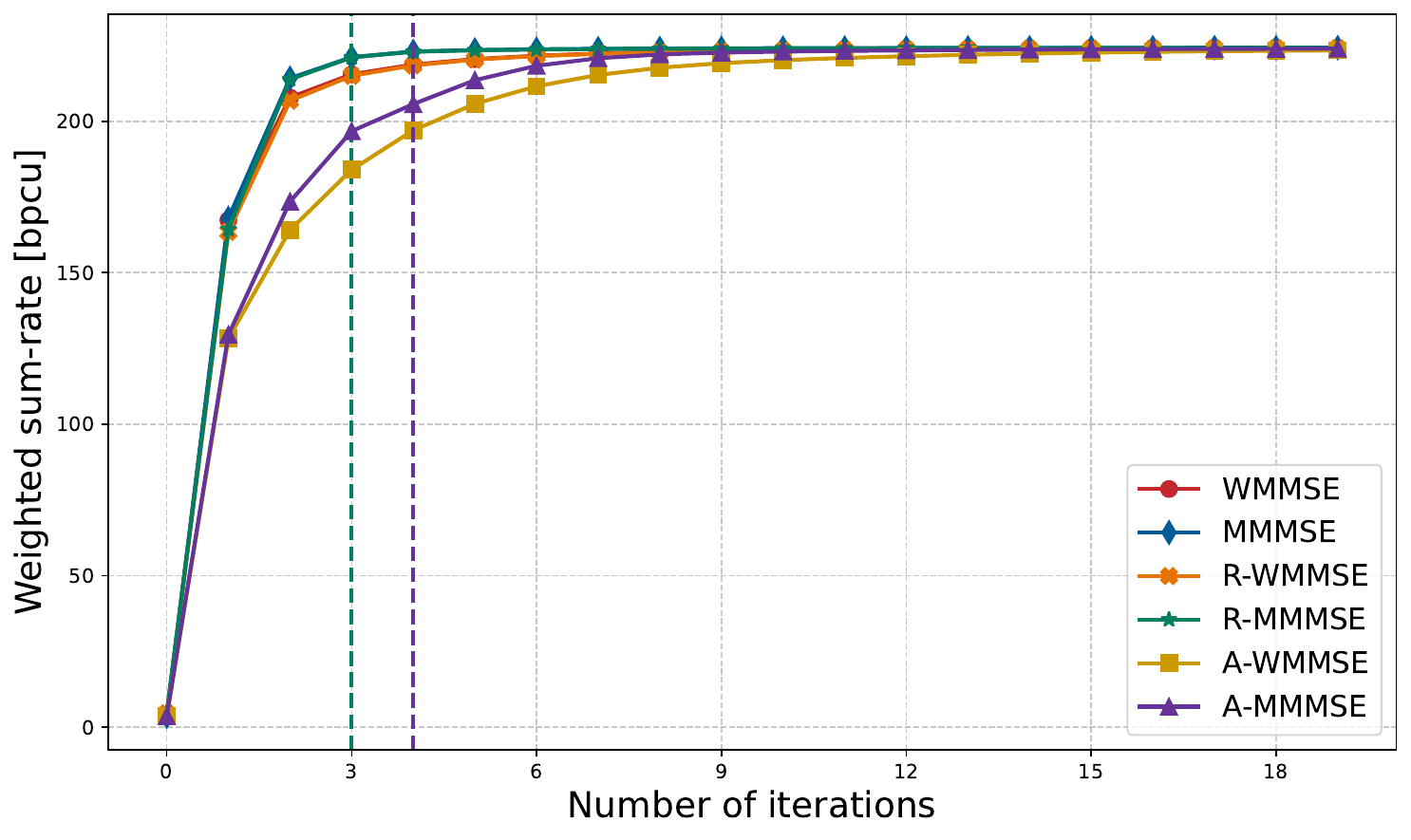} \label{fig:convg_sub2}}
    \caption{{Convergence performance of different algorithms with the number of iterations fixed at 20.}} 
\label{fig:convg}
\end{figure}

{
The results show that all algorithms converge to almost the same optimal WSR value.
Compared with the traditional BCM method, all gradient-based algorithms (\texttt{NQT}, \texttt{A-WMMSE}, \texttt{A-MMMSE}) exhibit relatively slower convergence rates due to their first-order update strategy for $\mathbf{V}$.
Our proposed mixed weighted-unweighted warm-start method consistently improves convergence, particularly in larger-scale scenarios. 
Specifically, for the case $d = 1$ (Fig.~\ref{fig:convg_sub1}), we have $M = K \cdot I \cdot N$ in this configuration, so \texttt{R-WMMSE} does not demonstrate a faster convergence rate than \texttt{WMMSE}. 
Because the primary modification of the mixed weighted-unweighted warm-start method lies in $\mathbf{W}$, the observed improvement is relatively modest when $d = 1$. 
Nevertheless, \texttt{A-MMMSE} still exhibits a faster convergence rate than \texttt{A-WMMSE}. 
Meanwhile, due to the equivalence between QT and WMMSE, the convergence rates of \texttt{NQT} and \texttt{A-WMMSE} are similar and remain slower than that of \texttt{A-MMMSE}.
For the case $d = 4$ (Fig.~\ref{fig:convg_sub2}), both the number of BS antennas $M$ and the number of users $I$ are increased. 
In this setting, the total number of receive antennas across users remains close to the number of transmit antennas per BS, so \texttt{R-WMMSE} again exhibits a convergence rate similar to that of \texttt{WMMSE}. 
Under these conditions, the proposed mixed weighted-unweighted warm-start method yields a substantial improvement in convergence speed. 
Both \texttt{MMMSE} and \texttt{R-MMMSE} achieve similarly fast and optimal convergence rates. 
The vertical dashed lines indicate that only a few iterations of the first-stage MMSE are needed to achieve a significant improvement in convergence rate.
Furthermore, compared with \texttt{A-WMMSE}, which relies solely on hybrid BCD acceleration, \texttt{A-MMMSE} demonstrates a more pronounced convergence rate improvement than in the low-stream-number case.
}

\begin{figure}
    \centering
    \subfigure[{$K=2$, $I=8$, $M=128$, $N=4$, $d=1$, SNR=0dB.}]{\includegraphics[width=0.48\textwidth]{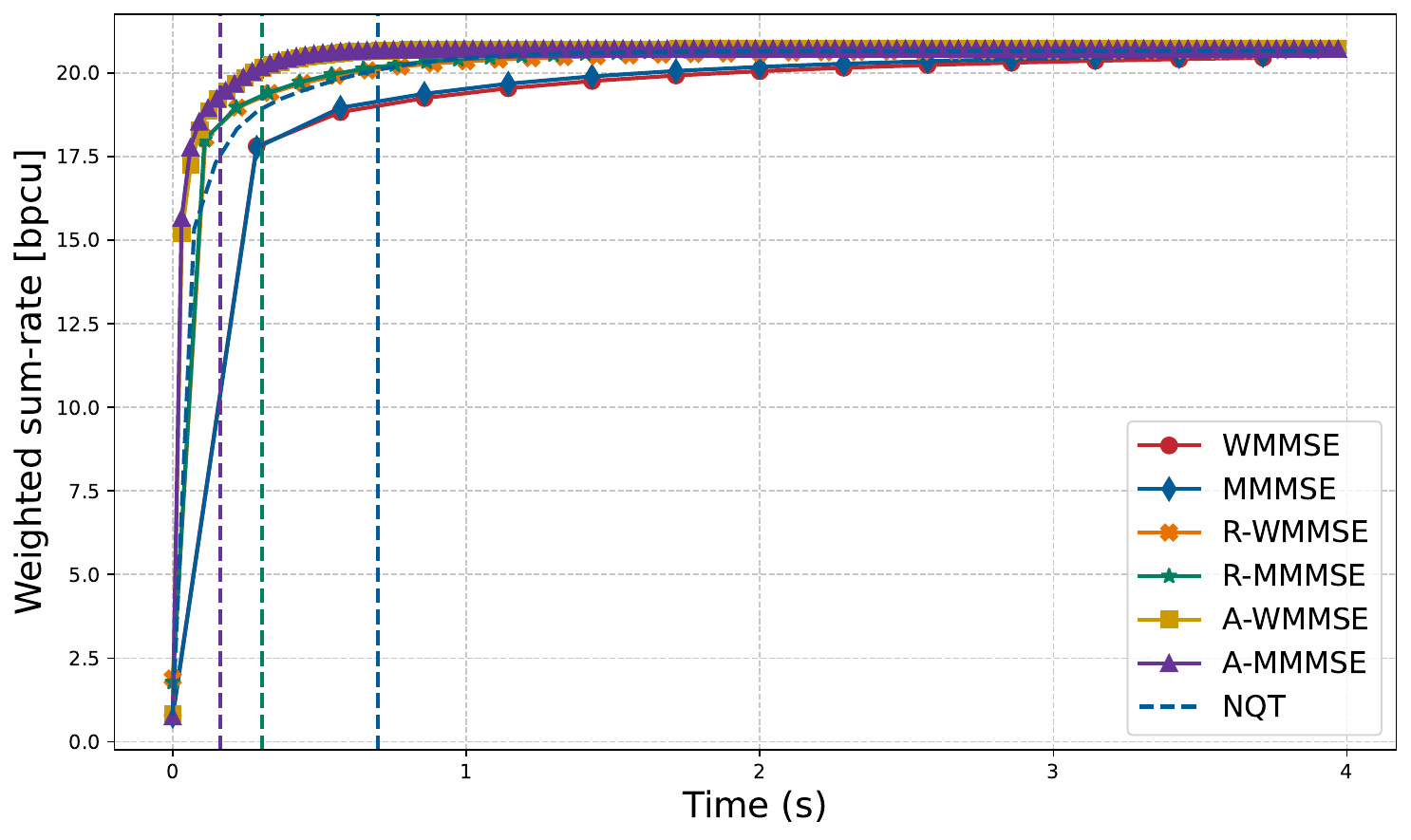} \label{fig:time_convg_sub1}}
    \subfigure[{$K=4$, $I=12$, $M=256$, $N=4$, $d=4$, SNR=10dB.}]{\includegraphics[width=0.48\textwidth]{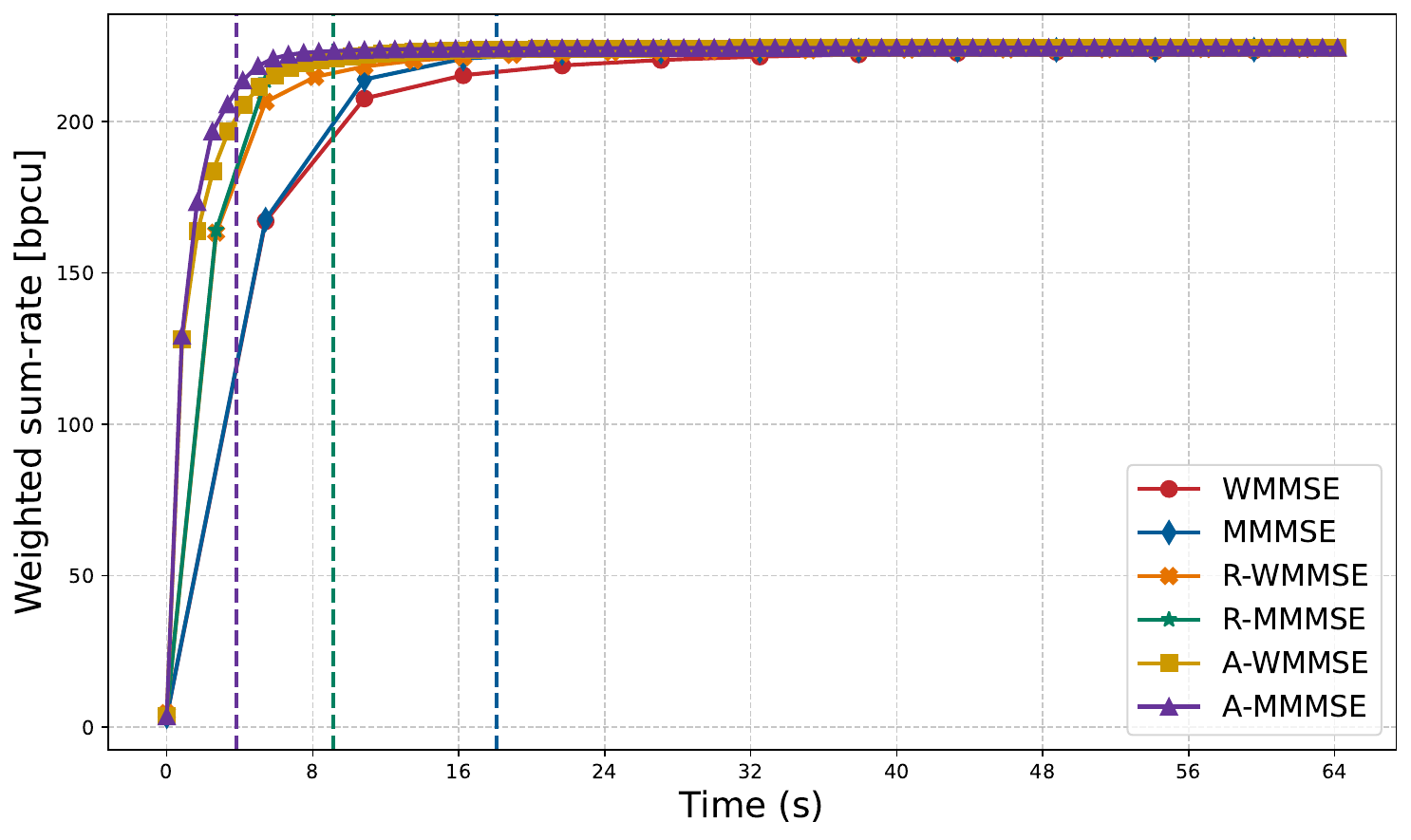} \label{fig:time_convg_sub2}}
    \caption{{Convergence performance of different algorithms under a fixed runtime.}} 
\label{fig:time_convg}
\end{figure}

{
Next, we present the convergence performance of all algorithms under a fixed runtime in Fig.~\ref{fig:time_convg}.
Although gradient-based methods exhibit slower convergence rates when the number of iterations is fixed, they demonstrate significantly faster convergence under a fixed runtime due to their lower per-iteration computational cost. 
For the case $d = 1$, \texttt{A-MMMSE}, \texttt{A-WMMSE}, and \texttt{NQT} have comparable per-iteration costs and thus exhibit similar convergence behavior, all of which significantly outperform the traditional BCM-based approach. 
As the system scale increases, our proposed \texttt{A-MMMSE} achieves the fastest convergence. 
Consistent with the observations in Fig.~\ref{fig:convg_sub2}, the mixed weighted-unweighted warm-start method brings substantial improvements to all WMMSE-based methods. 
The positions of the vertical dashed lines indicate that algorithms transitioning earlier to the second stage have lower per-iteration runtime. 
Consequently, \texttt{A-MMMSE} incurs lower cost than \texttt{R-MMMSE}, while \texttt{MMMSE} has the highest cost. 
Furthermore, although \texttt{R-WMMSE} exhibits a convergence rate similar to that of \texttt{WMMSE} in Fig.~\ref{fig:convg_sub2}, its computational advantage in scenarios with a large number of BS antennas leads to significantly faster convergence in terms of runtime. 
\texttt{R-MMMSE} achieves a faster convergence rate than \texttt{R-WMMSE}, but due to the cubic complexity of its subspace structure in the total number of users, it remains slower than \texttt{A-WMMSE} and \texttt{A-MMMSE}. 
The combination of the hybrid BCD method and the mixed weighted-unweighted warm-start method ultimately enables \texttt{A-MMMSE} to achieve the fastest convergence rate.
}

\begin{figure}[htbp]
    \centering
    \subfigure[{Average CPU time.}]{\includegraphics[width=0.48\textwidth]{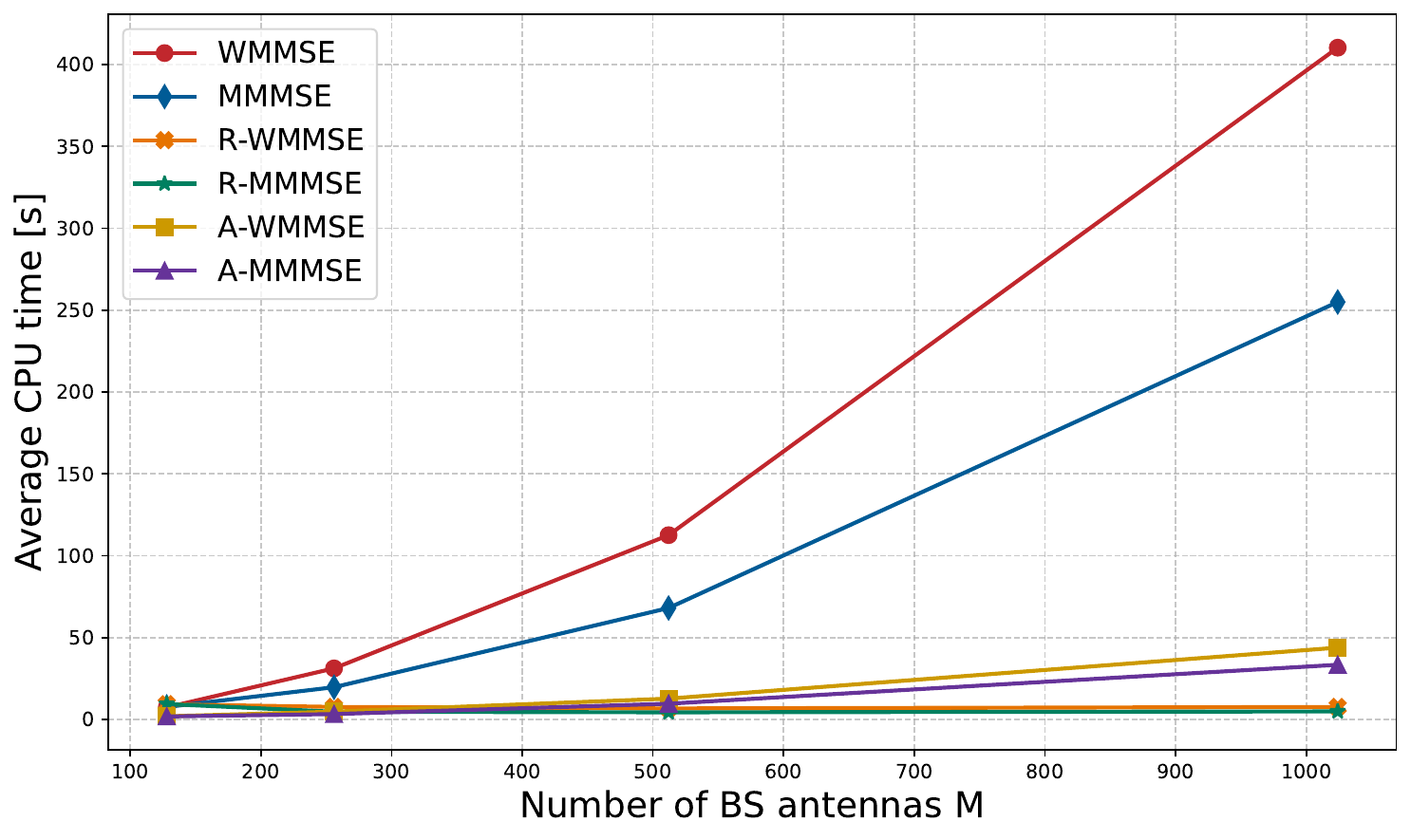} \label{fig:t_cpu_time}}
    \subfigure[{Weighted sum-rate.}]{\includegraphics[width=0.48\textwidth]{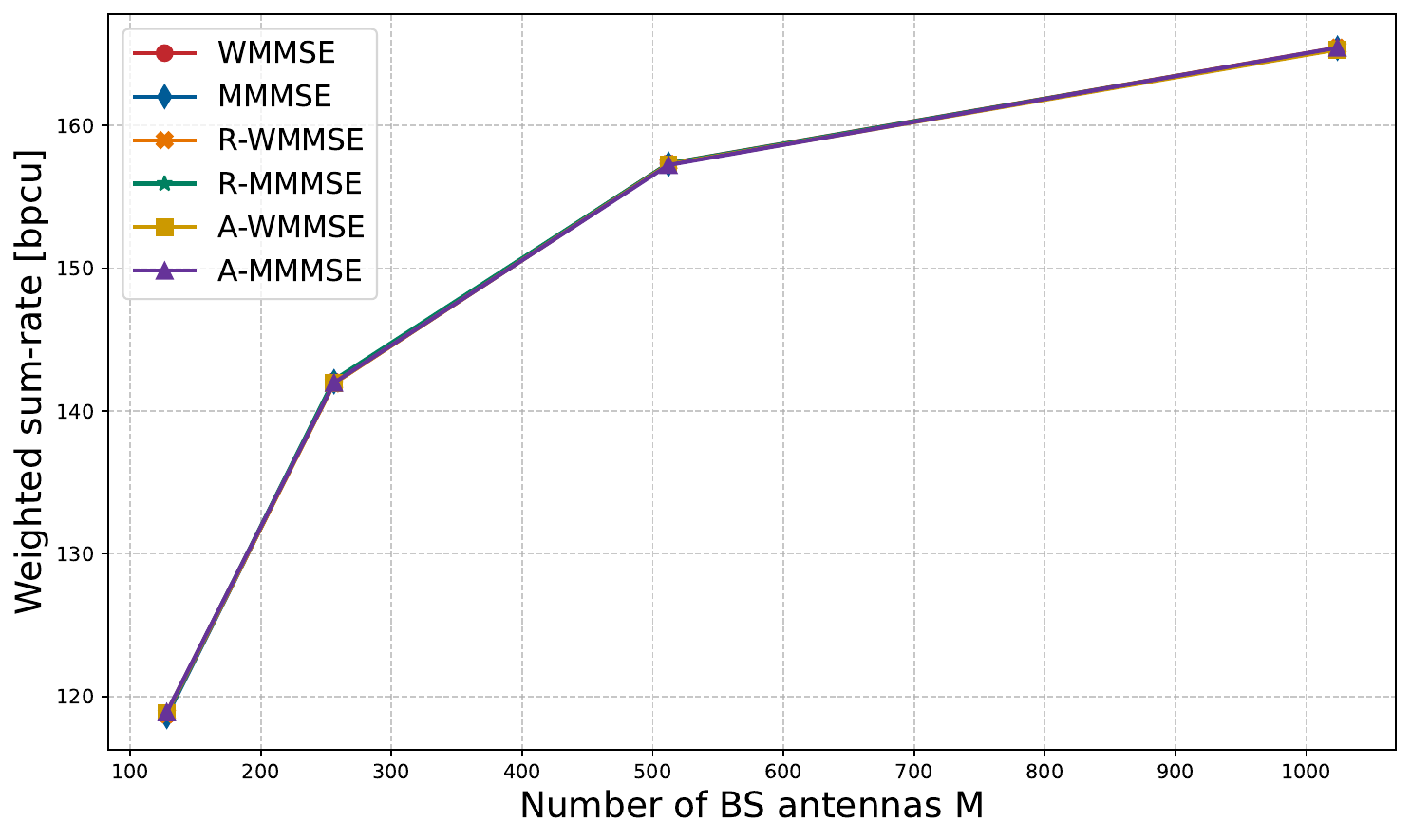} \label{fig:t_sum_rate}}
    \subfigure[{Iterations.}]{\includegraphics[width=0.48\textwidth]{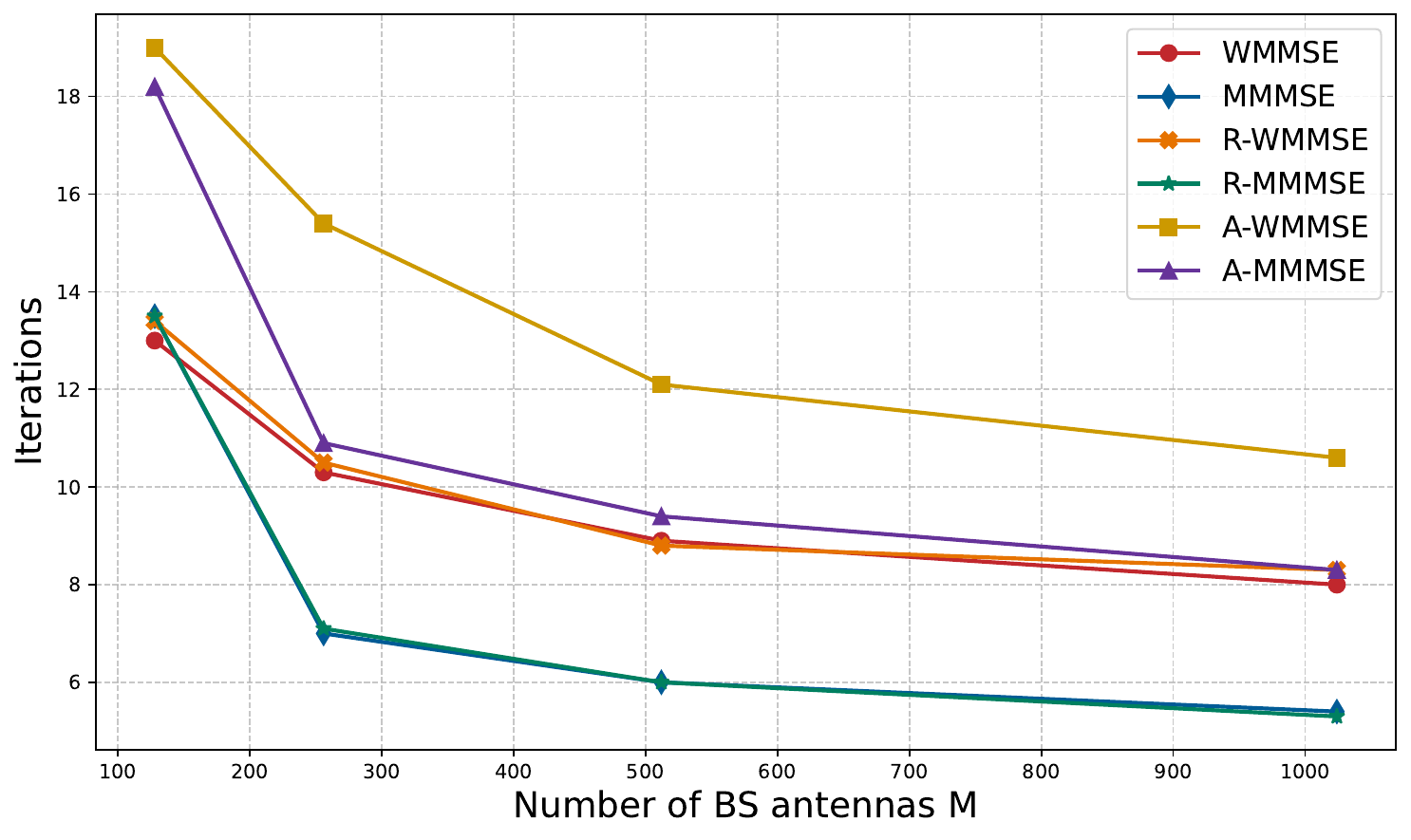} \label{fig:t_ite}}
    \subfigure[{CPU time reduction.}]{\includegraphics[width=0.48\textwidth]{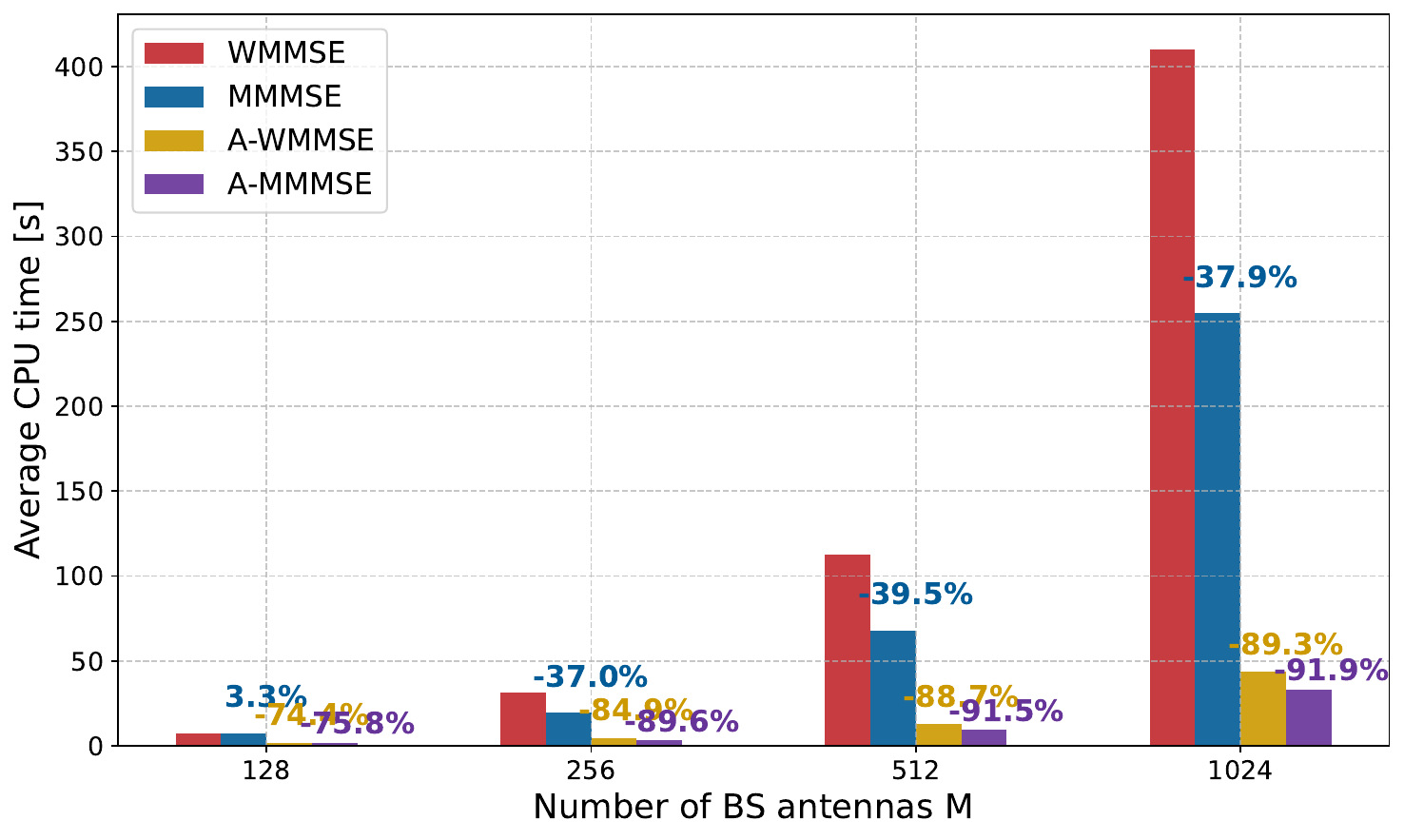} \label{fig:t_rduc}}
    \caption{{Performance evaluation of algorithms under varying numbers of BS antennas, with $K=2$, $I=16$, $N=4$, $d=4$ and $\text{SNR}=10$dB.}}
\label{fig:t_performance}
\end{figure}

\begin{figure}[htbp]
    \centering
    \subfigure[{Average CPU time.}]{\includegraphics[width=0.48\textwidth]{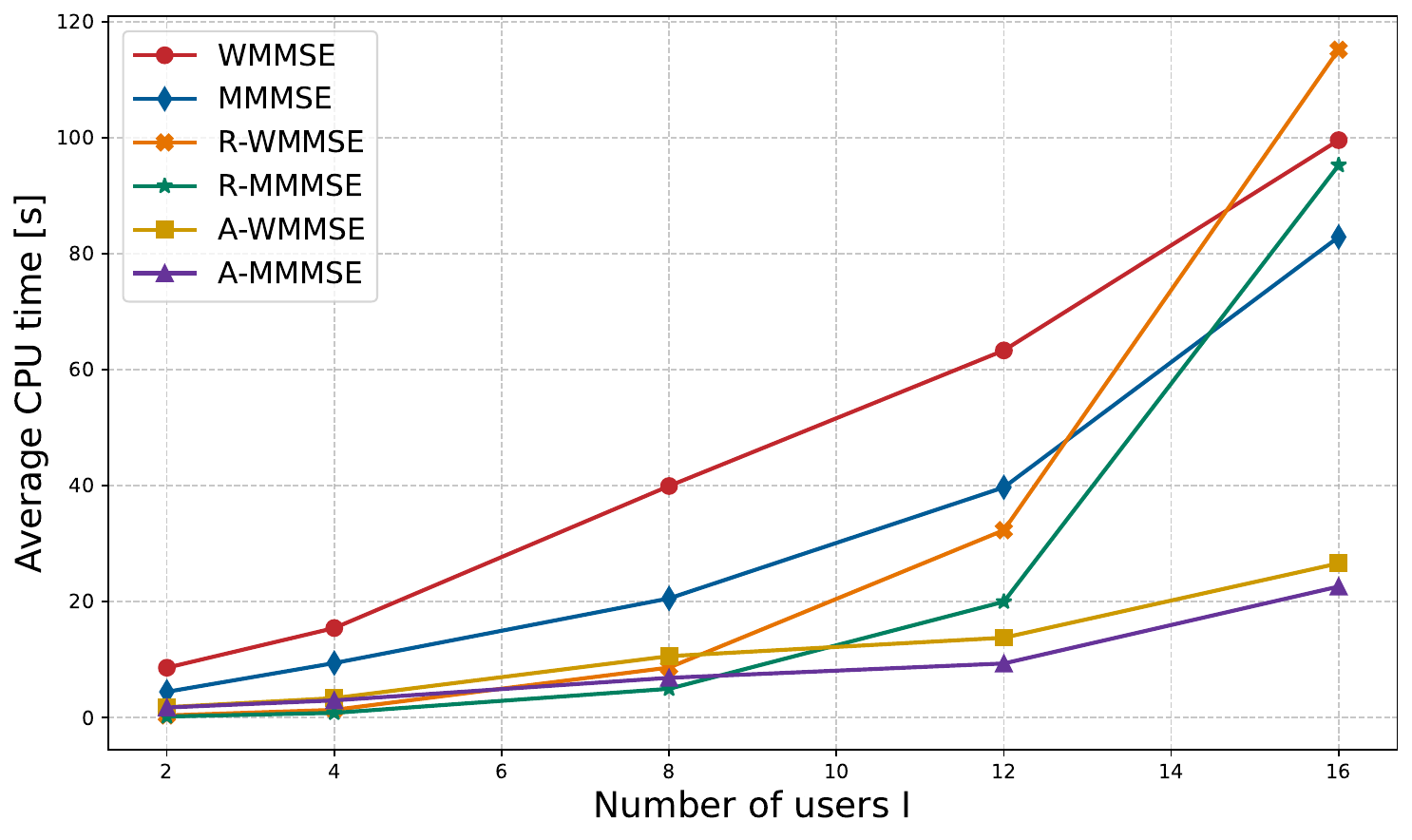} \label{fig:i_cpu_time}}
    \subfigure[{Weighted sum-rate.}]{\includegraphics[width=0.48\textwidth]{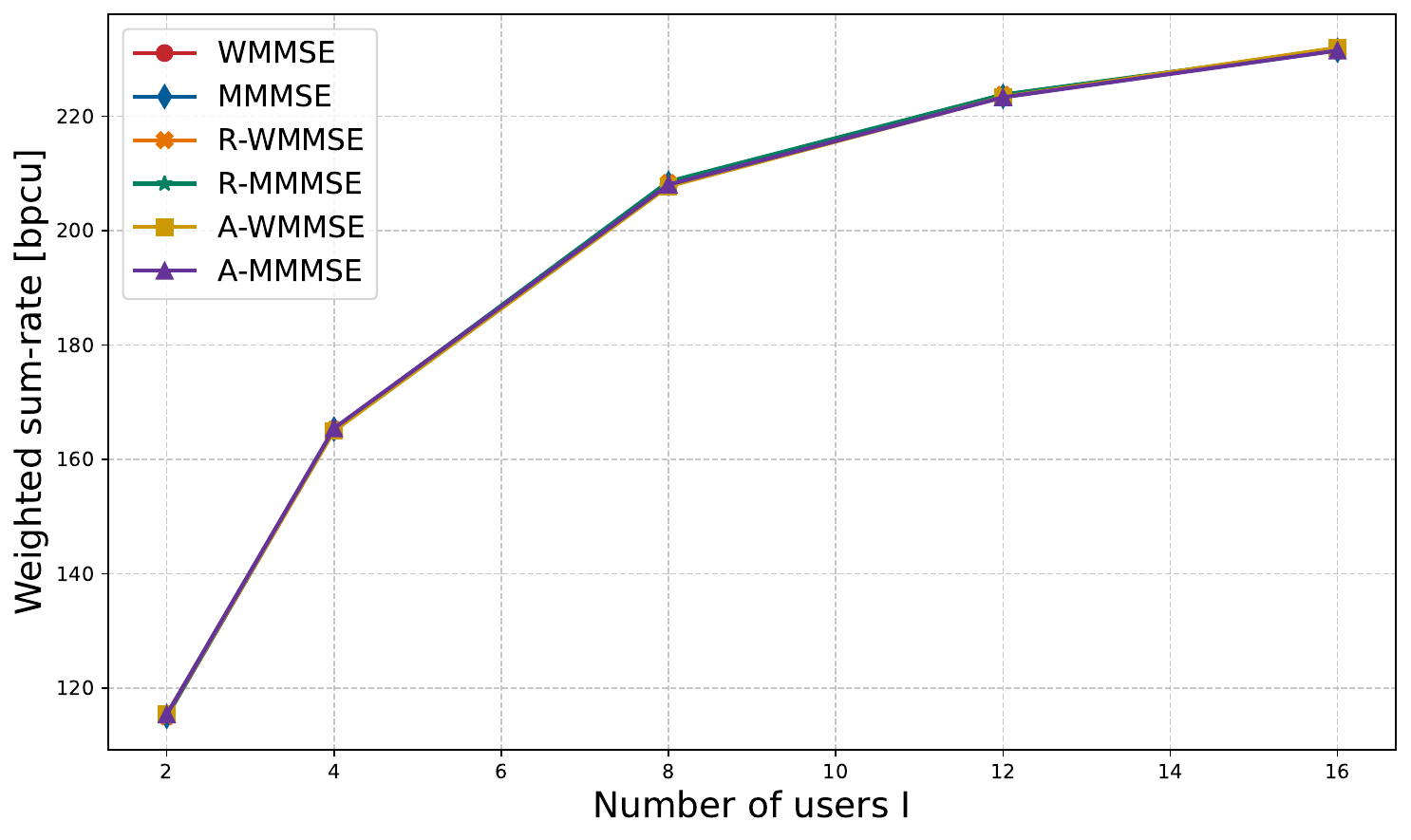} \label{fig:i_sum_rate}}
    \subfigure[{CPU time reduction.}]{\includegraphics[width=0.48\textwidth]{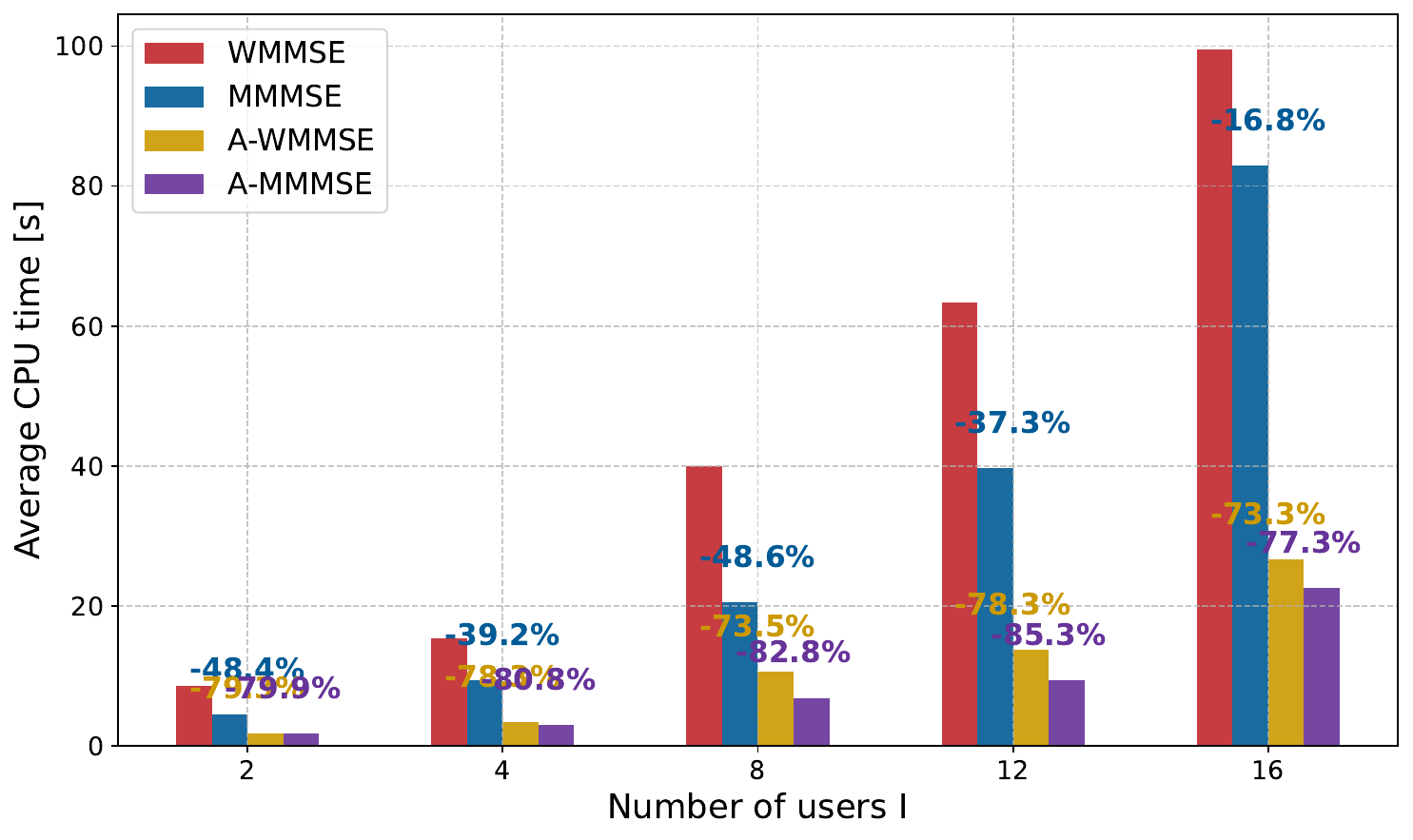} \label{fig:i_reduc}}
    \caption{{Performance evaluation of algorithms under varying numbers of users, with $K=4$, $M=256$, $N=4$, $\text{SNR}=10$ dB and $d=4$.}}
\label{fig:i_performance}
\end{figure}

\subsection{Execution Time on CPUs} \label{sec:cpu_time_sec}
We now adopt a consistent stopping criterion across all algorithms by setting $\epsilon_2=0.001$, and compare their CPU runtime, iteration count, and final achieved WSR under varying numbers of BS antennas, users, and SNR values.

\subsubsection{{Performance with Different $M$}}
First, we evaluate performance as the number of BS antennas $M$ varies, with the number of cells $K = 2$, the number of users fixed at $I = 16$, the number of data streams at $d = 4$, and the SNR at 10 dB.
As shown in Fig.~\ref{fig:t_sum_rate}, all algorithms achieve almost the same final WSR in this configuration.

{Fig.~\ref{fig:t_cpu_time} presents the CPU runtime of all algorithms under different values of $M$. 
Although \texttt{R-WMMSE} and \texttt{R-MMMSE} do not demonstrate an advantage at $M = 128$, where the total number of users across all cells equals $M$, they show a notable advantage as $M$ grows, significantly outperforming \texttt{WMMSE} in computational time due to the $\mathcal{O}(M^3)$ complexity of \texttt{WMMSE}. 
They also surpass the hybrid BCD methods \texttt{A-WMMSE} and \texttt{A-MMMSE}, as first-order methods still incur $\mathcal{O}(M^2)$ complexity with respect to the number of BS antennas. 
The mixed weighted-unweighted warm-start method exhibits a clear advantage in large-antenna scenarios, making \texttt{R-MMMSE} the algorithm with the shortest CPU runtime for $M \in \{256, 512, 1024\}$.
As shown in Fig.~\ref{fig:t_ite}, the mixed weighted-unweighted warm-start method significantly reduces the number of iterations required by \texttt{WMMSE}, \texttt{R-WMMSE}, and \texttt{A-WMMSE}, indicating that our proposed warm-start strategy improves convergence in large-$M$ scenarios. 
Specifically, at $M = 1024$, \texttt{R-MMMSE} achieves a CPU runtime of 5.11 seconds, compared with 8.63 seconds for \texttt{R-WMMSE} and 489.17 seconds for \texttt{WMMSE}, corresponding to speedups of 40.8\% and 98.9\%, respectively.
Although first-order methods do not exhibit a significant advantage in this setting, their inherent suitability for parallel computation makes them promising for large-$M$ scenarios on GPUs, a point we will analyze further in Section~\ref{sec:gpu_experiment}.
}

{
To better analyze the acceleration effects of the hybrid BCD method and the mixed weighted-unweighted warm-start method, Fig.~\ref{fig:t_rduc} shows the percentage reduction in CPU runtime of \texttt{MMMSE}, \texttt{A-WMMSE}, and \texttt{A-MMMSE} relative to \texttt{WMMSE}. 
The results for \texttt{R-WMMSE} and \texttt{R-MMMSE} are omitted for readability, as their runtimes are too short to be meaningfully compared on the same scale. 
It can be observed that the improvements of both \texttt{MMMSE} (which adds only the warm-start method) and \texttt{A-WMMSE} (which adds only the hybrid BCD method) over \texttt{WMMSE} become more pronounced as $M$ increases. 
Furthermore, the acceleration achieved by \texttt{A-MMMSE}, which combines both techniques, also improves with $M$.
Although neither technique provides the same level of improvement as the low-dimensional subspace approach in \texttt{R-WMMSE} at large $M$, both still yield significant gains over \texttt{WMMSE}.
Specifically, at $M = 1024$, \texttt{A-MMMSE} achieves a 91.9\% reduction in CPU runtime compared to \texttt{WMMSE}.
}

\subsubsection{{Performance with Different $I$}}
Next, we evaluate algorithm performance with different user counts $I$, while fixing the number of cells $K = 4$, the number of BS antennas at $M = 256$, the number of data streams at $d = 4$, and the SNR at 10 dB, as shown in Fig.~\ref{fig:i_performance}.
Fig.~\ref{fig:i_sum_rate} confirms that all algorithms converge to almost the same WSR value under the given stopping conditions, validating their numerical consistency.
{In terms of computational time (Fig.~\ref{fig:i_cpu_time}), the hybrid BCD methods \texttt{A-WMMSE} and \texttt{A-MMMSE}, due to their linear complexity in the number of users, are significantly faster than the traditional BCM methods \texttt{WMMSE}, \texttt{MMMSE}, \texttt{R-WMMSE} and \texttt{R-MMMSE}. 
Since \texttt{R-WMMSE} eliminates the $\mathcal{O}(M^3)$ complexity but introduces cubic complexity in the total number of users across all cells, its CPU runtime exceeds that of \texttt{WMMSE} when the number of users per cell $I$ reaches 16. 
Our proposed mixed weighted-unweighted warm-start method brings notable improvements to \texttt{WMMSE}, \texttt{R-WMMSE}, and \texttt{A-WMMSE}. 
In the case of \texttt{R-WMMSE}, the warm-start strategy yields \texttt{R-MMMSE}, which outperforms \texttt{WMMSE} in terms of runtime.
However, it also suffers from the limitation that the low-dimensional subspace structure is not well suited for scenarios with a large number of users; consequently, when $I=16$, it is slower than \texttt{MMMSE}.
}

{
Fig.~\ref{fig:i_reduc} shows the percentage reduction in CPU runtime achieved by augmenting \texttt{WMMSE} with the hybrid BCD method and the mixed weighted-unweighted warm-start method. 
When the number of users is small, the mixed weighted-unweighted warm-start method provides greater improvements to \texttt{WMMSE}, which is reasonable because, when the number of users is smaller than the number of data streams, the savings in computational cost for $\mathbf{W}$ and the rate improvement dominate. 
As the number of users increases, the improvement in CPU runtime from using only the warm-start method (\texttt{MMMSE}) diminishes, whereas the hybrid BCD method becomes more advantageous. 
Specifically, the improvement of \texttt{A-WMMSE} over \texttt{WMMSE} exhibits an overall increasing trend with the number of users. 
Benefiting from both techniques, \texttt{A-MMMSE} achieves the best runtime when the number of users is large, attaining up to an 85.3\% reduction in CPU runtime compared to \texttt{WMMSE}.}

\begin{figure}
    \centering
    \includegraphics[width=1.0\linewidth]{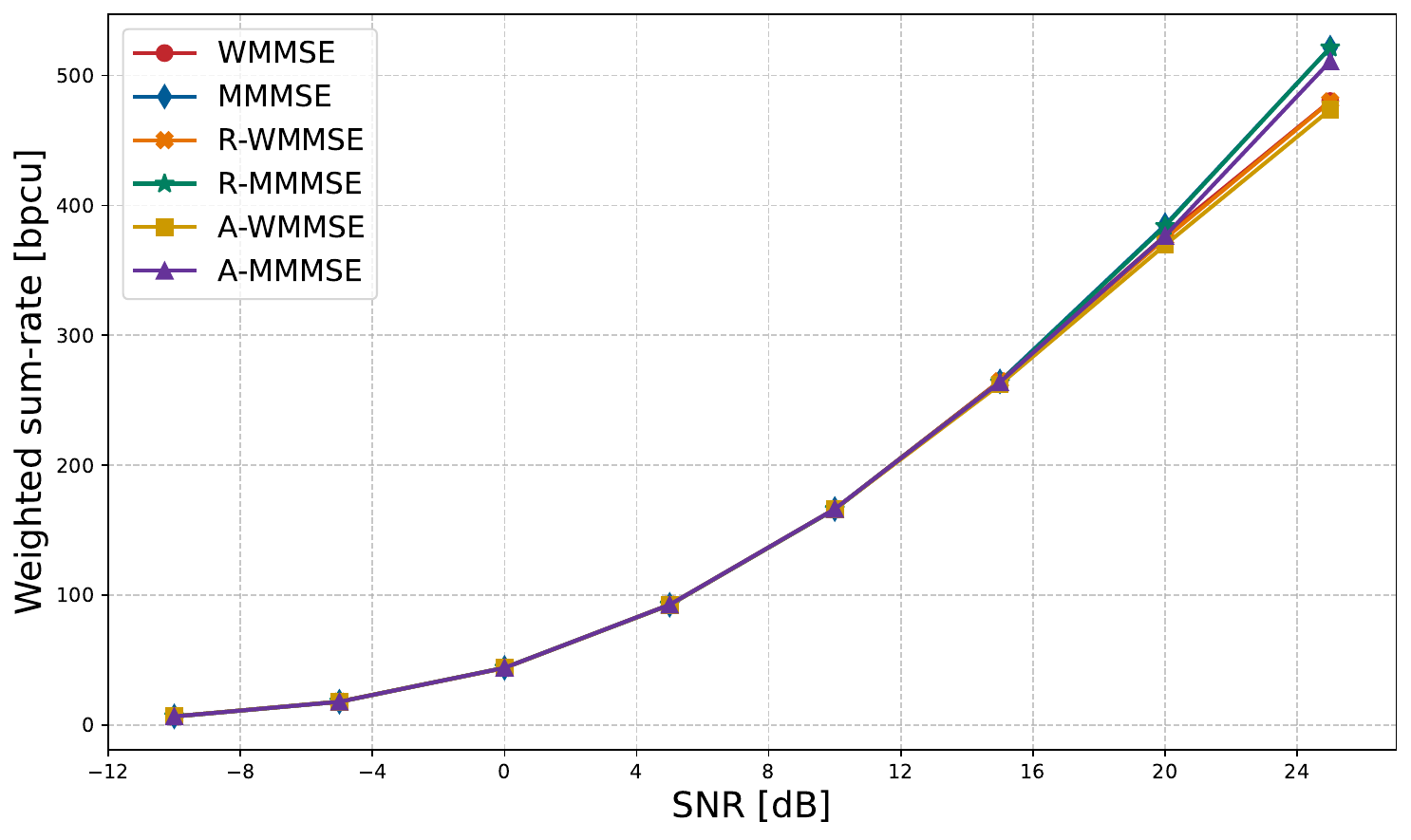}
    \caption{{WSR performance under varying SNR values, with $K=4$, $M=128$, $N=4$, $I=8$, and $d=4$.}}
    \label{fig:snr_sum_rate}
\end{figure}

\subsubsection{{Performance with Different SNR}}
{
Finally, we test all algorithms under different SNR values (Fig.~\ref{fig:snr_sum_rate}) with $K = 4$, $M = 128$, $I = 8$, and $d = 4$.
Since SNR variations do not affect algorithm complexity, we focus solely on whether our algorithms can converge to the same WSR value as \texttt{WMMSE} and \texttt{R-WMMSE} under varying noise levels. 
Below 15 dB, all algorithms converge to almost the same WSR value, and the WSR increases monotonically with SNR. 
When the SNR further increases to 25 dB, all \texttt{MMMSE}-based algorithms achieve higher WSR values than their \texttt{WMMSE}-based counterparts. 
This indicates that the mixed weighted-unweighted warm-start method provides a better initial point under high SNR conditions, enabling the second-stage WMMSE method to converge to a higher WSR value.}

\begin{figure}
    \centering
    \subfigure[{$K=2$, $I=16$, $N=4$, $d=4$, SNR=10dB.}]{\includegraphics[width=0.48\textwidth]{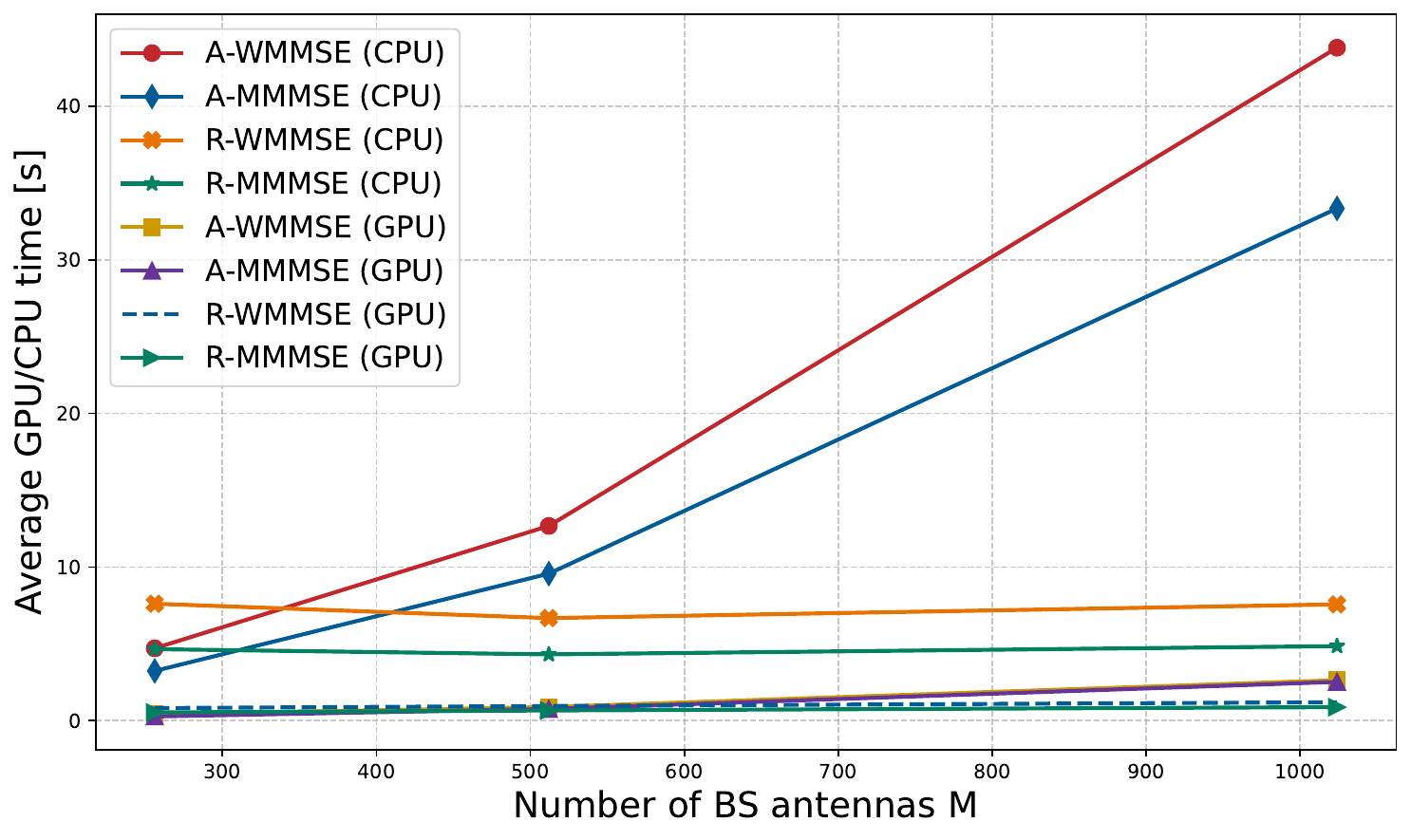} \label{fig:t_gpu_time_1}}
    \subfigure[{$K=2$, $I=32$, $N=4$, $d=4$, SNR=10dB.}]{\includegraphics[width=0.48\textwidth]{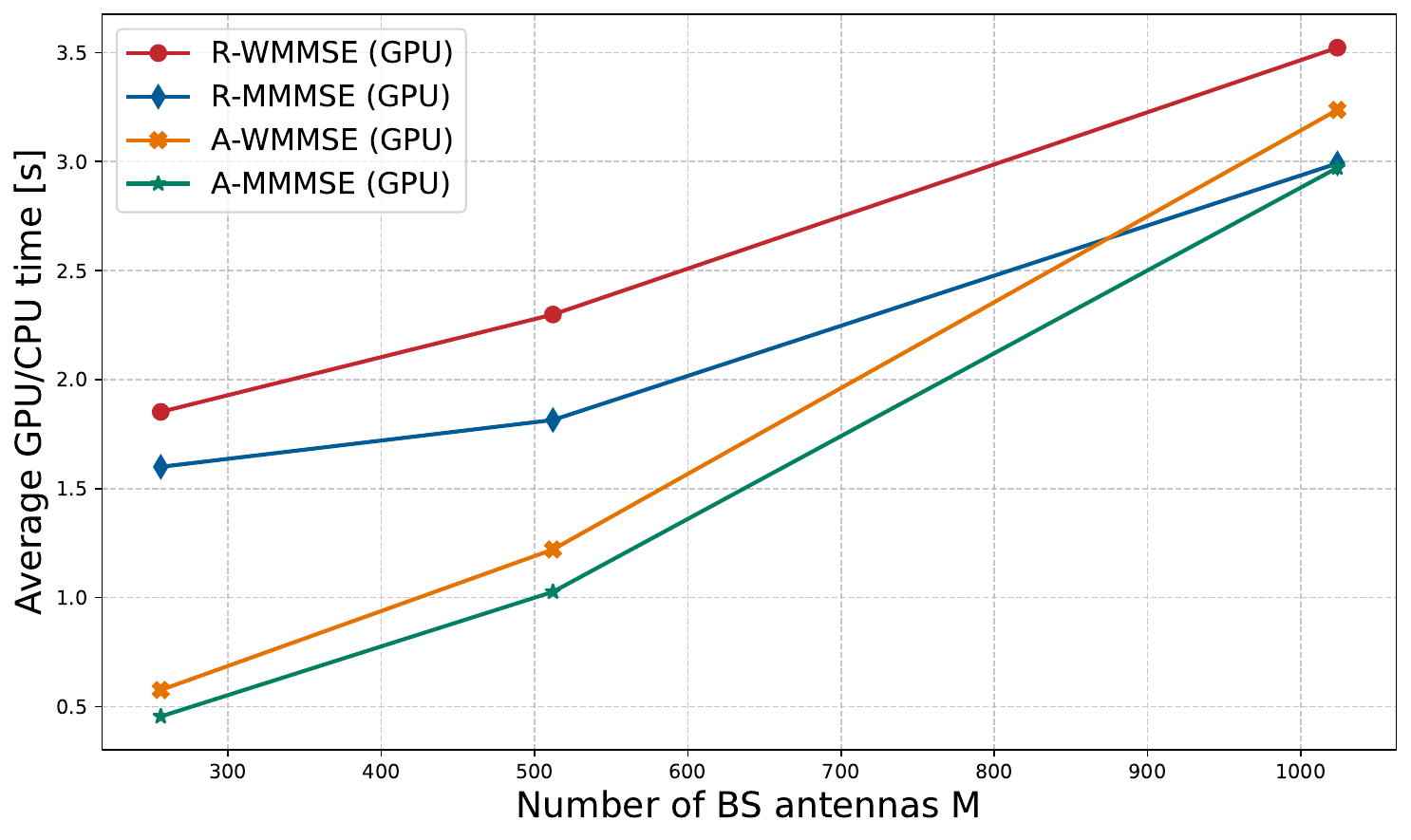} \label{fig:t_gpu_time_2}}
    \caption{{Average GPU/CPU runtime of \texttt{R-WMMSE} and \texttt{A-MMMSE} under varying numbers of BS antennas.}}
\label{fig:t_gpu_time}
\end{figure}

\subsection{Execution Time on GPUs}
\label{sec:gpu_experiment}

{
The hybrid BCD-based methods, \texttt{A-WMMSE} and \texttt{A-MMMSE}, are well suited for GPU acceleration due to the introduction of first-order techniques. 
As shown in Section~\ref{sec:cpu_time_sec}, when the number of BS antennas $M$ is large, \texttt{A-WMMSE} and \texttt{A-MMMSE} are slower than \texttt{R-WMMSE} and \texttt{R-MMMSE} (Fig.~\ref{fig:t_cpu_time}) because the hybrid BCD approach still incurs $\mathcal{O}(M^2)$ complexity when updating $\mathbf{V}$. 
To demonstrate the advantage of GPU acceleration for hybrid BCD methods, we implemented GPU versions of \texttt{R-WMMSE}, \texttt{R-MMMSE}, \texttt{A-WMMSE}, and \texttt{A-MMMSE} and compared them with their CPU counterparts. The results are presented in Fig.~\ref{fig:t_gpu_time}.

We first adopt the same configuration as in Fig.~\ref{fig:t_cpu_time}. 
It can be observed that when all algorithms are deployed on a GPU, their runtimes improve significantly, and the performance gap between \texttt{A-WMMSE}/\texttt{A-MMMSE} and \texttt{R-WMMSE}/\texttt{R-MMMSE} narrows. 
However, at $M = 1024$, the GPU runtimes of \texttt{A-WMMSE} and \texttt{A-MMMSE} remain slower than those of \texttt{R-WMMSE} and \texttt{R-MMMSE} due to the linear complexity of the latter in $M$.

We further increase the number of users to $I = 32$ (Fig.~\ref{fig:t_gpu_time_2}). 
To more clearly illustrate the runtime differences among \texttt{R-WMMSE}, \texttt{R-MMMSE}, \texttt{A-WMMSE}, and \texttt{A-MMMSE} on the GPU, we omit the CPU versions. 
Under this setting, \texttt{A-MMMSE} achieves the fastest GPU runtime for all values of $M$. 
\texttt{A-WMMSE} also outperforms \texttt{R-WMMSE} and is only slightly slower than \texttt{R-MMMSE} when $M=1024$. 
These results demonstrate that hybrid BCD methods are particularly suitable for GPU implementation, where they can fully exploit their parallelism advantages.
}

\begin{figure}
    \centering
    \subfigure[{Convergence in iterations.}]{\includegraphics[width=0.48\textwidth]{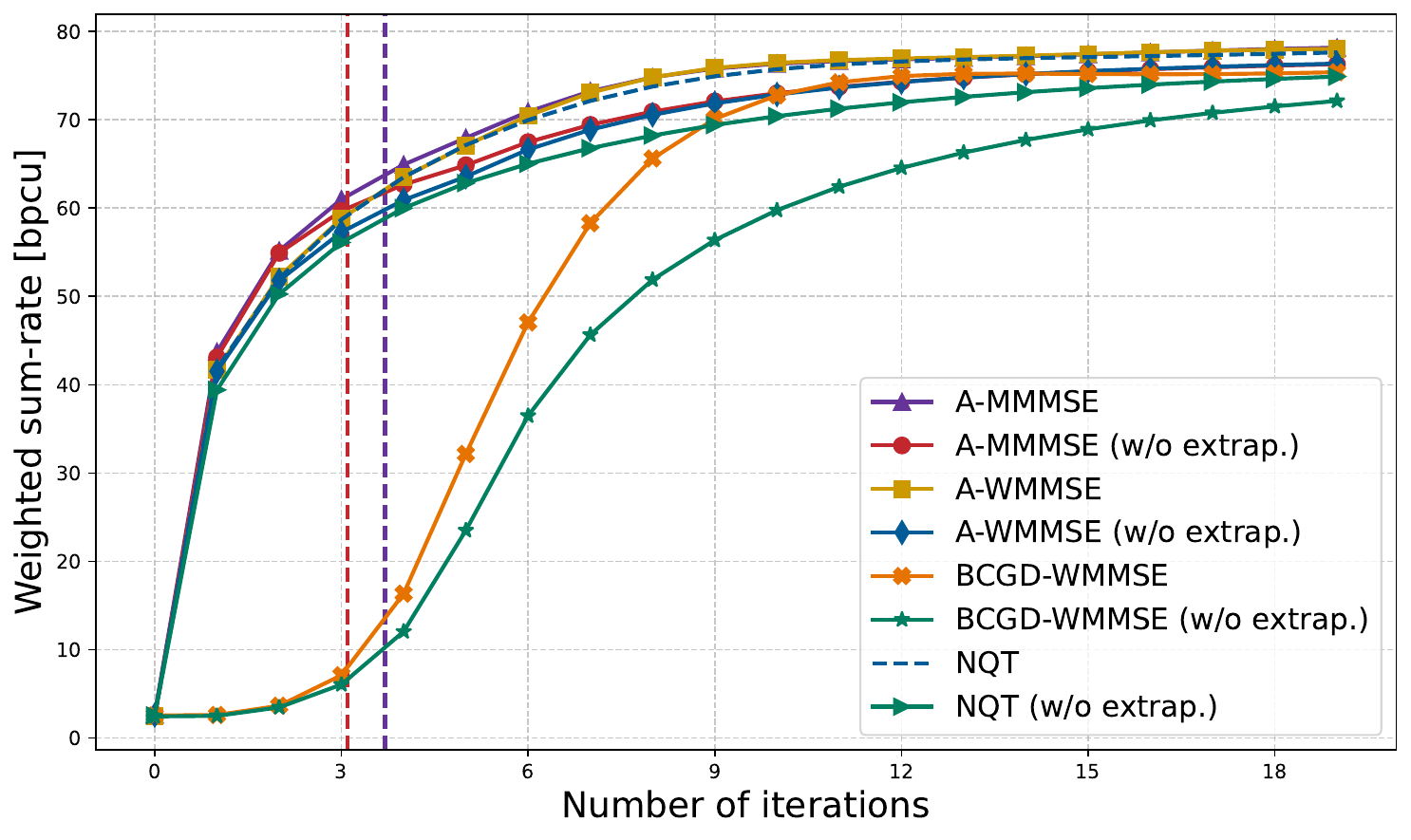} \label{fig:ite_ablation}}
    \subfigure[{Convergence in time.}]{\includegraphics[width=0.48\textwidth]{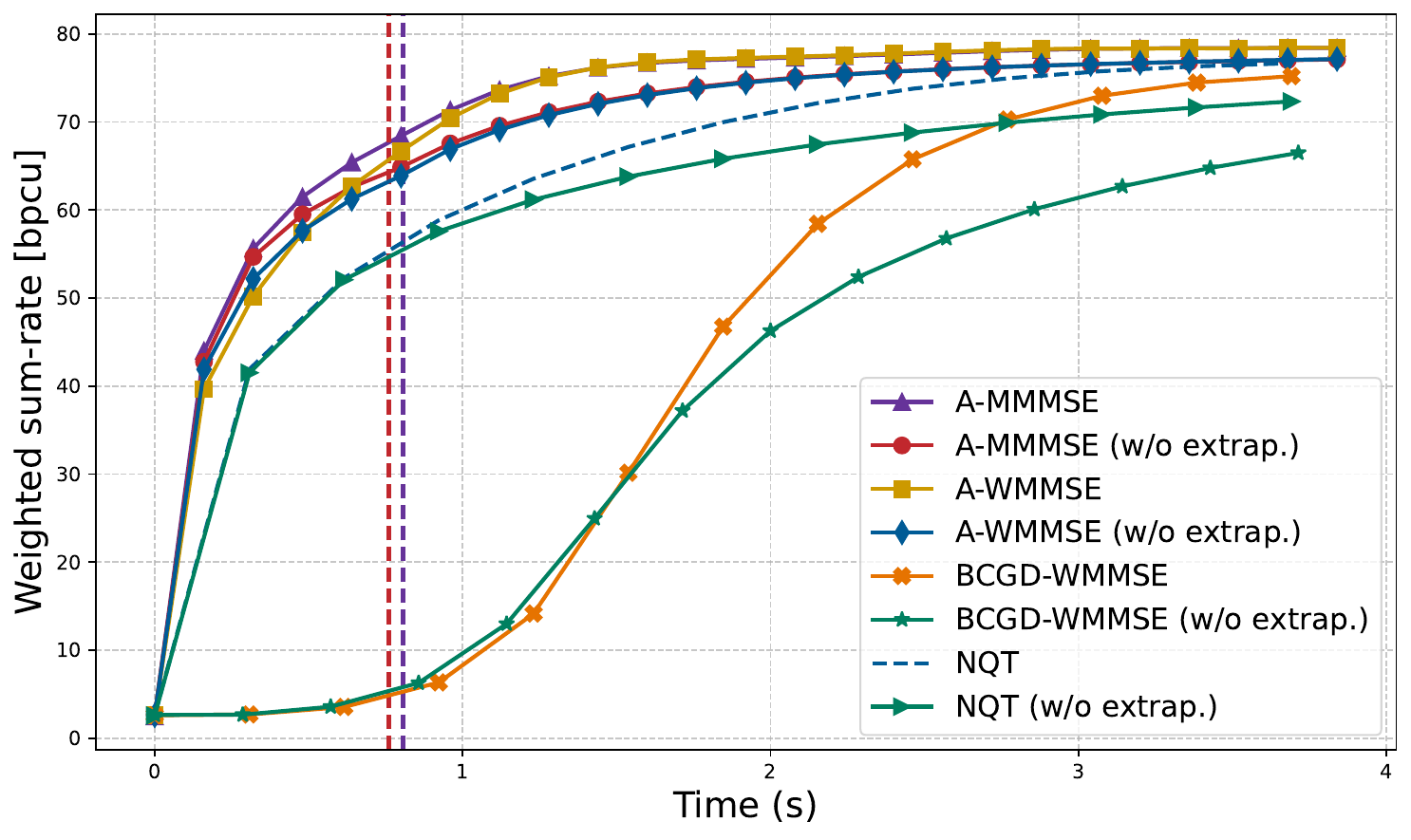} \label{fig:time_ablation}}
    \caption{{Convergence analysis of gradient-based methods with $K = 2$, $I = 12$, $M = 256$, $N=4$, $\text{SNR} = 10$ dB, and $d = 1$.}}
\label{fig:ablation}
\end{figure}
\subsection{{Further Analysis of Hybrid BCD and Warm-Start}}
\label{sec:gd_method}
{
\subsubsection{Ablation Study of Hybrid BCD}
In this section, we further analyze the advantages of the hybrid BCD method and the extrapolation technique. 
We evaluate the convergence of several gradient-based methods from two perspectives: a fixed number of iterations (20 iterations, Fig.~\ref{fig:ite_ablation}) and a fixed runtime (4 seconds, Fig.~\ref{fig:time_ablation}). 
The results are presented in Fig.~\ref{fig:ablation}. 
Methods whose names include \enquote{(w/o extrap.)} are those without Nesterov's extrapolation. 
We additionally include a full BCGD method, which updates all variables $\mathbf{U}$, $\mathbf{W}$, and $\mathbf{V}$ using a single gradient descent step; this method is denoted as \texttt{BCGD-WMMSE}. 
Since the convergence of WMMSE requires $\mathbf{W}$ to remain positive semidefinite, we update $\mathbf{W}$ using the Schulz iteration as adopted in~\cite{pellaco2023matrix}:
\begin{equation}
\mathbf{W}_{i_k}^{t+1} = \mathbf{W}_{i_k}^{t} \left(2\mathbf{I} - \mathbf{E}_{i_k}^t \mathbf{W}_{i_k}^t\right),\quad\forall i_k\in\mathcal{I}
\end{equation}
where $\mathbf{E}_{i_k}$ is defined in~\eqref{eq:Ek}. 
In the experiments, we also include \texttt{NQT} and its variant without extrapolation. 
Thus, this experiment covers the major existing gradient-based methods for solving problem~\eqref{prob:ori_prob}. 
To enable a fair comparison with \texttt{NQT}, we consider only the single-data-stream case ($d = 1$) and set $K=2$, $I = 12$, $M = 256$, $N = 4$, and $\text{SNR} = 10$ dB.

It can be observed that \texttt{A-MMMSE}, \texttt{A-WMMSE}, and \texttt{NQT} with extrapolation achieve the highest WSR values under both the fixed-iteration and fixed-time settings. 
When extrapolation is removed, the convergence rates of these algorithms degrade significantly, such that they can only approach—but not reach—the same WSR values within the given iteration or time budgets. 
The full BCGD method, \texttt{BCGD-WMMSE}, exhibits very slow convergence. 
This is because BCGD provides only approximate solutions to subproblems, and using first-order methods to update the relatively inexpensive variables $\mathbf{U}$ and $\mathbf{W}$ substantially reduces the overall convergence rate. 
The version without extrapolation, \texttt{BCGD-WMMSE (w/o extrap.)}, converges even more slowly. 
The superior performance of \texttt{A-MMMSE} and \texttt{A-WMMSE}, which employ the hybrid BCD approach, highlights the advantage of our hybrid BCD strategy. 
Notably, \texttt{NQT} and \texttt{A-WMMSE} exhibit similar convergence rates, suggesting a strong correlation between the two methods. Although the advantage of the mixed weighted-unweighted warm-start method is modest in the \(d = 1\) case, it still enables \texttt{A-MMMSE} to achieve the fastest convergence rate, thereby demonstrating the effectiveness of our warm-start strategy.

\begin{figure}
    \centering
    \subfigure[{$K=4$, $I=8$, $M=128$, $N=4$, $d=4$, SNR=-10dB.}]{\includegraphics[width=0.48\textwidth]{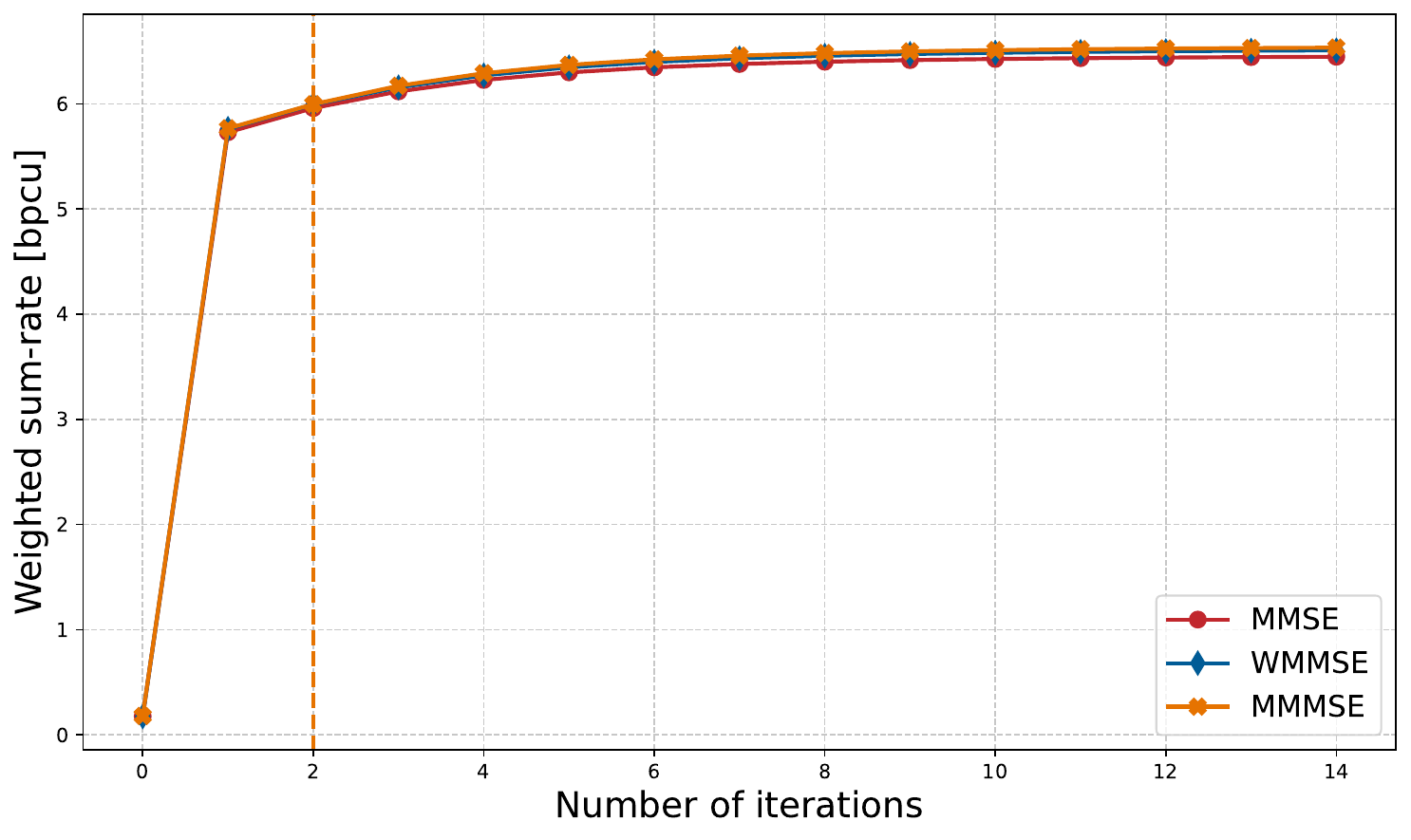} \label{fig:ws_convg_1}}
    \subfigure[{$K=4$, $I=16$, $M=256$, $N=4$, $d=4$, SNR=10dB.}]{\includegraphics[width=0.48\textwidth]{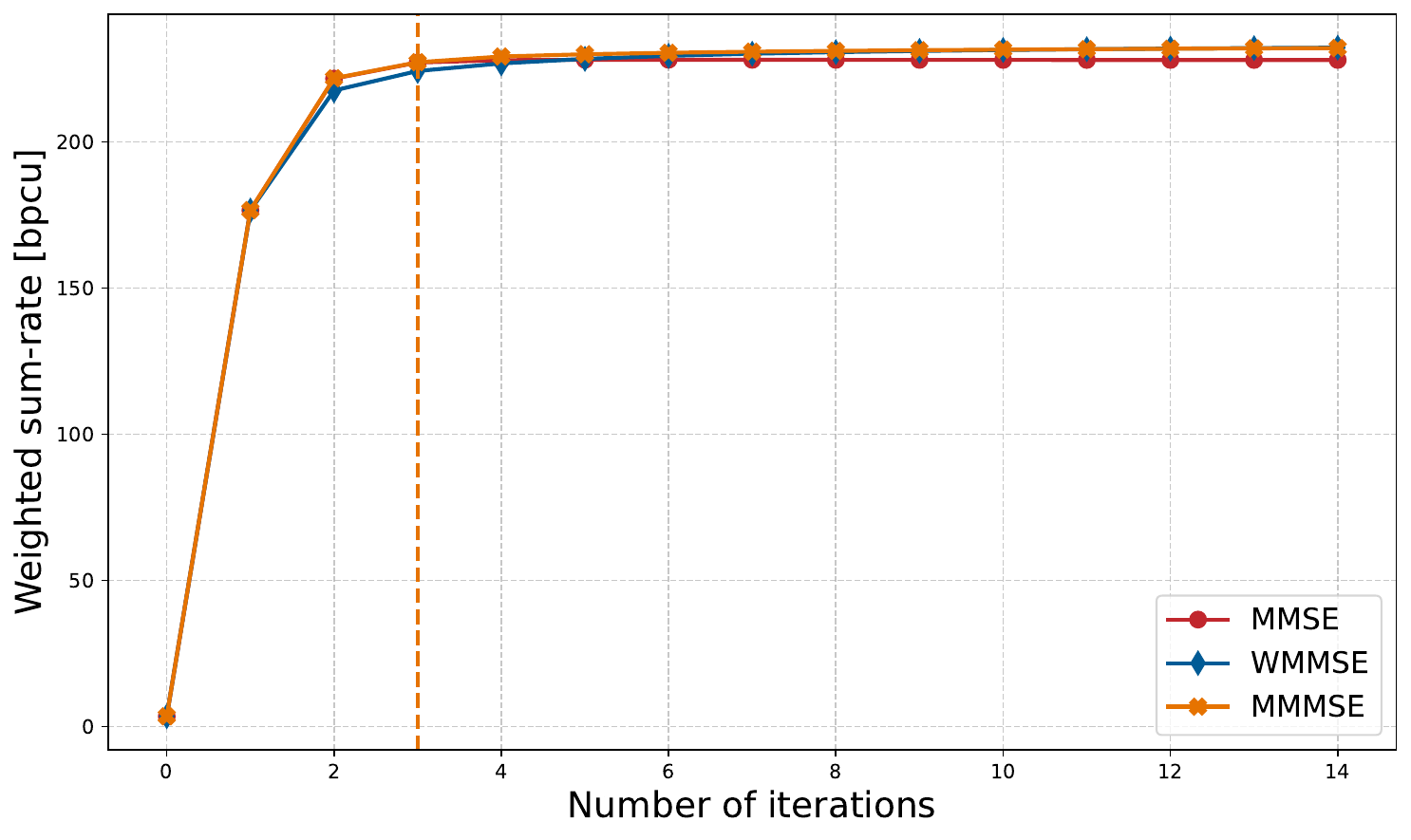} \label{fig:ws_convg_2}}
    \subfigure[{$K=4$, $I=24$, $M=512$, $N=4$, $d=4$, SNR=20dB.}]{\includegraphics[width=0.48\textwidth]{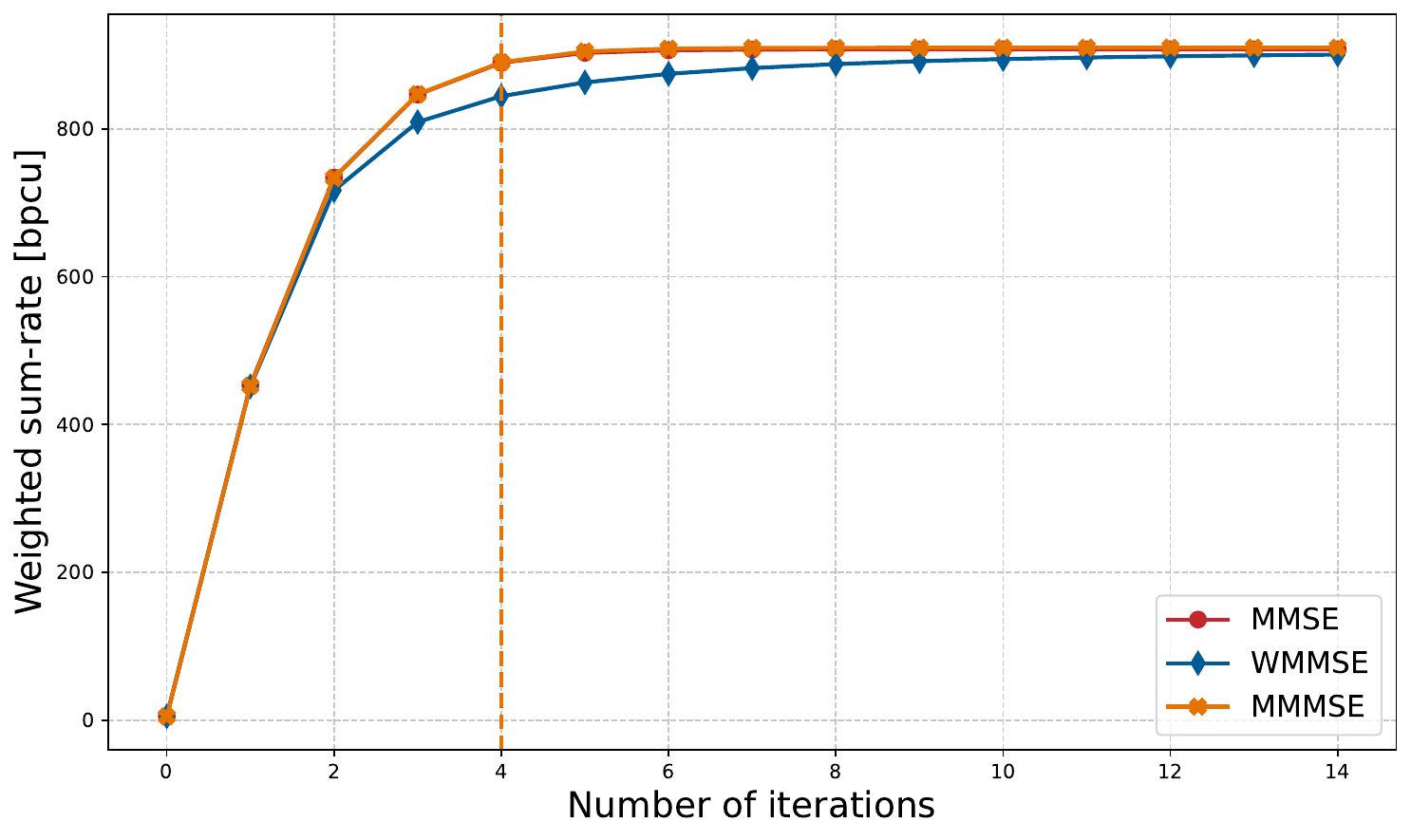} \label{fig:ws_convg_3}}
    \caption{{Convergence performance of \texttt{MMSE}, \texttt{WMMSE} and \texttt{MMMSE}.}}
\label{fig:ws_convg}
\end{figure}

\subsubsection{Convergence and Conditioning Analysis of Warm-Start}
To analyze the improvement in convergence rate brought by the mixed weighted-unweighted warm-start method, we configured three different scenarios with varying system scales and SNR levels. 
We fixed the number of iterations and separately analyzed the convergence performance of Stage I (\texttt{MMSE}), Stage II (\texttt{WMMSE}), and their combination (\texttt{MMMSE}). 
As shown in Fig.~\ref{fig:ws_convg}, \texttt{MMSE} achieves a faster convergence rate than \texttt{WMMSE} in the early iterations, particularly in larger-scale scenarios (Figs.~\ref{fig:ws_convg_2} and~\ref{fig:ws_convg_3}). 
However, because the problem formulation of \texttt{MMSE} is not equivalent to the original WSR maximization problem~\eqref{prob:ori_prob}, the final WSR value it converges to deviates from that of \texttt{WMMSE}. 
By combining both stages, \texttt{MMMSE} inherits the advantages of each: it achieves fast convergence in the early iterations while still guaranteeing convergence to a stationary point of~\eqref{prob:ori_prob}.
We also observe a phenomenon similar to that in Fig.~\ref{fig:snr_sum_rate} under high SNR conditions. 
Specifically, \texttt{MMSE} achieves rapid convergence in the early iterations and attains a higher WSR value than \texttt{WMMSE}. 
\texttt{MMMSE} inherits the high-WSR initial point provided by \texttt{MMSE}, then switches to the \texttt{WMMSE} update rule and converges to a stationary point of problem~\eqref{prob:ori_prob}, thereby also achieving a higher final WSR value.

}

\section{Conclusions}
\label{sec:concl}
This paper has presented A-MMMSE, an enhanced algorithm for WSR maximization in downlink multi-cell MU-MIMO systems, extending the classical WMMSE approach. 
The proposed method integrates two key innovations: a computationally efficient precoding update scheme and a heuristic warm-start strategy grounded in sum MSE minimization. 
Theoretical analysis and experimental validation have confirmed that A-MMMSE converges to a stationary point of the WSR maximization problem while attaining almost the same final WSR performance as both WMMSE and its refined variant R-WMMSE. 
Owing to its reduced computational complexity and high parallelizability, the algorithm achieves superior efficiency on CPU and GPU platforms compared to existing methods. 
To further improve usability, future research will aim to develop self-adaptive mechanisms that eliminate the need for manual parameter tuning, building upon the complexity reduction achieved in this work.

{\appendices
\section{Proof of Lemma~\ref{lemma:lambda_ek}}
~\label{append:proof_lambda_ek}
From the definition of $\kappa$ in \eqref{eq:sigma_HH}, we know that {$\|\mathbf{H}_{i_k}\|_2 \leq \sqrt{\kappa}$} for all $i_k$. Thus, 
{
\begin{equation}
\left\|\mathbf{H}_{i_k} \mathbf{V}_{i_k}\right\|_{F}^{2} \leq \kappa\left\|\mathbf{V}_{i_k}\right\|_{F}^{2} \leq \kappa \sum_{i=1}^I\left\|\mathbf{V}_{i_k}\right\|_{F}^{2} \leq \kappa P_{\max} .
\end{equation}}
Given the definition of {$\mathbf{E}_{i_k}$}, it follows that
{
\begin{small}
\begin{equation}
\lambda_{\min}\left(\mathbf{E}_{i_k}\right) \geq \lambda_{\min}\left(\mathbf{E}_{i_k}-\sum_{(l,j)\neq(i,k)}\mathbf{U}_{i_k}^{H} \mathbf{H}_{i_kj} \mathbf{V}_{l_j} \mathbf{V}_{l_j}^{H} \mathbf{H}_{i_kj}^{H}\mathbf{U}_{i_k}\right).
\end{equation}
\end{small}}
For the purpose of this proof, we therefore consider {$\mathbf{E}_{i_k}$} without the interference term {$\sum_{(l,j) \neq (i,k)} \mathbf{U}_{i_k}^{H} \mathbf{H}_{i_kj} \mathbf{V}_{l_j} \mathbf{V}_{l_j}^{H} \mathbf{H}_{i_kj}^{H} \mathbf{U}_{i_k}$}.
Let us examine {$\mathbf{a}^{H} \mathbf{E}_{i_k} \mathbf{a}$} for an arbitrary vector \(\mathbf{a}\) satisfying \(\|\mathbf{a}\|_{2} = 1\), and define {$\mathbf{b} = \mathbf{U}_{i_k} \mathbf{a}$}. 
Then,
{
\begin{equation}
\mathbf{a}^{H} \mathbf{E}_{i_k} \mathbf{a} = \left\|\mathbf{a} - \mathbf{V}_{i_k}^{H} \mathbf{H}_{i_kk}^{H} \mathbf{b}\right\|_{2}^{2} + \sigma_{i_k}^2 \|\mathbf{b}\|_{2}^{2}.
\end{equation}}
If we assume that $\|\mathbf{b}\|_{2}^{2} \geq \frac{1}{ P_{\max}\kappa}$, it follows immediately that
\begin{equation}
\mathbf{a}^{H} {\mathbf{E}_{i_k}} \mathbf{a} \geq {\sigma_{i_k}^2} \|\mathbf{b}\|_{2}^{2} \geq \frac{{\sigma_{i_k}^{2}}}{P_{\max} \kappa} \geq \frac{{\sigma_{i_k}^{2}}}{P_{\max} \kappa + {\sigma_{i_k}^{2}}}.
\end{equation}
Alternatively, if \(\|\mathbf{b}\|_{2}^{2} \leq \frac{1}{P_{\max} \kappa}\), or equivalently \(\|\mathbf{b}\|_{2} \leq \frac{1}{\sqrt{P_{\max} \kappa}}\), then
\begin{equation}
\begin{aligned}
\left\|\mathbf{a} - {\mathbf{V}_{i_k}^{H}} {\mathbf{H}_{i_kk}^{H}} \mathbf{b}\right\|_{2} &\geq \|\mathbf{a}\|_{2} - \left\|{\mathbf{V}_{i_k}^{H}} {\mathbf{H}_{i_kk}^{H}} \mathbf{b}\right\|_{2} \\
&\geq 1 - \sqrt{P_{\max} \kappa} \, \|\mathbf{b}\|_{2} \geq 0,
\end{aligned}
\end{equation}
which implies that
\begin{equation}
    \begin{aligned}
        \left\|\mathbf{a}-{\mathbf{V}_{i_k}^{H}} {\mathbf{H}_{i_kk}^{H}} \mathbf{b}\right\|_{2}^{2} & \geq\left(1-\sqrt{P_{\max}\kappa}\|\mathbf{b}\|_{2}\right)^{2}\\
        & =1-2 \sqrt{P_{\max} \kappa}\|\mathbf{b}\|_{2}+P_{\max} \kappa\|\mathbf{b}\|_{2}^{2}.\\
    \end{aligned}
\end{equation}
Therefore,
\begin{equation}
\begin{aligned}
\mathbf{a}^{H} {\mathbf{E}_{i_k}} \mathbf{a} & =\left\|\mathbf{a}-{\mathbf{V}_{i_k}^{H}} {\mathbf{H}_{i_kk}^{H}} \mathbf{b}\right\|_{2}^{2}+{\sigma_{i_k}^2}\|\mathbf{b}\|_{2}^{2} \\
& \geq 1-2 \sqrt{P_{\max} \kappa}\|\mathbf{b}\|_{2}+P_{\max} \kappa\|\mathbf{b}\|_{2}^{2}+ {\sigma_{i_k}^{2}}\|\mathbf{b}\|_{2}^{2} .
\end{aligned}
\end{equation}
Minimizing the right-hand side over the scalar $\|\mathbf{b}\|_{2}$ yields
\begin{equation}
\mathbf{a}^{H} {\mathbf{E}_{i_k}} \mathbf{a} \geq \frac{{\sigma_{i_k}^{2}}}{P_{\max} \kappa+{\sigma_{i_k}^{2}}} .
\end{equation}
Hence, for any $\mathbf{a}$ where $\|\mathbf{a}\|_{2}=1$,
\begin{equation}
\sigma_{\min }\left({\mathbf{E}_{i_k}}\right)=\min _{\|\mathbf{a}\|=1} \mathbf{a}^{H} {\mathbf{E}_{i_k}} \mathbf{a} \geq \frac{{\sigma_{i_k}^{2}}}{P_{\max} \kappa+{\sigma_{i_k}^{2}}},
\end{equation}
and consequently,
\begin{equation}
\lambda_{\min}\left({\mathbf{E}_{i_k}}\right) = \sigma_{\min}\left({\mathbf{E}_{i_k}}\right) \geq \frac{{\sigma_{i_k}^{2}}}{P_{\max} \kappa + {\sigma_{i_k}^{2}}}.
\end{equation}

\section{Proof of Lemma~\ref{lemma:limit_point}}
\label{append:proof_limit_point}
To prove the statement, we demonstrate that the iterates $\{(\mathbf{U}^t, \mathbf{W}^t, \mathbf{V}^t)\}_t$ generated by the A-MMMSE algorithm are confined to a compact set. 
The constraint in problem~\eqref{prob:wmmse_prob} ensures that $\mathbf{V}$ remains within a compact set due to projection. 
{From equation~\eqref{eq:Ek} and~\eqref{eq:Uk}}, we observe that
\begin{equation}
\begin{aligned}
\|{\mathbf{U}_{i_k}}\|_F^2 &=  \text{Tr}\left({\mathbf{U}_{i_k}^{H}\mathbf{U}_{i_k}}\right)\leq \frac{\text{Tr}({\mathbf{E}_{i_k}})}{{\sigma_{i_k}^2}}\leq \frac{d}{{\sigma_{i_k}^2}}.
\end{aligned}
\end{equation}
{Since $\mathbf{W}_{i_k} = \mathbf{E}_{i_k}^{-1}$}, Lemma~\ref{lemma:lambda_ek} implies that {$\lambda_{\max}(\mathbf{W}_{i_k}) \leq \frac{P_{\max} \kappa + \sigma_{i_k}^2}{\sigma_{i_k}^2}$} for all $i_k$, and therefore
{
\begin{equation}
\left\|\mathbf{W}_{i_k}\right\|_{F}^{2} = \sum_{i=1}^{d} \lambda_i^2(\mathbf{W}_{i_k}) \leq d\frac{\left(P_{\max} \kappa + \sigma_{i_k}^{2}\right)^{2}}{\sigma_{i_k}^{4}}, \quad \forall i_k \in \mathcal{I}.
\end{equation}}Thus, the sequence $\{(\mathbf{U}^t, \mathbf{W}^t, \mathbf{V}^t)\}_t$ is confined to a compact set. 
By the {Bolzano–Weierstrass theorem~\cite{dym2004principles}} ({which states that every bounded sequence has a convergent subsequence}), there exists a convergent subsequence $\{(\mathbf{U}, \mathbf{W}, \mathbf{V})^{t_j}\}_{t_j}$ that approaches a limit point.

\section{Proof of Lemma~\ref{lemma:L_V}}
\label{append:proof_lemma_L_V}
Define \(f(\mathbf{V}) \triangleq f(\mathbf{U}, \mathbf{W}, \mathbf{V})\). 
We have
\begin{equation}
\left\|\nabla_{v} f\left({\dot{\mathbf{V}}_{i_k}}\right)-\nabla_{v} f\left({\ddot{\mathbf{V}}_{i_k}}\right)\right\|_{F} \leq\left\|{\mathbf{F}_{i_k}}\right\|_{2}\left\|{\dot{\mathbf{V}}_{i_k}}-{\ddot{\mathbf{V}}_{i_k}}\right\|_{F},
\end{equation}
where
{
\begin{equation}
    \mathbf{F}_{i_k} = \sum_{(l,j)} 2\alpha_{l_j}\mathbf{H}_{l_jk}^H\mathbf{U}_{l_j}\mathbf{W}_{l_j}\mathbf{U}_{l_j}^{H}\mathbf{H}_{l_jk}.
\end{equation}}
From the inequality
{
\begin{equation}
\lambda_{\max }\left(\mathbf{W}_{l_j} \mathbf{U}_{l_j}^{H} \mathbf{U}_{l_j}\right) \leq \frac{\lambda_{\max }\left(\mathbf{W}_{l_j} \mathbf{E}_{l_j}\right)}{\sigma_{l_j}^{2}} = \frac{1}{\sigma_{l_j}^{2}}, \quad \forall l_j\in \mathcal{I},
\end{equation}}
it follows that
\begin{equation}
\begin{aligned}
\|{\mathbf{F}_{i_k}}\|_2&\leq 2\bar{\alpha}{KI}\left\| {\mathbf{H}_{l_jk}^{H} \mathbf{U}_{l_j} \mathbf{W}_{l_j} \mathbf{U}_{l_j}^{H} \mathbf{H}_{l_jk}} \right\|_2 \\
& \leq 2\bar{\alpha}{KI} \left\| {\mathbf{H}_{l_jk}^{H}}\right\|_2\left\| {\mathbf{U}_{l_j} \mathbf{W}_{l_j} \mathbf{U}_{l_j}^{H}}\right\|_2\left\| {\mathbf{H}_{l_jk}} \right\|_2\\
&\leq \frac{2\bar{\alpha}{KI}\kappa}{{\sigma_{l_j}^2}}\leq \frac{2\bar{\alpha}{KI}\kappa}{\sigma^2},\quad \forall {i_k\in\mathcal{I}}.
\end{aligned}
\end{equation}

}

%

\bibliographystyle{IEEEtran}
\bibliography{ref}

\end{document}